
\documentclass[10pt,letterpaper]{article}
\usepackage[top=0.85in,left=2.75in,footskip=0.75in]{geometry}

\usepackage{changepage}

\usepackage[utf8]{inputenc}

\usepackage{textcomp,marvosym}

\usepackage{fixltx2e}

\usepackage{amsmath,amssymb}
\usepackage{amsthm}

\usepackage{cite}

\usepackage{nameref,hyperref}


\usepackage{microtype}
\DisableLigatures[f]{encoding = *, family = * }

\usepackage{rotating}


\raggedright
\setlength{\parindent}{0.5cm}
\textwidth 5.25in
\textheight 8.75in

\usepackage[aboveskip=1pt,labelfont=bf,labelsep=period,justification=raggedright,singlelinecheck=off]{caption}

\bibliographystyle{plos2015}

\makeatletter
\renewcommand{\@biblabel}[1]{\quad#1.}
\makeatother

\date{}

\usepackage{lastpage,fancyhdr,graphicx}
\usepackage{epstopdf}

\usepackage[nolist]{acronym}
\usepackage{tabularx}
\usepackage{multirow}
\usepackage{dsfont}
\usepackage{subfig}

\usepackage{lastpage,fancyhdr,graphicx}
\usepackage{epstopdf}

\usepackage[textwidth=5cm]{todonotes}
\usepackage[normalem]{ulem} 

\pagestyle{myheadings}
\pagestyle{fancy}
\fancyhf{}
\lhead{\includegraphics[width=2.0in]{}}
\rfoot{\thepage/\pageref{LastPage}}

\fancyheadoffset[L]{2.25in}
\fancyfootoffset[L]{2.25in}
\lfoot{\sf PLOS}

\reversemarginpar
\setlength{\marginparwidth}{5cm}


\newcommand{\R}{\mathds{R}}
\newcommand{\N}{\mathds{N}}
\newcommand{\UF}{\mathcal{U}}
\newcommand{\HR}{{^*\R}}
\renewcommand{\vec}{\mathbf}
\newcommand{\set}[1]{\left\{#1\right\}}
\newcommand{\supp}{\text{supp}}
\newcommand{\eps}{\varepsilon}
\newcommand{\F}{\mathcal{F}}
\renewcommand{\S}{\mathcal{S}}
\newcommand{\abs}[1]{\left|#1\right|}
\newcommand{\norm}[1]{\left\|#1\right\|}

\theoremstyle{plain}
\newtheorem{thm}{Theorem}[]{\bfseries}{\rmfamily}
\newtheorem{lem}[thm]{Lemma}{\bfseries}{\rmfamily}
{\bfseries}{\rmfamily}
\newtheorem{prop}[thm]{Proposition}{\bfseries}{\rmfamily}
{\bfseries}{\rmfamily}
\newtheorem{defn}[thm]{Definition}{\bfseries}{\rmfamily}
\newtheorem{exa}[thm]{Example}{\bfseries}{\rmfamily}
\newtheorem{rem}[thm]{Remark}{\itshape}{\rmfamily}

\usepackage[symbol]{footmisc}

\begin{document}
\vspace*{0.35in}

\begin{flushleft}
{\Large \textbf\newline{Defending Against Advanced Persistent Threats using
Game-Theory}\footnote[2]{This is a correction to the published version, available from \url{https://journals.plos.org/plosone/article?id=10.1371/journal.pone.0168675}. The correction has been submitted to the journal office in Feb. 2022, and is pending for publication, in which case this version will become replaced.}}
\newline
\\
Stefan Rass\textsuperscript{1,3,*}, %
Sandra K\"{o}nig\textsuperscript{2}, %
Stefan Schauer\textsuperscript{2}, %
\\
\bigskip
\bf{1} Universit\"{a}t Klagenfurt, Department of Artificial Intelligence and Cybersecurity,
Klagenfurt, Austria
\\
\bf{2} Austrian Institute of Technology, Safety \& Security Department,
Klagenfurt, Austria
\\
\bf{3} LIT Secure and Correct Systems Lab, Johannes Kepler University Linz, Austria
\\
\bigskip

%
%





* stefan.rass@aau.at

\end{flushleft}
\section*{Abstract}
Advanced persistent threats (APT) combine a variety of different attack forms
ranging from social engineering to technical exploits. The diversity and
usual stealthiness of APT turns them into a central problem of contemporary
practical system security, since information on attacks, the current system
status or the attacker's incentives is often vague, uncertain and in many
cases even unavailable. Game theory is a natural approach to model the
conflict between the attacker and the defender, and this work investigates a
generalized class of matrix games as a risk mitigation tool for an \ac{APT}
defense. Unlike standard game and decision theory, our model is tailored to
capture and handle the full uncertainty that is immanent to \acp{APT}, such
as disagreement among qualitative expert risk assessments, unknown
adversarial incentives and uncertainty about the current system state (in
terms of how deeply the attacker may have penetrated into the system's
protective shells already). Practically, game-theoretic \ac{APT} models can
be derived straightforwardly from topological vulnerability analysis,
together with risk assessments as they are done in common risk management
standards like the ISO 31000 family. Theoretically, these models come with
different properties than classical game theoretic models, whose technical
solution presented in this work may be of independent interest.

\section{Introduction}
The increasing heterogeneity, connectivity and openness of today's
information systems often lets cyber-attackers find ways into a system on a
considerably large lot of different paths. Today, security is commonly
support by semi-automated tools and techniques to detect and mitigate
vulnerabilities, for example using \ac{TVA}, but this progress is paired with
the parallel evolution and improvements to the related attacks. \acp{APT}
naturally respond to the increasing diversity of security precautions by
mounting attacks in a stealthy and equally diverse fashion, so as to remain
``under the radar'' for as long as is required until the target system has
been penetrated, infected and can be attacked as intended. Countermeasures
may then come too late to be effective any more, since the damage has already
been caused by the time when the attack is detected.

Mitigating \acp{APT} is in most cases not only a matter of technical
precautions, but also some sort of fight against an invisible opponent and
external influences on the system (coming from other connected systems but
primarily due to the \ac{APT} remaining hidden). Thus, any security measure
taken may or may not be effective on the current system state, depending on
how far the \ac{APT} has evolved already. The question of economics then
becomes particularly difficult and fuzzy, since the return on security
investments is almost impossible to quantify in light of many factors that
are outside the security officer's scope of influence.

\subsection{Related Work}
In the last decade, the number of \acp{APT} \cite{Tank11} increased rapidly
and numerous related security incidents were reported all over the world. One
major reason therefore is that \acp{APT} are not focusing on a single
vulnerability in a system (which could be detected and eliminated easily),
but are using a chain of vulnerabilities in different systems to reach
high-security areas within a company network. In this context, adversaries
often exploit the fact that most of the protection efforts go into perimeter
protection, so that moving inside the infrastructure is much easier and the
attacker has a good chance to go unnoticed once being inside. Overcoming the
perimeter protection by social engineering or malware (even unknowingly)
carried inside by legitimate persons (bring-your-own-device problem) are only
two ways to penetrate the perimeter security. Once the perimeter has been
overcome, insider attacks are considered as an even bigger threat
\cite{Cole14}. Extensive guidelines and recommendations exist to secure this
internal area \cite{Sans00}, e.g., the demilitarized zone (DMZ) but the
intensity of the surveillance is limited. Specialized tools for intrusion
detection or intrusion prevention require a large amount of administration
and human resources to monitor the output of these systems.

\acp{APT} are characterized by a combination of several different attack
methods (social engineering, technical hacks, malware, etc.) that is being
tailored to and optimized for the specific organization, its IT network
infrastructure and the existing security measures therein. Often, even yet
not officially reported weaknesses, known as zero-day vulnerabilities, of the
network infrastructure are in additional use. Especially the application of
social engineering in the beginning stages of an \ac{APT} lets the attacker
bypass many technical measures like intrusion detection and prevention
systems, so as to efficiently (and economically) get through the outer
protection (perimeter) of the IT network. A prominent \ac{APT} attack was the
application of the Stuxnet malware in 2008 \cite{FMC11,Karn11,Kush13}, which
was introduced into Iran's nuclear plants sabotaging the nuclear centrifuges.
In the following years, other \ac{APT} attacks, like Operation Aurora, Shady
Rat, Red October or MiniDuke \cite{Tank11,MILP14,FSSF15} have become public.
Additionally, the Mandiant Report \cite{Mand13} explicitly states how
\acp{APT} are used on a global scale for industrial espionage and that the
attackers are often closely connected to governmental organizations.

The detection of APT attacks has therefore become an object of extensive
research over the past years. As perimeter protection tools are occasionally
failing to prevent intrusions, anomaly detection methods have been inspected
to provide additional protection \cite{CBK09,GT09}. The main idea is to
detect the presence of an adversary inside an organization's network, based
on the adversary's actions when it moves from one spot to another, or tries
to access sensitive data (honeypots). Often, the detection rests on log file
analysis, with data collected from all over the network and applications
therein. Designated logging engines (e.g., syslog
(http://tools.ietf.org/html/rfc5424) or logging management solutions (e.g.,
Graylog (http://graylog2.org/) are usually in charge here. Nevertheless, the
detection of exceptional events in these log files alone is insufficient,
since anomalies are often exposed not before events are correlated with each
other \cite{KTK02,HS09}. Since today's systems are heavily connected and
interchange a large amount of data on a regular basis, the size of the
logging information increases drastically, making an evaluation quite
difficult.

An example for a tool realizing this approach is AECID (Automatic Event
Correlation for Incident Detection) \cite{SF13,SFF14}. AECID enforces
white-lists and monitors system events, their occurrences as well as the
interdependencies between different systems. In the course of this, the
system is able to get an overview on the ``normal'' behavior of the
infrastructure. If some systems start to act differently from this normal
behavior, an attack is suspected and an alert is raised.

Whereas AECID (and similar tools) are detective measures (as they trigger
alerts based on specific events that have happened already), our approach in
the following is \emph{preventive} in the sense of estimating and minimizing
the risk of a successful \ac{APT} from the beginning (cf.
Section \ref{sec:our-contribution}). Game theory is here applied to optimize the
defense against a stealthy invader, who attempts to sneak into the system on
a set of known paths, while the defender does its best to guard all these
ways simultaneously. This is the abstract version of the situation that is
normally summarized under the term \ac{APT}.

Game theory appears as a natural tool to analyze conflicts of interest, such
as obviously arise between the defender and the attacker mounting an
\ac{APT}. Powerful techniques to defend against stealthy takeover have been
defined (partially originating from \cite{Dijk2013,Zhang2015} but also based
on a variety of precursor and independent approaches, such as collected in
\cite{Alpcan2010}), but a method that fits into established risk management
processes and can be instantiated with vague, fuzzy and qualitative risk
assessments (such as uttered by domain experts) is demanding yet missing.
Particularly intricate are matters of social risk response, say, if an
enterprise seeks to minimize losses of reputation besides direct costs;
assessing the public community's response to certain actions being taken is a
vague and difficult issue, to which sophisticated game theoretic
\cite{Xia&Meng&Wang2015,Meng&Xia2015,Chen&Wang2015,Chen&Wang2016} and agent
based models \cite{Busby&Onggo&Liu2016} can be applied for an analysis and
risk quantification. A recognized feature of any game-theoretic treatment of
\ac{APT} and in general every cyber-security scenario is the lack and
asymmetry of information (say, the absence of knowledge about the attacker's
strategy spaces or payoffs, cf. \cite{Nguyen2009,Pavlovic2011}, while the
attacker may have full information about the target system). This asymmetry
is even stronger than what can be captured by many game-theoretic models,
since organizational constraints may enforce the defender to act only at
certain points in time, while the attacker is free to become active at any
time. That is, the game is \emph{discrete time} for one player, but
\emph{continuous time} for the other player -- a setting that is hardly
considered in game-theoretic literature related to security, and as such a
central novelty in this work.

As we will show later (cf. Section \ref{sec:continuous-time-actions} and
Lemma \ref{lem:zero-sum-optimality}), matrix games are nonetheless a proper
model to account for what the \emph{defender} can do against an \ac{APT}, if
we confine ourselves with the goal of playing the game to the best of our own
protection and allow the outcomes to be random and unpredictable. Under this
relaxation over the conventional game theoretic modelling, we can account for
the outcome to be dependent on an action that is taken at different points in
time, and especially also for actions that were interrupted before they could
carry to completion. This addresses the issue identified by
\cite{Hamilton2002}, who pointed out that moves may take a variable amount of
time rather than being instantaneous (and thus atomic).

Ultimately, a significant obstacle for practitioners in the application of
any game theoretic model is the lack of understanding of the ingredients to
the game. That is, no matter how sophisticated the model may be, it
nevertheless needs to be instantiated with whatever data is available. In
many cases, this data is either qualitative (fuzzy) expert knowledge
(formulated in some taxonomy, e.g., \cite{Innerhofer-Oberperfler2009}) or
obtained from simulation (see \cite{Wellman2014} for one example). Either may
not be suitable to instantiate the proper \ac{APT} model, even though the
\ac{APT}-game model would be quite sophisticated and powerful (such as
\cite{Zhu2013}) in its capabilities for risk mitigation. In any case, this
takes us to \emph{empirical game-theoretic models}; a category into which
this work falls.

\subsection{Our Contribution} \label{sec:our-contribution}
We present a novel form of capturing payoff uncertainty in game theoretic
models. We deviate from standard games in the conceptual way of measuring the
outcome of a gameplay not in crisp terms, but by an entire probability
distribution object. That is, we play game theory on the abstract space of
distributions for the following reasons:
\begin{enumerate}
  \item The specification of losses and payoffs in a game is often
      difficult: how would we accurately quantify the results of a defense
      in light of an attack? Do we count the number of infected machines
      (such as done in \cite{Dijk2013})? Shall we work with monetary loss
      (causing difficulties in how to ``price-tag'' loss of reputation or
      consumer's trust)? Conversely, can we play games over a categorical
      scale of payoffs,
  such as risk is being quantified in many standards like ISO 31000 \cite{ISO31000} or
  similar?

  We can elegantly avoid any such issue by letting the game being defined
  by any outcome that can be ordered (as in conventional game theory), but
  in addition, allowing an action to have many different random outcomes
  (this is usually not possible in standard games). In doing so, we gain a
  considerable flexibility and degree of freedom to tackle a variety of
  issues, which we will discuss later on.
  \item There is a strong asymmetry in the player's information in many
      senses: first, the game structure itself is not common knowledge,
      since the defender knows only little about the opponent, while the
      opponent knows very much about the defender's infrastructure (as lies
      in the nature of \acp{APT}, since these typically include an a-priori
      phase of investigation and espionage). Second, the game play is
      different for both players, since moves are not mutually observable,
      nor must happen instantaneously or even at the same times.

      Again, this can be captured by letting the effects of action be
  nondeterministic and random even if both, the attacker's and defender's
  action were both known.
  \item Any game-theoretic model for security may itself be only part of an
      outer risk management process, and as such must be ``compatible''
      with the surrounding workflows, which cover \ac{APT} mitigation among
      other aspects. That is, the game theoretic model's input and output
      must be useful with what the risk management process can deliver and
      requires. Our \ac{APT}-games will be designed to fulfil this need.

  \item Conventional stochastic models like Bayesian games indeed also
      capture uncertainty, but do so by letting the modeler describe a
      variety of different possible game structures, among which nature
      chooses at random in the actual gameplay. While different such
      structures can embody different outcomes, and the likelihood for
      these can be specified as a distribution (similar to what we do),
      each of these possible game structures must be specified in the
      classical way, thus effectively ``multiplying'' the problems of
      practitioners (if one game is difficult to specify, the specification
      of several ones does not appear to ease matters). Our approach avoids
      these issues by working with empirical data directly, and keeping the
      game models simple at the same time.
\end{enumerate}

In light of the last point in particular, we will restrict our attention in
the following to the problem of how to define games over qualitatively
assessed outcomes that may be random. That is, the central question that this
work discusses is essentially a form of reasoning under uncertainty:

\begin{quote}
\textbf{Given some possibilities to act, what would be the best choice if
the consequences of an action are intrinsically random?}
\end{quote}

We will show how to answer this question if the randomness can be modeled in
the most general form by specifying probability distributions for the
outcome. However, unlike normal optimization that maximizes some numeric
quantity derived from the distribution of a random variable $X$, our games
will optimize the \emph{shape} of $X$'s distribution itself.

The presentation will heavily use examples for illustration, yet the concepts
themselves will be described and also defined in a general form. To get
started, consider the following example of decision making under the setting
that we consider. Example \ref{exa:ipr-responsibilities} is about the
protection of \ac{IPR}, whose theft can be a reason to mount an \ac{APT}.

\begin{exa}[Assigning \ac{IPR} Responsibilities]\label{exa:ipr-responsibilities}
Assume that an enterprise runs a project and is worried about protection of
\ac{IPR}. To mitigate this issue, one or more persons shall be put in charge
of \ac{IPR} protection. For this, say, three options are available:
\begin{enumerate}
  \item \emph{Assign \ac{IPR} responsibility to one person}: This will
      increase the workload of the employee, and must be made w.r.t.
      available resources and skills. Neither is precisely quantifiable nor
      may be sufficient at all times. Thus, even assuming a strong
      commitment of the person to its role, some residual risk of damage
      occurring remains (human error of subordinates cannot be ultimately
      ruled out despite any strong supervision).
  \item \emph{Assign \ac{IPR} responsibility to a team of two or three
      persons}: Resources and skills may be much richer in this setting,
      but there is a danger of mutual reliance on one another, such that in
      the worst case, no-one really does the job (as a result of social
      coordination failure). Chances for this worst case to occur may be
      even higher than for option 1.
  \item \emph{Do security/\ac{IPR} training sessions}: here, we would
      completely rely the joint behavior of the employee's and their
      commitment to the confidentiality of project content and adherence to
      the training's messages. Nevertheless, chances for \ac{IPR} loss due
      to human error (e.g., an unencrypted email leaking confidential
      information, or similar) may be lowered only temporarily, so that the
      training would have to be repeated from time to time.
\end{enumerate}
\end{exa}

The optimal choice in Example \ref{exa:ipr-responsibilities} is not obvious,
since consequences are not all entirely guaranteed for always foreseeable.
Intuitively, we would go with the setting under which loss of intellectual
property is least likely. Later, in Section \ref{sec:uncertainty-modeling},
we will construct an ordering relation $\preceq$ that does exactly this if
the loss distributions are defined on a scale of damage whose maximum is the
loss of intellectual property (e.g., quantified by the business value
attached to it). So, if we are somehow able to model possible random outcomes
in each of the three scenarios, we can (algorithmically) compute the ``best''
(i.e., $\preceq$-minimal) loss distribution to be the best choice among the
three above. This is a matter of loss distribution specification, which we
will discuss in Section \ref{sec:loss-specifications}.

Finally, we remark that all of the theory sketched here has been implemented
in \texttt{R} (including full fledged support of multi-criteria game theory
based on distributions as discussed in section \ref{sec:tradeoffs}), to
validate the method and to compute the results for the examples (such as in
Section \ref{sec:concrete-example}) shown here.

\subsection{Organization of the paper}
Section \ref{sec:preliminaries} briefly introduces the tools and concepts
that our models are built upon. We will strongly rely on \ac{TVA} (see
Section \ref{sec:tva}) and human expertise in our model building, which we
believe to be a viable approximation of how security risk management works in
practice. Section \ref{sec:running-example} presents an example, which we
will carry through the article to illustrate the concepts and approach as a
whole. Section \ref{sec:uncertainty-modeling} introduces the theory of how
decisions can be made if their outcome is rated through an entire probability
distribution object (rather than a number), and Section
\ref{sec:practical-decision-making} takes this basis to define games,
equilibria and to highlight similarities but also an important qualitative
difference between the so-generalized games and their classical counterparts (among others, \cite{burgin_remarks_2021} 
showed that fictitious play can converge to a point that is not necessarily an equilibrium, which led to the introduction of a lexicographic Nash equilibrium as a new concept in \cite{rass_game_2021}; we will go into more details later in Section \ref{sec:practical-decision-making}).
Sections \ref{sec:apt-games} and \ref{sec:practicalities} apply the framework
to \acp{APT} by picking up the example from Section
\ref{sec:running-example}, and give algorithmic details on how to practically
work out results (the aforementioned differences between our and classical
games call for various mathematical tricks here). Section
\ref{sec:generalizations} briefly discusses generalizations towards
multi-criteria decision making. Section \ref{sec:concrete-example} finishes
the example from Section \ref{sec:running-example} by presenting results and
security protection advices obtained from our game-theoretic \ac{APT}
mitigation game. Section \ref{sec:discussion} presents a critical discussion
in terms of answering direct questions that were collected from practical
experience with the proposed method in a research project (see the
acknowledgment section at the end of the paper). Conclusions are drawn in
Section \ref{sec:conclusions}.


\section{Preliminaries and Notation}\label{sec:preliminaries}
Vectors and matrices will be denoted as bold-face letters in lower case for
vectors and upper case for matrices. $n\times m$-Matrices over a set $M$ are
denoted as $\vec A\in M^{n\times m}$, and the symbol
$(a_n)_{n\in\N}=(a_n)_{n=1}^\infty$ is a shorthand for sequences. We use
upper case normal font letters like $X$ to denote \acp{RV}, and write $X\sim
F$ to express that the \ac{RV} $X$ has the distribution function $F$. The
respective density belonging to $F$ is the corresponding lower case letter
$f$, and where necessary, we add the subscript $F_X$ or $f_X$ to indicate the
related \ac{RV} for the density or distribution. For a given (finite) set
$M$, we let $\S(M)$ be the set of all discrete probability distributions (the
simplex) over $M$. Likewise, all families of sets, \acp{RV} or distribution
functions are denoted in calligraphic letters (such as $\UF, \S$ or $\F$).
Estimates of a value are indicated by a hat, such as $\hat F, \hat f$, to
mean empirical distributions (normalized histograms). Approximations of an
object $x$ or $F$ (scalar, distribution, etc.) are marked by a tilde, e.g.,
$\tilde x, \tilde F$.

\subsection{Topological Vulnerability Analysis}\label{sec:tva}
Topological vulnerability analysis \cite{Jajodia2005} is the systematic
identification of attacks to a system, based on the system's structure and
especially its network topology. The process usually consists of creating a
complete picture of the infrastructure augmented by all available details
about the components. Modeling the system's topology as an (undirected) graph
$G(V,E)$ with a designated target node $v_0\in V$, we can use standard path
searching algorithms to identify paths from the exterior of $G$ towards the
target node $v_0$. Whatever structure is dug up by the TVA, an immediate
question concerns the applicability of known attack patterns to the
infrastructure model (similar to virus patterns being looked up in software).
Graph matching techniques (see
\cite{EmmertStreib&Dehmer&Shi2016,Dehmer&EmmertStreib&Shi2014} for example)
appear as an interesting tool to apply here. The known (or suspected)
vulnerabilities/exploits related to the nodes in $V$ then determines which
paths are theoretically open to $v_0$ to successfully attack the system.
These \emph{attack paths} are thus sequences of vulnerabilities (augmented
with the respective preconditions to exploit a vulnerability), and are the
main output of a \ac{TVA}. An \ac{APT} can then (in a slightly simplified
perspective) be viewed as the entirety of attack paths, and is physically
mounted by sequentially working along a chosen attack path in a way that
avoids detection of the attack at all stages (stealthy). Particular practical
risk arises from exploits of not yet known vulnerabilities, which are
commonly called \emph{zero-day exploits}. Uncertainty about these partially
roots in the complexity of the network, so that graph entropy measures (see,
e.g., \cite{Cao&Dehmer&Shi2014}) may be considered as a measure to help
quantifying the chances of attacks coming over paths that were missed during
the analysis. The practical handling of this residual risk is often a matter
of using domain knowledge, collecting expert opinions, experience and
information mining, combined with suitable mathematical models (e.g.,
\cite{Moore2010,Wang2014}). Our model immanently includes a zero-day
vulnerability measure, as will be discussed in Section \ref{sec:zero-day}.

\subsection{Attack Graphs and Attack Trees}
The entirety of ways into a system, including intersections and alternative
routes on attack paths makes up the \emph{attack graph} \cite{Noel2010}. It
is essentially a representation based on the system topology $G$, in which
outgoing links of a node $v$ are retained only if some exploit on $v$ enables
reaching $v$'s neighbor (see Figs \ref{fig:apt-infrastructure} and
\ref{fig:apt-example-tva} for an example). In terms of representation, an
attack graph is to be distinguished from an attack tree, which is usually an
AND/OR tree representing the possible exploit chains in a different way.
Regardless of which is available, the main object of interest for our
purposes is the set of attack paths, which directly corresponds to the action
set of player 2 in our \ac{APT}-games.

\subsection{Extensive Form Games}
Towards a game theoretic model of \ac{APT}, we will use \acp{EFG}. While a
full fledged formal definition of \ac{EFG} is lengthy and complex, it will
suffice to give a description of it to highlight the similarities to
\acp{APT}. Formally, \ac{EFG} are described by a tree $T(V_T,E_T)$, with a
designated root node that represents the starting stage in the game.
Consequently, $T$ has edges directed outwards from the root. For an \ac{APT},
the root corresponds to the (hypothetical) point representing the exterior of
the network graph $G$. The \ac{EFG} is played between a set of (for our
purposes two) players, including a hypothetical player ``chance'' that
represents random moves in the game. Each node $v\in V_T$ in the game tree
$T(V_T,E_T)$ carries an information on which player is currently at move
(including the ``chance'' player). Furthermore, moves that are
indistinguishable by other players are collected in a player's
\emph{information set}. This, from the opponent's perspective, represents the
uncertainty about what a player has currently done in the game. In an
\ac{APT} model, the information set would correspond to possible locations
where the attacker could currently be (again, recalling that an \ac{APT} is
stealthy). The \ac{EFG} description is completed by assigning a vector of
outcomes to the leaf nodes in the game tree. Normally, the outcomes are real
values and specified for all players. Viewing an \ac{APT} as an \ac{EFG}, we
would thus require to specify our own damage when the \ac{APT} has been
carried to the end (i.e., the target node $v_0$ has been reached), but also
the payoff to the adversary would needed to be known. The latter is a
practical issue, since the uncertainty in the game is not only due to the
attacker's moves themselves, but also caused by external influences outside
any of the player's influences. Shifting all this uncertainty induced
exteriorly to the chance-node in the \ac{EFG} description appears infeasible,
since much of it may depend on the particular action and current (even past)
moves of both players (defender being player 1 and the attacker being player
2 in an \ac{APT} game). However, given that the two players have different
information on the current stage of the game play (only the attacker knows
its precise position, the defender knows nothing; not even the presence of
the attacker is assured), defining the information sets appears hardly
doable.

Since the concept of \ac{EFG} as being games with imperfect information
essentially rests on information sets, any best behavior in the game play
will inevitably rest on hypotheses on the player's moves. For an \ac{EFG}, we
would describe these hypotheses as probability distributions on the
information sets. In lack of these, we can only model the outcome based on
the known defender's actions and assuming the attacker to be possibly
everywhere in the system. Practically, these hypotheses will rarely be
available as hard figures and mostly come in qualitative terms like ``low'',
``medium'' or ``high'' risk. Classical game theory is not naturally designed
to work in such fuzzy terms.

Finally, an assumption of frequent criticism concerns the game model to be
\emph{common knowledge} to both players. This is certainly questionable in
\ac{APT} scenarios. In fact, a defender may in practice only have limited and
widely uncertain information about the attacker's incentives, current moves,
current location or even its presence in the game. Thus, the information set
would in the worst case cover the entirety of the game graph, and neither is
the payoff to the attacker precisely quantifiable in most cases.

To avoid all these issues, we propose to replace the attacker's payoffs by
our own losses (in an implicit assumption of a zero-sum competition), in
which an equilibrium behavior is a provable bound (see Lemma
\ref{lem:zero-sum-optimality}) to the payoff for the player having modeled
the game . Second, we avoid difficulties of uncertain payoffs by defining the
game play itself as a one-shot event, in which both players choose their
strategies for the round of the game and the payoff is determined by that
choice (thus, shifting all matters of uncertainty about where a player is in
the game entirely to the payoffs). To this end, we will model an \ac{APT} as
a game with complete information but uncertain payoffs. In fact, the payoffs
will be entire probability distributions rather than numbers.

\section{A Running Example}\label{sec:running-example}

\begin{figure}[t!]
\includegraphics[scale=0.7]{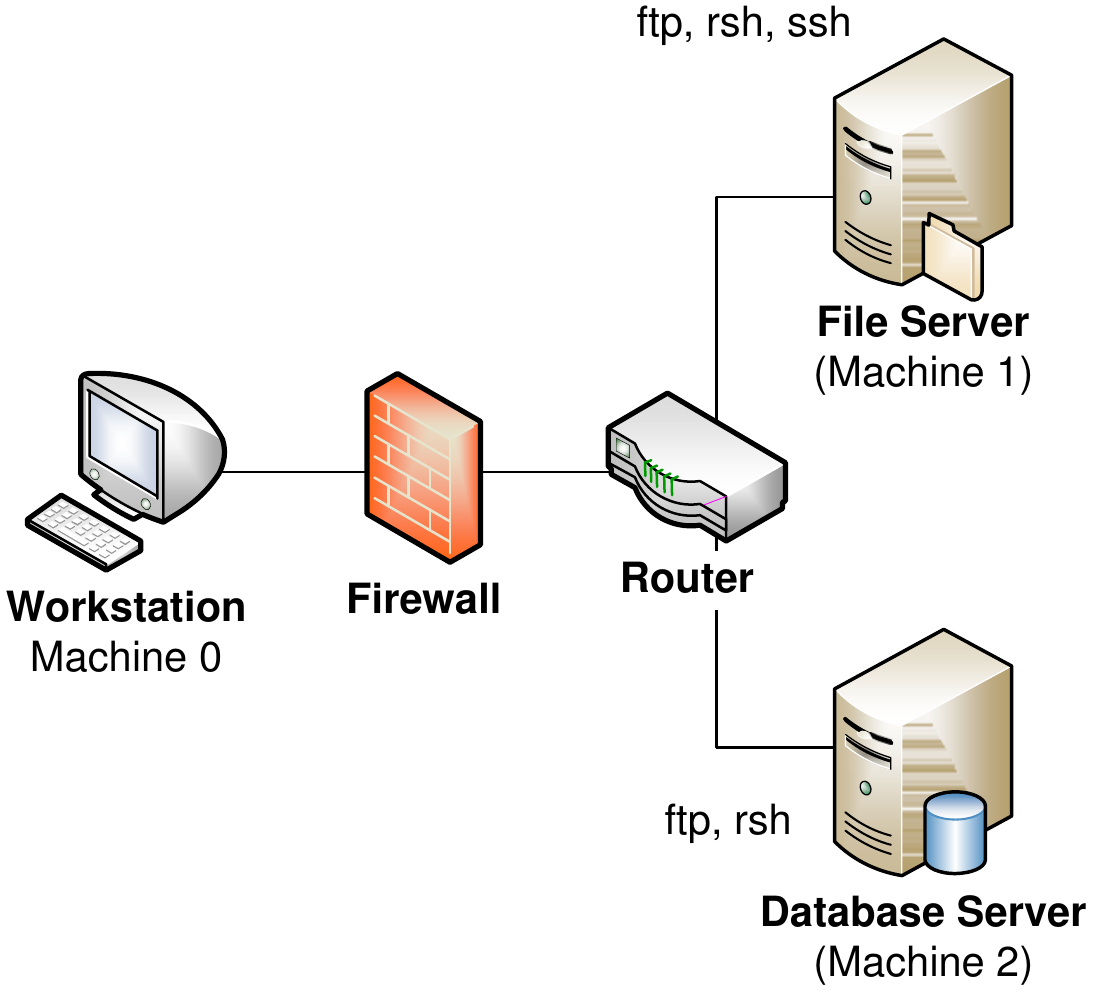}
\caption{\textbf{Infrastructure from \cite{Singhal2011} to illustrate game-theoretic \ac{APT} modeling}}\label{fig:apt-infrastructure}
\end{figure}

\begin{figure}[t!]
\includegraphics[scale=0.6]{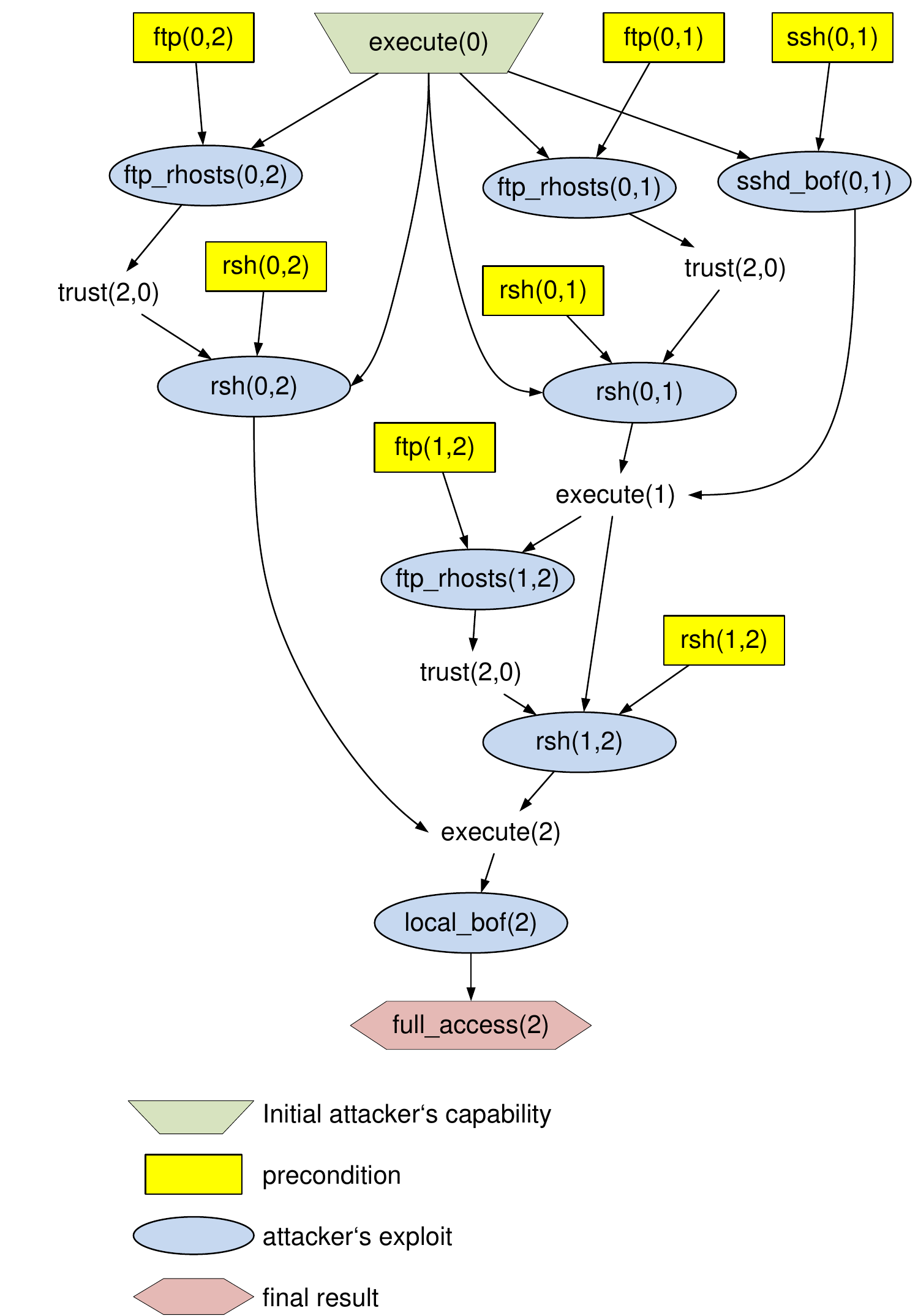}
\caption{\textbf{Example Attack Graph \cite{Singhal2011}}}\label{fig:apt-example-tva}
\end{figure}

Throughout this work, we will illustrate the steps and concepts using a
running example borrowed from \cite{Singhal2011}. This reference describes a
simple version of \ac{TVA} (see, e.g., \cite{Jajodia2005}) and attack graph
modeling, based on a small infrastructure that is shown in Fig
\ref{fig:apt-infrastructure}: The system consists of three machines (numbered
as 0, 1 and 2), with several services being open on each node (such as
\ac{FTP}, \ac{RSH} and \ac{SSH}). The adversary attempts to gain access to
machine 2, hereafter denoted as (the predicate) \texttt{full\_access(2)}.
Towards its goal, the attacker may run different exploits from various points
in the network, such as:
\begin{itemize}
  \item \ac{FTP}- or \ac{RSH}-connections from a node \texttt{x} to a
      remote host \texttt{y}, hereafter denoted as
      \texttt{ftp\_rhosts(x,y)}, and \texttt{rsh(x,y)}, respectively.
  \item a secure shell buffer overflow at node \texttt{y}, remotely
      initiated from node \texttt{x}, hereafter denoted as
      \texttt{sshd\_bof(x,y)}.
  \item local buffer overflows in node \texttt{x}, hereafter denoted as
      \texttt{local\_bof(x)}.
\end{itemize}
The actual \ac{APT} is the attempt to use these exploits (and combinations
thereof) in a stealthy fashion to penetrate the entire system towards
establishing full access to the target machine 2. Naturally, exploits of any
kind are subject to preconditions holding on the machine from which the
exploit is initiated. We denote such a precondition on machine \texttt{x} to
target a machine \texttt{y} in predicate notation as \texttt{ftp(x,y)},
\texttt{rsh(x,y)} and \texttt{ssh(x,y)}, w.r.t. the protocol being used.
Depending on which services are enabled and responsive on each machine, a
\ac{TVA} can then be used to compile an attack graph (see
\ref{fig:apt-example-tva}), which roots at the initial condition of having
execution privileges on machine 0, denoted as \texttt{execute(0)}, from which
attacks can be mounted under the relevant preconditions. A particular
\ac{APT} scenario can be viewed as a path in the graph that starts from the
root (\texttt{execute(0)}) via trust relations established between connected
machines \texttt{x} and \texttt{y} (denoted as \texttt{trust(x,y)}), until
the goal (\texttt{full\_access(2)}).

Running the plain task of computing and enumerating all paths in the attack
graph from \texttt{execute(0)} to \texttt{full\_access(2)} digs up 8 attack
vectors in our example. Each of these corresponds to one particular \ac{APT}
scenario, and the entirety of which makes up the adversary's \emph{action
set}, denoted as $AS_2$ (the subscript is used for consistency with the
subsequent game theoretic model, in which the attacker is player 2. The
defender will be player 1, respectively).

The next step in the risk mitigation process is the derivation of
countermeasures from the identified attacks, such as, for example, the
deactivation of services (to violate the necessary preconditions), or a
patching strategy (to remove buffer overflow vulnerabilities), to name only
two possibilities. Alas, none of these precautions is guaranteed to be
feasible or even to work, as for instance:
\begin{itemize}
  \item services may be vital to the system, say, deactivating an FTP
      connection may render the service offered by machine 1 useless.
  \item patches may work against a known buffer overflow, but an unknown
      number of similar exploits may nonetheless remain (thus enabling zero
      day attacks upon vulnerabilities found and offered for sale on the
      black market).
\end{itemize}
On the positive side, even unknown malware may be classified as such based on
heuristics, experience or innovative antivirus technologies, all of which
adds to the chances for the identified mitigation strategies to succeed. The
practical issue here is, however, to deal with the residual risks and the
inevitable uncertainty in the effectiveness of a protection. Ways to capture
and handle these issues are theoretically described in Section
\ref{sec:uncertainty-modeling} and applied to this example in Section
\ref{sec:apt-games}.

For the time being, let us assume that a (non-exhaustive) selection of
countermeasures has been identified and listed in Table \ref{tbl:as1}. We
call this list the \emph{defender's action set}, denoted as $AS_1$ to
indicate the defender as being player 1 in the subsequent \ac{APT} game
(Section \ref{sec:apt-games}). We leave this set incomplete here for the only
sake of simplicity (in reality, the analysis would dig up a much richer set
of countermeasures, such as can be based on the security controls catalog of
relevant norms as ISO 27001 \cite{ISO27001} or related).

\begin{table}[h!]
  \caption{\textbf{Security controls (selection)}}\label{tbl:as1}
  \begin{tabular}{|p{0.21\textwidth}|p{0.7\textwidth}|}
    \hline
\textbf{Countermeasure} & \textbf{Comment}\tabularnewline
\hline
deactivation of services (FTP, RSH, SSH) & these may not be permanently disabled, but could be temporar- ily
turned off or be requested on demand (provided that either is feasible
in the organizational structure and its workflows)\tabularnewline
\hline
software patches & this may catch known vulnerabilities (but not necessarily all of them),
but can be done only if a patch is currently available\tabularnewline
\hline
reinstalling entire machines & this wipes out unknown malware but comes at the cost of a temporary
outage of a machine (thus, causing potential trouble with the overall
system services)\tabularnewline
\hline
organizational precautions & for example, repeated security trainings for the employees. These
may also have only a temporary effect, since the security awareness
is raised during the training, but the effect decays over time, which
makes a repetition of the training necessary to have a permanent effect.\tabularnewline
\hline
  \end{tabular}
\end{table}

In general, the effect of an action, precaution, countermeasure, etc. is in
most cases not deterministic and influenced by external factors beyond the
defender's influence and not even fully determined by the attacker's actions.
Furthermore, actions on both sides are usually not for free, and costs/losses
on the defender's side are induced by system outages (say, during a
reinstall), staff unavailability (say, when people are in a training that itself
may be costly), etc. Some of these costs may be precisely calculated, but
others (say, if the system is offline during a reinstall) may depend on the
current workload and thus be difficult to quantify.

Therefore, a \emph{qualitative} risk assessment is often the only practical
option (and an explicit recommendation by various standards such as ISO 31000
and by the German Federal Office of Information Security (BSI)). For a game
theoretic analysis, however, this is inconvenient as it may result in a quite
vague assessment of a countermeasure that may look like shown in Table
\ref{tbl:security-precaution}.

\begin{table}[h!]
  \caption{\textbf{Example assessment of a security precaution}}\label{tbl:security-precaution}
  \begin{tabular}{|l|l|}
    \hline
    \multicolumn{2}{|l|}{\textbf{Countermeasure}: \emph{patching}}\\
    \textbf{Aspect} & \textbf{Expert's assessment} \\\hline
    applicability & not always available\\
    effectiveness & low or high (depending on the exploit) \\
    cost & low to medium (e.g., if the system needs to be rebooted) \\
    \hline
  \end{tabular}
\end{table}

Similar assessments can be made for other protective measures as well, with
quantitative figures occasionally being available (such as the costs for a
security training, or the cost to install a new firewall or intrusion
detection system). However, ambiguous and even inconsistent opinions may be
obtained on the effectiveness and applicability of a certain action. Even if
only one expert does the assessment in categorical terms as shown in Table
\ref{tbl:security-precaution}, uncertainty may at least arise from none of
the offered categories being appropriate for the real setting. That is, with
``medium effectiveness'' being a vaguely understood term in that context, an
expert may utter a range of possibilities rather than confining her/himself
to a specific statement. The example in Table \ref{tbl:security-precaution}
illustrates this by saying that the effectiveness of a patch can be either
high (if the patch closes precisely the buffer overflow that was intended by
the adversary), or even low, if the exploit has already been used to install
a backdoor, so that the buffer overflow -- even if it gets fixed -- is no
longer needed for the \ac{APT} to continue. What is even worse, both
assessments are at opposite ends of the scale (low/high), and can both be
justified, thus telling hardly anything informative in this case.

It is this point, where further opinions should be sought, which naturally
will create a number of different assessments, some of which may be even
mutually inconsistent (see Fig \ref{fig:expert-ratings} for an illustration
of how different opinions may accumulate at different points on the risk
scale).

All this hinders the application of conventional decision or game theory,
since in either approach (game or decision theoretic), we require a
reasonably measurable effect for an action, and also a way to uniquely rank
(order) different effects when a ``best'' action is sought.

\section{Modeling Uncertainty for Decision-Support}\label{sec:uncertainty-modeling}
If the outcomes of an action are uncertain, even random, then the most
powerful model to express these would be to:
\begin{itemize}
  \item collect as much data, expert opinions, etc. as is available,
  \item and compile a probability distribution from the available data, to
      capture the uncertainty in the assessments. Though this preserves all
      available information in the distribution object, the issue of
      \emph{working} with it is more involved and a central technical
      contribution in this work.
\end{itemize}
In the best case, the assessments turn out to be quite consistent, with only
a few outliers, in which case we may be able to define a reasonable
representative (say, the average assessment; see Fig
\ref{fig:expert-ratings}a). In other cases, however, the distribution may be
multimodal, with each peak corresponding to different answers that may all
have their own justification as being plausible (Fig
\ref{fig:expert-ratings}b shows an example of many experts agreeing on either
low or high effectiveness of the patching strategy).


\begin{figure}
  \centering
  \includegraphics[scale=1]{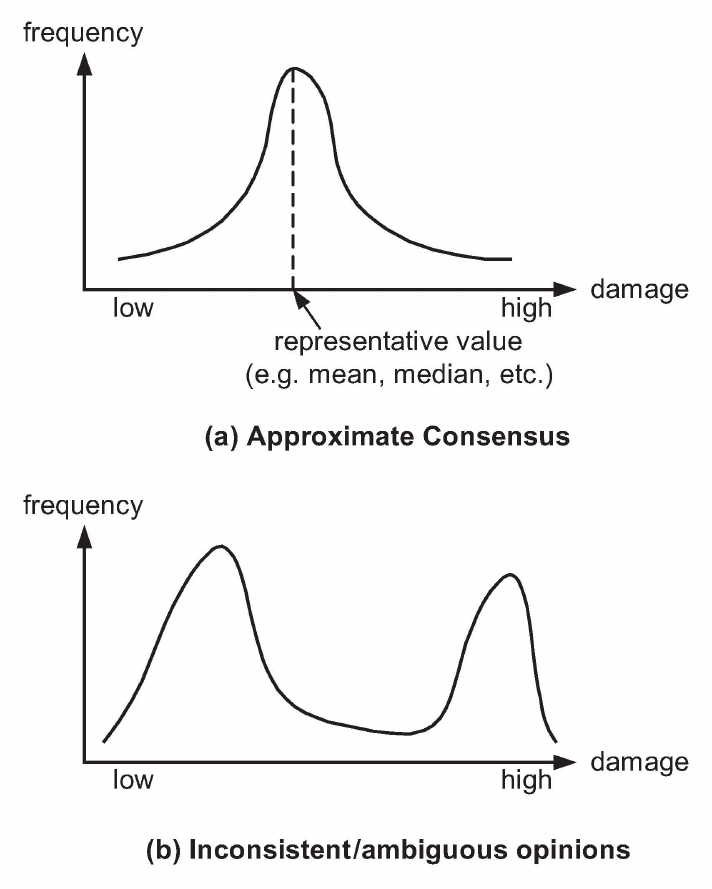}
  \caption{\textbf{Agreeing vs. disagreeing expert ratings}}\label{fig:expert-ratings}
\end{figure}

Finding a best action is typically done by assigning a utility value $u:
AS_1\times AS_2\to\R$ (see Section 2.2 in \cite{Robert2001}) to the actions
to choose from $AS_1, AS_2$, and looking for the maximal such utility for
both sides (defender and attacker). In quantitative risk management, a
popular choice for this utility value is the expected damage, computed as
\begin{equation}\label{eqn:risk}
    \text{risk} = \text{damage} \times \text{likelihood},
\end{equation}
which enjoys wide use throughout the literature (e.g., the ISO 31000
\cite{ISO31000} or ISO 27000 \cite{ISO27000} family of standards). This
convention is easily recognized as being the first moment of some (usually
not explicitly modeled) payoff distribution, and as such, is not satisfying
in practice, as the mean tells us nothing about possible variations about it
(Fig \ref{fig:different-preferences} illustrates the issue graphically). So,
the variance would be the next natural value to ask for in addition to
\eqref{eqn:risk}. Continuing this approach, we can describe a distribution
more and more accurately by using more and more moments, and indeed, the
mapping
\[
    \phi: F\mapsto \ell = (E(L^n))_{n\in\N}\in\R^{\infty}
\]
provides a bijective link between a distribution function $F$ and a
representative infinite sequence of real numbers, provided that all moments
exist. In a simple use of this representation, we could just
lexicographically compare the sequences, letting the first judgement be based
on the mean, and in case of equality, compare the variances, and so on. Such
an ordering, however, appears undesirable in light of easy to construct
examples that yield quite implausible preferences. Fig
\ref{fig:different-preferences} shows an example.

\begin{figure}
  \includegraphics[scale=1]{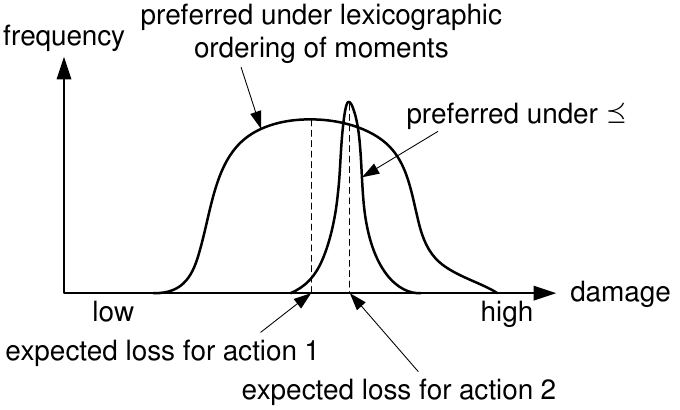}
  \caption{\textbf{Comparing Different Preference Rules}}\label{fig:different-preferences}
\end{figure}

An approach that preserves all information is treating the moment sequence as
a hyperreal number $\phi(F)\in\HR$, so that we get a ``natural ordering'' on
the distributions as it exists in the hyperreal space $(\HR,\leq)$; see
\cite{Robinson1966} for a full detailed treatment, which we leave here as
being out of the scope of this work.

Nevertheless, it is important to recognize that the trick of embedding a
distribution in the ordered field $(\HR,\leq)$ of hyperreals equips us with a
\emph{full-fledged arithmetic} applicable to random payoff distributions, as
well as a \emph{stochastic ordering}, so that ``optimality'' of decisions can
be defined soundly (later done in Definition \ref{def:preference}). This
implies that many well known and useful results from game and decision theory
remain applicable in our setting (almost) as they are. Essentially, this
saves us the labour of re-establishing a lot of theory, as would be necessary
if another stochastic order (such as one in \cite{Shaked2006}) would be used.

Definition \ref{def:loss} will suitably restrict the class of loss
distributions to ensure that all moments exist. Before that, however, let us
briefly recap where the loss distributions will come from:

Given the attack graph that describes all \ac{APT} scenarios and treating it
as an \ac{EFG} game description, we apply the same conversion of an \ac{EFG}
into the normal form of the game, which is a matrix. Let $n=\abs{AS_1},m =
\abs{AS_2}$ be the number of threat mitigation strategies and possible
exploits, which correspond to the action sets of both players (paths through
the infrastructure determined by the possible exploits; cf. Table
\ref{tbl:as2}). Whereas a classical game would be described as a real valued
payoff matrix $\vec A\in\R^{n\times m}$, the outcome in the \ac{APT} game is
not deterministic and as such will be described by a matrix of \acp{RV} $\vec
A=(L_{ij})_{i,j=1}^{n,m}$. Each variable $L_{ij}$ describes the random loss
(effect) of taking mitigation strategy $i$ relative to the unknown $j$-th
move of the adversary.

\begin{table}[h!]
\small
  \caption{\textbf{\ac{APT} scenarios (adversary's action set $AS_2$, based on Fig \ref{fig:apt-example-tva})}}\label{tbl:as2}
  \begin{tabularx}{\columnwidth}{|l|X|}
    \hline
    1 & \texttt{execute(0)} $\to$ \texttt{ftp\_rhosts(0,1)} $\to$ \texttt{rsh(0,1)} $\to$ \texttt{ftp\_rhosts(1,2)} $\to$ \texttt{rsh(1,2)} $\to$ \texttt{local\_bof(2)} $\to$ \texttt{full\_access(2)} \\\hline
    2 & \texttt{execute(0)} $\to$ \texttt{ftp\_rhosts(0,1)} $\to$ \texttt{rsh(0,1)} $\to$ \texttt{rsh(1,2)} $\to$ \texttt{local\_bof(2)} $\to$ \texttt{full\_access(2)}\\\hline
    3 & \texttt{execute(0)} $\to$ \texttt{ftp\_rhosts(0,2)} $\to$ \texttt{rsh(0,2)} $\to$ \texttt{local\_bof(2)} $\to$ \texttt{full\_access(2)} \\\hline
    4 & \texttt{execute(0)} $\to$ \texttt{rsh(0,1)} $\to$ \texttt{ftp\_rhosts(1,2)} $\to$ \texttt{rsh(1,2)} $\to$ \texttt{local\_bof(2)} $\to$ \texttt{full\_access(2)} \\\hline
    5 & \texttt{execute(0)} $\to$ \texttt{rsh(0,1)} $\to$ \texttt{rsh(1,2)} $\to$ \texttt{local\_bof(2)} $\to$ \texttt{full\_access(2)} \\\hline
    6 & \texttt{execute(0)} $\to$ \texttt{rsh(0,2)} $\to$ \texttt{local\_bof(2)} $\to$ \texttt{full\_access(2)} \\\hline
    7 & \texttt{execute(0)} $\to$ \texttt{sshd\_bof(0,1)} $\to$ \texttt{ftp\_rhosts(1,2)} $\to$ \texttt{rsh(1,2)} $\to$ \texttt{local\_bof(2)} $\to$ \texttt{full\_access(2)} \\\hline
    8 & \texttt{execute(0)} $\to$ \texttt{sshd\_bof(0,1)} $\to$ \texttt{rsh(1,2)} $\to$ \texttt{local\_bof(2)} $\to$ \texttt{full\_access(2)} \\
    \hline
  \end{tabularx}
\end{table}

\begin{defn}[(Random) Loss]\label{def:loss}
A real-valued \ac{RV} $L$ is called a \emph{(random) loss}, if the following
conditions are satisfied:
\begin{itemize}
  \item $L\geq 1$ (this can be assumed w.l.o.g.)
   \item The support of $L$, being $\supp(L):=\overline{\set{x\in\R:
       f_L(x)>0}}$, is bounded (where the bar means the topological
       closure).
   \item $L$ has a density $f_L$ w.r.t. either the counting- or the
       Lebesgue-measure. In the latter case, we assume the density $f_L$ to
       be continuous and piecewise polynomial over a finite partition of its support.
\end{itemize}
Define the \emph{set of loss distributions} $\F$ to contain all distribution
functions related to random losses.
\end{defn}
We stress that the conditions of Def. \ref{def:loss} only mildly constrain the set of choices, since (by Weierstra\ss' theorem), all smooth distributions allow a polynomial approximation in the desired sense, if they are compactly supported.

\subsection{Optimal Decisions if Consequences are
Uncertain}\label{sec:optimal-decisions}

Definition \ref{def:loss} assures that the density function $f_L$ of any
random loss $L$ admits moments $E(L^n)$ of all orders $n\in\N$, so that we
get a well-defined condition for an ordering based on moment sequences:

\begin{defn}[$\preceq$-Preference between Loss Distributions]\label{def:preference}
Let $L_1\sim F_1, L_2\sim F_2$ be losses
with $F_1, F_2\in\F$. We prefer $L_1$ over $L_2$, written as $L_1\preceq
L_2$, if there is an index $k_0\in\N$ so that $E(L_1^k)\leq E(L_2^k)$ for all
$k\geq k_0$. We synonymously write $F_1\preceq F_2$ whenever we explicitly
refer to distributions rather than \acp{RV}.
\end{defn}

A minor technical difficulty arises from the yet unsettled issue of whether
or not there are non-isomorphic instances of $(\HR,\leq)$, in which case we
could get ambiguities in the $\preceq$-ordering. The next result, however,
rules out this danger.

\begin{prop}[cf. \cite{Rass2015b,rass_game_2021}]
The set $(\F,\preceq)$ with $\F$ as in definition \ref{def:loss} and
$\preceq$ as in definition \ref{def:preference} is a totally ordered set,
where $F_1\preceq F_2$ implies $\phi(F_1)\leq\phi(F_2)$, with the embedding
$\phi:\F\to(\HR,\leq)$, and the $\preceq$-ordering on $\F$ is invariant
w.r.t. how $(\HR,\leq)$ is constructed.
\end{prop}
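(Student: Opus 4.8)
The plan is to reduce the whole proposition to one analytic fact about moment sequences of loss distributions, and then to deduce the three claims — total order, order-preservation of $\phi$, and construction-independence — from it essentially formally.

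\textbf{Step 1 — the key lemma.} First I would prove: for any two \emph{distinct} $F_1,F_2\in\F$, the sequence $m_n:=E(L_1^n)-E(L_2^n)$ is eventually of constant, strictly positive or strictly negative sign; and if $m_n=0$ for all large $n$ then $F_1=F_2$. The second half is a determinacy argument: if $m_n=0$ for $n\geq k_0$, the finite signed measure $\nu:=x^{k_0}(dF_1-dF_2)$ — legitimate because supports lie in $[1,\infty)$, so $x^{k_0}$ is a positive bounded weight — has vanishing moments of all orders, hence (compact support plus the Weierstra\ss\ approximation theorem) $\nu=0$, and since the weight is strictly positive on the support, $F_1=F_2$. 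For the first half I would set $\mu:=F_1-F_2$, a nonzero compactly supported signed measure carrying finitely many atoms and a piecewise-polynomial density over the common refinement of the two partitions, and let $b:=\max\supp(\mu)$ (attained, since the support is compact). Near $b$ exactly one of two regimes occurs: either $\mu(\{b\})=p\neq 0$, or the last non-vanishing polynomial piece of the density behaves like $c(b-x)^k$ as $x\to b^-$ with $c\neq 0$, $k\geq 0$ — the finitely-many-atoms, piecewise-polynomial structure from Definition \ref{def:loss} being precisely what forbids vanishing to infinite order at $b$. A Laplace-type estimate — writing $x^n=b^n e^{n\ln(x/b)}$ and noting the contribution from $[1,b']$ with $b'<b$ is $O((b')^n)$, hence exponentially smaller than $b^n n^{-\alpha}$ for every $\alpha$ — then yields $m_n\sim p\,b^n$ in the first regime and $m_n\sim c\,k!\,b^{n+k+1}n^{-(k+1)}$ in the second. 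In both cases the leading coefficient is nonzero, so $m_n$ eventually has the constant sign of $p$, resp.\ of $c$.

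\textbf{Step 2 — $\preceq$ is a total order.} Reflexivity and transitivity are immediate from Definition \ref{def:preference}, since ``$m_n\leq 0$ for all large $n$'' is closed under adding two such difference sequences. For antisymmetry, $F_1\preceq F_2$ and $F_2\preceq F_1$ give $m_n\leq 0$ and $m_n\geq 0$ for all large $n$, so $m_n=0$ eventually and $F_1=F_2$ by Step 1. For totality, if $F_1\neq F_2$ then Step 1 makes $m_n$ eventually of one strict sign, so exactly one of $F_1\prec F_2$, $F_2\prec F_1$ holds; together with reflexivity, any two elements of $\F$ are $\preceq$-comparable.

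\textbf{Step 3 — embedding and construction-independence.} A copy of $(\HR,\leq)$ is $\R^\N/\UF$ for a non-principal ultrafilter $\UF$, with $[(a_n)]\leq[(b_n)]$ iff $\{n:a_n\leq b_n\}\in\UF$, and $\phi(F)$ is the class of $(E(L^n))_{n\in\N}$. If $F_1\preceq F_2$ then $\{n:E(L_1^n)\leq E(L_2^n)\}$ contains a cofinite set, which lies in every non-principal ultrafilter, so $\phi(F_1)\leq\phi(F_2)$ — this is the displayed implication. Conversely, if $F_1\not\preceq F_2$ then $F_1\neq F_2$ and, by Step 1, $m_n>0$ for all large $n$, so $\{n:E(L_1^n)\leq E(L_2^n)\}$ is finite, hence outside $\UF$, and $\phi(F_1)\leq\phi(F_2)$ fails. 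Thus $F_1\preceq F_2\iff\phi(F_1)\leq\phi(F_2)$ for \emph{every} choice of $\UF$, which is exactly the asserted invariance of the $\preceq$-ordering under the construction of $(\HR,\leq)$.

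\textbf{Expected main obstacle.} Everything rests on the key lemma, and the delicate part there is the moment asymptotics: isolating the single dominant contribution at $b=\max\supp(\mu)$, treating the two regimes (an atom at the top versus a density vanishing to finite order there) uniformly, and — crucially — verifying that the leading coefficient can never vanish, which is exactly where the regularity hypotheses of Definition \ref{def:loss} enter. Once the lemma is available, Steps 2 and 3 are routine.
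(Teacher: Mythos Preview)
Your proof is correct, but note that the paper itself does not prove this proposition---it is stated with a citation to external work, so there is no in-text argument to compare against directly. What the paper \emph{does} offer is the informal heuristic of Section~\ref{sec:practical-decision-of-preferences}: the divergence observation \eqref{eqn:moment-divergence} and Lemma~\ref{lem:preference-decisions} for the continuous case, and the separate lexicographic reduction for categorical distributions. Your Step~1 is precisely a rigorous, unified version of those two pieces, recast as a Laplace-type analysis of the signed measure $dF_1-dF_2$ at the top of its support; this is the natural way to make the paper's sketch exact, and the moment asymptotics you state are right. Your Step~3 then supplies the ultrafilter argument that the paper omits entirely: eventual sign-constancy of $m_n$ forces the relevant index set to be cofinite or finite, so its $\UF$-membership is the same for every non-principal $\UF$, which is exactly the asserted construction-invariance.

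One minor caveat: Definition~\ref{def:loss} does not literally demand finitely many atoms in the counting-measure case, so your phrase ``finitely many atoms'' imports a tacit (and clearly intended) assumption---without it, atoms could accumulate at $b$ and the dominant-contribution estimate would need additional care. Also, in a mixed comparison (one $F_i$ discrete, the other continuous) an atom at $b$ and a nonzero density piece can coexist; but your own bounds already show $p\,b^n$ dominates the $O(b^{n+k+1}/n^{k+1})$ integral contribution, so the dichotomy still resolves correctly with the atom taking priority.
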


While the theoretical definition is easy, important practical questions about
this preference demand an answer, in particular:
\begin{enumerate}
  \item What is the practical meaning of the $\preceq$-ordering for risk
      management?
  \item If $\preceq$ is practically meaningful, how can we (efficiently)
      decide it?
\end{enumerate}

Let us postpone the answer to the first question until Section
\ref{sec:practical-meaning-of-preference}, and come to the algorithmic
matters of deciding $\preceq$ first. The answer to the second question will
then also deliver the answer to the first one.

\subsection{Practical Decision of $\preceq$-Preferences}\label{sec:practical-decision-of-preferences}
Let us first discuss the case where the loss distribution is continuous.
Common examples in risk management (cf. \cite{Embrechts2003}) are extreme
value distribution or stable distribution (with fat tails). Although such
distributions may not necessarily have a bounded support (thus not
corresponding to a random loss in the sense of definition \ref{def:loss}), we
can approximate the distributions by random losses via defining a \emph{risk
acceptance threshold} $1<a\in\R$, and truncating the distribution outside the
range $[1,a]$. The concrete value $a$ can be chosen upon a desired accuracy
$\eps>0$, for which we can choose $a$ large enough to have the residual
likelihood of damage $>a$ is smaller than $\eps$, or formally,
$\Pr(L>a)<\eps$ (we will come back to the choice of $a$ in section
\ref{sec:faqs}).

Practically, the risk acceptance threshold is the value above which risks are
simply ``being taken'' or are covered by proper insurance contracts. Thus,
specifying the value $a$ and truncating the loss distributions accordingly
makes distributions with fat and/or unbounded tails fit as approximate
versions into definition \ref{def:loss}.

If $L_1,L_2$ have the same compact support $[1,a]\subset\R$, and since the
respective density functions  $f_{L_1}, f_{L_2}$ are assumed continuous, both
admit limits $b_1=\lim_{x\to a}f_{L_1}(x)$ and $b_2=\lim_{x\to a}f_{L_2}(x)$.
For the moment, assume $b_1\neq b_2$, i.e., $f_{L_1}(a)\neq f_{L_2}(a)$ (the
case of equality is treated later). The continuity of both functions implies
that $f_{L_1}(x)\neq f_{L_2}(x)$ holds in an entire left neighborhood
$(a-\eps,a]$ of $a$ for some $\eps>0$. It is then a simple matter of calculus
to verify that (since both $L_1,L_2\geq 1$), the speed of divergence of the
respective moment sequences $(E(L_1^n))_{n\in\N}$ and $(E(L_2^n))_{n\in\N}$
is determined by which density function takes larger values in the region
$(a-\eps,a]$, recalling that both densities vanish at $x>a$. That is, we have
\begin{equation}\label{eqn:moment-divergence}
\lim_{n\to\infty}[E(L_1^n)-E(L_2^n)]\in\set{-\infty,+\infty},
\end{equation}
and the condition of Definition \ref{def:preference} is ultimately satisfied
(in either way).

\begin{lem}\label{lem:preference-decisions}
Let $L_1,L_2$ be two random loss variables with continuous distribution
functions $F_1,F_2\in\F$, and let $f_1, f_2$ denote the respective densities.
If both \acp{RV} are supported in an interval $[1,a]$ for $a\in\R$, and there
is some $\eps>0$ such that $f_1(x)<f_2(x)$ for all $x\in(a-\eps,a]$, then
$L_1\preceq L_2$.
\end{lem}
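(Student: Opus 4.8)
The goal is to show that when $f_1 < f_2$ on a left-neighborhood $(a-\eps, a]$ of the right endpoint of the common support, the moment sequences eventually satisfy $E(L_1^k) \le E(L_2^k)$. My plan is to analyze the asymptotic behavior of $E(L_2^n) - E(L_1^n) = \int_1^a x^n (f_2(x) - f_1(x))\, dx$ as $n \to \infty$ and show this difference tends to $+\infty$ (in particular is eventually positive), which immediately gives $L_1 \preceq L_2$ by Definition \ref{def:preference}.

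First I would split the integral at $a - \eps$. On $[1, a-\eps]$ the integrand $x^n(f_2 - f_1)$ is bounded in absolute value by $(a-\eps)^n \cdot C$ where $C = \sup_{[1,a-\eps]} |f_2 - f_1|$ is finite by continuity on a compact set; so this contribution is $O((a-\eps)^n)$. On $(a-\eps, a]$ the integrand is strictly positive since $f_2 - f_1 > 0$ there, and I would bound it below: because $f_2 - f_1$ is continuous and positive on $(a-\eps, a]$, there is a subinterval $[a - \delta, a - \delta/2]$ (for small $\delta \le \eps$) on which $f_2 - f_1 \ge c > 0$; hence $\int_{(a-\eps, a]} x^n (f_2 - f_1)\, dx \ge c \int_{a-\delta}^{a-\delta/2} x^n\, dx \ge c \cdot (\delta/2) \cdot (a - \delta)^n$, which grows like $(a-\delta)^n$ with $a - \delta > a - \eps$. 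Choosing $\delta < \eps$ ensures the positive lower bound on $(a-\eps,a]$ dominates the $O((a-\eps)^n)$ error from $[1, a-\eps]$, so the difference $E(L_2^n) - E(L_1^n) \to +\infty$; in particular it is positive for all $n$ beyond some $k_0$, establishing \eqref{eqn:moment-divergence} and the claim.

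The main obstacle is making the domination argument fully rigorous near the endpoint $a$ — one has to be careful that the lower bound genuinely uses points strictly inside $(a-\eps, a]$ where $x^n$ is still exponentially larger (in base) than $(a-\eps)^n$, and that the continuity hypothesis on the densities (guaranteed by Definition \ref{def:loss}) delivers the uniform lower bound $c > 0$ on a nondegenerate subinterval. A secondary subtlety is that the lemma as stated assumes $f_1 < f_2$ strictly on the half-open interval including $a$; one should note that it suffices to have strict inequality on any nonempty open subinterval arbitrarily close to $a$, since only the tail behavior near $a$ drives the moment growth — but I would keep the proof aligned with the stated hypothesis and simply remark that the case $f_1(a) = f_2(a)$ (and more generally ties at the endpoint) is deferred, as the paragraph preceding the lemma already indicates. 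Everything else is the elementary calculus estimate sketched above.
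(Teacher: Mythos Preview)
Your proposal is correct and follows essentially the same approach as the paper: the paper defers the full proof to \cite{Rass2015b} but sketches precisely this moment-divergence argument in the discussion surrounding equation~\eqref{eqn:moment-divergence}, and your splitting of the integral at $a-\eps$ together with the domination via a subinterval $[a-\delta,a-\delta/2]$ (with $\delta<\eps$, so that the base $a-\delta>a-\eps$) rigorously fills in the ``simple matter of calculus'' the paper alludes to.
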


Lemma \ref{lem:preference-decisions} (see \cite{Rass2015b} for a proof)
offers an easy way to decide preferences based on the \ac{RV}'s density
functions only. The procedure is the following: Call $[1,a]$ the common
support of both loss variables $L_1,L_2$, and consider the density functions
$f_1,f_2$:
\begin{itemize}
  \item If $f_1(a)<f_2(a)$, then $L_1\preceq L_2$,
  \item Otherwise, if $f_1(a)>f_2(a)$, then $L_2\preceq L_1$.
\end{itemize}
Upon a tie, i.e., $f_1(a)=f_2(a)$, we need to either decrease $a$, truncate
the distributions properly, and repeat the analysis, or we may look at
derivatives at $a$ to tell us which density takes larger values locally near
$a$. The latter approach is further expanded in Section
\ref{sec:derivative-decisions}.

If the distribution is discrete, say, if the available data is not continuous
but qualitative (e.g., categorical), then things are even simpler: if
$L_1,L_2$ are both distributions over the same categories, then $L_1\preceq
L_2$, if $L_1$ puts less likelihood to categories of large damage than $L_2$
(see Fig \ref{fig:preference-illustration} for an example).

Formally, $\preceq$ thus boils down to a humble lexicographic ordering
whenever the losses have categorical distributions.
\begin{defn}[lexicographic ordering] For two vectors $\vec x =
(x_1,x_2,\ldots)$ and $\vec y=(y_1,y_2,\ldots)$ of not necessarily the same
length, we define $\vec x<_{lex} \vec y$ if and only if there is an index
$i_0$ so that $x_{i_0}<y_{i_0}$ and $x_i=y_i$ whenever $i<i_0$.
\end{defn}
For two categorical distributions given in matrix notation and letting the
support be given in descending order of risk levels $r_n>r_{n-1}>\ldots>r_1$,
we observe that $F_1\preceq F_2\iff (p_n,\ldots,p_1)=\vec p<_{lex} \vec
q=(q_n,\ldots,q_1)$, when the distributions are:
\begin{align*}
    F_1:& \left(
           \begin{array}{ccccc}
             p_n &    & \ldots &   & p_1 \\
             r_n & >  & \ldots & >  & r_1 \\
           \end{array}
         \right),\text{ and }
    F_2: \left(
           \begin{array}{ccccc}
             q_n &    & \ldots &   & q_1 \\
             r_n & >  & \ldots & >  & r_1 \\
           \end{array}
         \right)
\end{align*}
That is, the action with the higher likelihood of extreme damage is less
favorable, and upon a tie (equal chances of large damages), the likelihood
for the next smaller risk level tips the scale, etc. Rephrasing the classical saddle-point condition in terms of such a lexicographic order leads to a new concept that coincides with (standard) Nash equilibria only in the 1-dimensional case (as observed by \cite{burgin_remarks_2021}). In higher dimensions, corresponding to non-scalar losses, such as categorical loss distributions, an optimum w.r.t. the lexicographic order, does not necessarily also induce a saddle point in the sense of $\preceq$. We will hence use different names to distinguish classical from lexicographic equilibria, later in Section \ref{sec:practical-decision-making} and onwards.

\subsection{Practical Meaning of
$\preceq$-Preferences}\label{sec:practical-meaning-of-preference}
Summarizing the previous discussion in concise form directly takes us to the
practical meaning of $\preceq$-preferences:
\begin{quote}
\emph{We have $L_1\preceq L_2$, if large damages (near the maximum) are
more likely to occur under $L_2$ than under $L_1$. }\end{quote} This is
just an intuitive re-statement of Lemma \ref{lem:preference-decisions}.
However, and remarkably, the converse to it is also true, if the density is piecewise polynomial (see \cite{rass_game_2021,burgin_remarks_2021} as an extension to the original Thm. 2.14 in \cite{Rass2015b}):
\begin{thm}\label{thm:preference-meaning}
Let $L_1,L_2$ be two \acp{RV} with distribution functions $F_1,F_2\in\F$. If
$L_1\preceq L_2$, then a threshold $x_0$ exists such that
$\Pr(L_1>x)<\Pr(L_2>x)$ for every $x\geq x_0$.
\end{thm}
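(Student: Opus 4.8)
The plan is to prove the contrapositive-flavoured statement directly: assuming $L_1 \preceq L_2$, i.e.\ $E(L_1^k) \le E(L_2^k)$ for all $k \ge k_0$, I want to produce a threshold $x_0$ with $\Pr(L_1 > x) < \Pr(L_2 > x)$ for all $x \ge x_0$. The key observation is that high moments of a compactly supported loss are dominated by the behaviour of the distribution near the right endpoint of its support. Concretely, write $a$ for the (common, after truncation) right end of $\supp(L_1)\cup\supp(L_2)$, so that by Definition~\ref{def:loss} both densities $f_1,f_2$ are continuous and piecewise polynomial on $[1,a]$. First I would reduce to the case where the two densities are not identically equal on a left neighbourhood of $a$: if $f_1 = f_2$ on some $(a-\eps,a]$, I would restrict attention to the largest point $a' \le a$ below which $f_1$ and $f_2$ genuinely differ on a left neighbourhood (this exists because the densities are piecewise polynomial, so they agree on at most finitely many pieces from the right before splitting), and run the argument with $a'$ in place of $a$; on $(a',a]$ the tail probabilities already coincide so nothing is lost.

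Next, the heart of the argument is the asymptotic comparison of moments. On the neighbourhood $(a'-\eps,a']$ where the piecewise-polynomial densities differ, one of them, say $f_2$, strictly exceeds the other, $f_1$, throughout $(a'-\eps,a']$ — this uses that a nonzero piecewise polynomial has isolated zeros, so $f_2 - f_1$ has constant sign on a small enough left neighbourhood of $a'$. I would then show $\lim_{n\to\infty}\big(E(L_2^n) - E(L_1^n)\big) = +\infty$: writing $E(L_i^n) = \int_1^{a'} x^n f_i(x)\,dx$ (plus the common contribution from $(a',a]$, which cancels), and using that $\int x^n$ concentrates mass near $a'$ as $n\to\infty$ — a standard Laplace/Watson-type estimate — the difference is asymptotically governed by $\int_{a'-\eps}^{a'} x^n\,(f_2-f_1)(x)\,dx > 0$, which diverges because $f_2 - f_1$ is bounded below by a positive constant near $a'$ and $\int_{a'-\eps}^{a'}x^n dx \to \infty$ (indeed $\sim (a')^n \cdot \eps$ up to lower-order terms, if $a' \ge 1$; one handles $a'=1$ separately since then all mass is at a point). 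This forces the sign $E(L_2^n) > E(L_1^n)$ for all large $n$, which is exactly the orientation compatible with $L_1 \preceq L_2$ (and rules out $f_1 > f_2$ near $a'$, which would give the reverse).

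Having pinned down that $f_1(x) < f_2(x)$ on $(a'-\eps, a']$, I would finally translate this local density inequality into a tail inequality. For $x \in (a'-\eps, a')$ close enough to $a'$,
\[
\Pr(L_2 > x) - \Pr(L_1 > x) = \int_x^{a'} \big(f_2(t) - f_1(t)\big)\,dt \;>\; 0,
\]
since the integrand is strictly positive on the whole interval of integration $(x,a') \subseteq (a'-\eps,a')$; and for $x \in [a', a]$ the two tails are equal by construction of $a'$, so the strict inequality holds on the half-open interval and (taking $x_0$ slightly above $a'-\eps$, say $x_0 = a' - \eps/2$) on all of $[x_0, a]$ except possibly at isolated points where one should just shrink $\eps$ a little — more cleanly, pick $x_0 \in (a'-\eps, a')$ and note strict positivity on $[x_0, a']$, while on $(a', a]$ equality is fine only if we want $\le$; to get strict $<$ everywhere I would simply note that $x_0$ can be chosen so that $(x_0,a']$ lies strictly inside the region where $f_2 - f_1 > 0$, making $\Pr(L_1>x) < \Pr(L_2>x)$ strict for all $x \in [x_0, a] = [x_0, a']$ once we have already replaced $a$ by $a'$. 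Above $a$ both tails are $0$, so there is nothing to check there.

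The main obstacle I anticipate is the moment-asymptotics step: making rigorous that $E(L_2^n) - E(L_1^n) \to +\infty$ (and not merely that it is eventually positive) requires a careful Laplace-type estimate for $\int_1^{a'} x^n g(x)\,dx$ where $g = f_2 - f_1$ changes sign away from $a'$. The clean way is to split $g = g\cdot \mathbf{1}_{(a'-\eps,a']} + g\cdot\mathbf{1}_{[1,a'-\eps]}$; the first integral is bounded below by $c\int_{a'-\eps}^{a'} x^n dx$ for a constant $c>0$, while the second is $O((a'-\eps)^n)$ in absolute value, which is exponentially smaller, so the sum diverges to $+\infty$. The piecewise-polynomial hypothesis of Definition~\ref{def:loss} is exactly what guarantees that $g$ has only finitely many sign changes and hence a genuine sign on a left neighbourhood of $a'$ — this is the place where that hypothesis, flagged as only a mild restriction via Weierstra\ss, is actually used.
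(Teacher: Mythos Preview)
The paper does not actually prove Theorem~\ref{thm:preference-meaning} in-text; it defers to external references \cite{rass_game_2021,burgin_remarks_2021,Rass2015b}. The closest in-paper material is the converse direction, Lemma~\ref{lem:preference-decisions}, together with the informal moment-divergence sketch around \eqref{eqn:moment-divergence}. Your proposal is precisely the natural way to run those same ingredients in the forward direction: use the piecewise-polynomial hypothesis to force $g=f_2-f_1$ to have constant sign on a left neighbourhood of the right endpoint, use moment asymptotics to pin that sign as positive (else $L_2\preceq L_1$), and integrate to get the tail inequality. So your approach is aligned with what the paper sets up.

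Two places need tightening. First, your Laplace estimate ``bounded below by $c\int_{a'-\eps}^{a'} x^n\,dx$ for a constant $c>0$'' presumes $g$ is bounded away from zero near $a'$. But in exactly the reduction case $a'<a$ you introduce, continuity forces $g(a')=0$, so no such $c$ exists. You then need the genuine Watson-type estimate: if $g(x)\sim C(a'-x)^m$ as $x\uparrow a'$, then $\int_{a'-\eps}^{a'} x^n g(x)\,dx \asymp (a')^{n}/n^{m+1}$, which still dominates the $O((a'-\eps)^n)$ remainder, so the conclusion survives --- but the argument as written does not cover this.

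Second, your treatment of strict inequality on $[a',a]$ is confused. If $a'<a$ and $f_1=f_2$ on $[a',a]$, then for every $x\in[a',a)$ one has $\Pr(L_1>x)=\Pr(L_2>x)$ exactly, and no choice of $x_0$ produces a strict inequality there; the sentence ``$[x_0,a]=[x_0,a']$ once we have already replaced $a$ by $a'$'' is not coherent when $a'<a$. This is really a defect of the theorem statement as printed (it should read $\leq$, or implicitly exclude the case where the densities agree on a right neighbourhood, or restrict $x$ to $[x_0,a')$), not of your strategy --- but you should say so plainly rather than paper over it. Finally, note that $\F$ also contains discrete (categorical) losses; your argument is written entirely for the continuous case and should at least remark that the discrete case follows by the analogous (and easier) comparison of top-category masses.
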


Restating this intuitively again, Theorem \ref{thm:preference-meaning}
tells that:
\begin{quote}
\emph{a $\preceq$-minimal decision among two choices with respective
consequences $L_1$ and $L_2$ minimizes the chances for large damages to
occur.}
\end{quote}

This is exactly what we are looking for: Risk management is in many cases
focusing on extreme events rather than small distortions (which the
system's ``natural'' resilience is expected to handle anyway), and the
focus of the $\preceq$-relation to prefer distributions with lighter tails
perfectly accounts for this. Having $\preceq$ as a total ordering with a
practical interpretation as being ``risk-averse'', this already addresses
the simple case of decision making among finitely many choices (as
discussed in the next Section \ref{sec:practical-decision-making}).

\section{Practical Decision-Making}\label{sec:practical-decision-making}
Remembering our example of \ac{APT} mitigation, suppose that as an initial
attempt, we would consider the installation of permanent security
precautions, such as (additional) firewalls, access controls, physical
protection, etc. Moreover, organizational changes such as were discussed in
example \ref{exa:ipr-responsibilities} may be under discussion. However, all
of these may have uncertain effectiveness, but the $\preceq$-relation now
helps out.

In general and abstractly, the decision problem and procedure is the
following:
\begin{itemize}
  \item A set of choices (e.g., security precautions)
      $d_1,d_2,\ldots,d_n\in AS_1$ is available (e.g., defense actions for
      \ac{APT} mitigation), each of which comes with a random
      consequence/effect captured by random losses $L_1,L_2,\ldots,L_n$.
  \item By looking for the $\preceq$-minimum among the distributions of
      $L_1,\ldots,L_n$, we can take an optimal decision under uncertainty.
\end{itemize}

An open issue so far is where to get the losses from, an issue that will be
revisited several times throughout this paper.

\subsection{Constructing Loss Distributions}\label{sec:loss-specifications}
The simplest approach to construct loss distributions that satisfy definition
\ref{def:loss} is to either:
\begin{itemize}
  \item collect as much data as is available, and compile an empirical
      distribution from it,
  \item or define the loss distribution directly based on expertise (say,
      if the action's incurred loss has a known distribution), if this is
      possible.
\end{itemize}
The latter case may occur seldom in practice, unless the particular threat
has been studied specifically (such as disaster management or value at risk
calls for extreme value distributions, etc.), and the ``adversary'' is nature
itself. Against a rational adversary such as business competitors, hackers,
etc., threat intelligence and expertise is the fundament upon which loss may
be measured. Often, this assessment is made in qualitative terms for several
(good) reasons, such as:
\begin{itemize}
  \item Human reasoning is neither numeric nor crisp, i.e., experts may
      find it simpler to give assessments like ``high risk'' instead of
      having to specify a hard figure.
  \item Numerical precision can create the illusion of accuracy where there
      is none. There are only few types of incidents on which reliable
      statistical data is available, and having huge amounts of data on
      \acp{APT} attacks on general cybersecurity incidents may be unrealistic (and
      also undesirable if the incidents concern oneself).
\end{itemize}

In practice, rating actions w.r.t. their outcomes is naturally a matter of
expert surveys, with answers possibly looking like shown in Table
\ref{tbl:security-precaution}. Collecting many such opinions and putting
them together in an empirical distribution $\hat L$ about a precaution's
performance may give distributions whose shape is unimodal (if a consensus
among opinions is found), or multimodal, if disagreeing opinions are
reported. Whatever happens or whether or not the outcome looks like
illustrated in Fig \ref{fig:expert-ratings}, the $\preceq$-preference
relation now allows for an elegant deal with this kind of uncertainty.

\begin{figure}
  \includegraphics[scale=0.6]{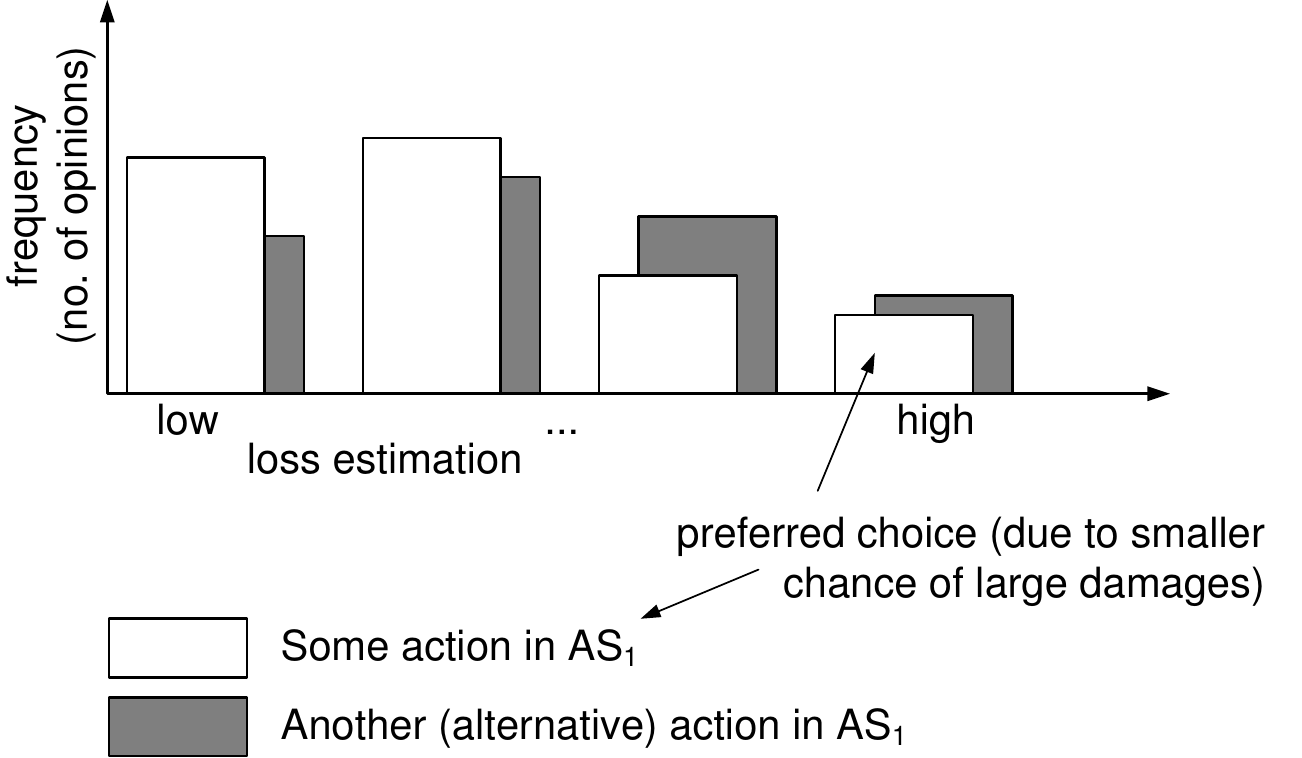}
  \caption{\textbf{Example of $\preceq$-choosing among two empirical distributions (inconsistent expert opinions)}}\label{fig:preference-illustration}
\end{figure}

\subsection{Games and Equilibria}
With the uncertain outcome in a scenario of defense $i$ vs. attack $j$ being
captured by a (perhaps empirical) probability distribution $L_{ij}$, and the
complete set of distributions being totally ordered w.r.t. $\preceq$, it is a
simple and straightforward manner to define matrix games and equilibria in
the well-known way, but will need to bear in mind that the resulting concepts will not exactly resemble (classical) equilibria in all senses, as we will explain later. For convenience of the reader, we give the necessary
concepts and definitions here from classical game theory.

Let $AS_1, AS_2$ be the action spaces for player 1 and 2, respectively,
with cardinalities $n$ and $m$. Let $\vec A=(L_{ij})_{i,j=1}^{n,m}$ be a
matrix of \acp{RV} that are all supported on the same compact set
$\Omega=[1,a]\subset\R$. Let $F_{ij}$ be the distribution function of the
random loss $L_{ij}$. In each round of the game, the random outcome $R$ is
conditional on the chosen actions of player 1 and player 2, and has the
distribution $R\sim L_{ij}$ if player 1 chooses action $i\in AS_1$ and
player 2 chooses action $j\in AS_2$.

We consider randomized choice rules $\vec p=(p_1,\ldots,p_n)\in \S(AS_1)$
and $\vec q=(q_1,\ldots,q_n)\in \S(AS_2)$, i.e., the vectors $\vec p, \vec
q$ describe the likelihoods of actions being taken by either player. In
that case, the random outcome has a distribution $R\sim F(\vec p,\vec q)$
computed from the law of total probability, which is
\begin{align}
     F(\vec p,\vec q)(r)=\Pr(R\leq r)&=\sum_{i,j}\Pr(R\leq r|i,j)\cdot \Pr(i,j)=\sum_{i,j}F_{ij}(r)\cdot p_i\cdot q_j,\label{eqn:utility-function}
\end{align}
assuming a stochastically independent choice of actions by both players.
This is the \emph{utility function} in case of random outcomes (note that
\eqref{eqn:utility-function} is exactly the same formula as is familiar
from matrix game theory). So, the actual gameplay is \emph{not} about
maximizing the average revenue (as usual in game theory), but towards
optimally ``shaping'' the outcome distribution $F(\vec p,\vec q)$ towards
$\preceq$-minimality. That is, in a zero-sum competition, player 1 and
player 2 seek to choose their actions in order to minimize/maximize the
likelihood of extreme events. Speaking differently again, player 1 attempts
to shift the mass allocated by the respective density $f(\vec p,\vec q)$
towards lowest damages, whereas player 2 tries his best to shape the
density $f$ towards putting more likelihood on larger damages. This is the
essential technical process of our game-theoretic \ac{APT} risk mitigation
strategies, whose optimality is that of a \emph{lexicographic Nash equilibrium} \cite{rass_game_2021}, which is, in the 1-dimensional case (only), the same as a standard Nash equilibrium (see \cite{burgin_remarks_2021} for a detailed example, and \cite{Fudenberg1991} for a formal
treatment of the classical case). In general, a lexicographic equilibrium respects goal priorities, which are here equal to the ordering on the loss scale, taking highest losses as most important to avoid (and breaking ties by moving to the next lower loss category). Similar as a standard equilibrium, a lexicographic equilibrium penalizes unilateral deviations, but in doing so, opens up a possibility for the second player to improve its own revenue in a (less important) other goal (e.g., causing more likely damage of a lower loss category). The theoretical facts about (real-valued) Nash equilibria in games, by the transfer principle, translate likewise to hyperreal terms. The practical difference relates to computability, since the defenses that we can find (algorithmically) in games over loss distributions are obtained from lexicographic equilibria. We will disambiguate the two hereafter by speaking about (standard) Nash equilibria to mean the classical concept, and lexicographic (Nash) equilibria to denote the other.

As for standard games, it can be shown that the saddle-point value $V(\vec
A)=\max_{\vec p\in \S(AS_1)}\min_{\vec q\in\S(AS_2)}F(\vec p,\vec q)$ is
invariant w.r.t. different (standard) Nash equilibria, and that equilibria defined w.r.t.
$\preceq$ exist (and can be generalized to standard Nash-equilibria in $n$-person
games in the canonic way). The way of proving it makes use of the embedding
of distributions into the hyperreal space $\HR$, where all the known
results necessary to re-establish the fundament of game theory are
available (yet further substantiating our loss representation by a moment
sequence). Unfortunately, however, not all properties are directly
inherited, such as a central computational feature of zero-sum games is
absent in our setting:

\begin{prop}\label{prop:fp-failure}
There exist zero-sum matrix games $\vec A\in\F^{n\times m}$ for which
fictitious play (according to \cite{Berger2007,Robinson1951}) does not
converge.
\end{prop}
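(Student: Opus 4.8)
The plan is to establish the claim by an explicit small counterexample, after first passing to a convenient normal form. Since it suffices to find a single $\vec A\in\F^{n\times m}$ on which fictitious play fails to converge, I would take all losses $L_{ij}$ to be categorical on a common scale of three risk levels $r_1<r_2<r_3$, so that $F_{ij}$ is encoded by the pair $(a_{ij},b_{ij})$ of probabilities for $r_3$ and $r_2$. By the reduction preceding Lemma~\ref{lem:preference-decisions}, comparing outcomes $F(\vec p,\vec q)$ under $\preceq$ then amounts to comparing $\bigl(\vec p^\top\vec A_a\vec q,\ \vec p^\top\vec A_b\vec q\bigr)\in\R^2$ lexicographically, with $\vec A_a=(a_{ij})$, $\vec A_b=(b_{ij})$; via the embedding $\phi$ this is a bilinear zero-sum game over a sub-order of $\HR$, and the only deviation from the classical scalar situation is the lexicographic tie-breaking in the best responses. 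Two features drive the construction. First, a lexicographic best response refines the primary best response, so along any fictitious-play run both players always play $a$-optimal pure strategies; hence the primary coordinates of the empirical strategies $(\bar{\vec p}_t,\bar{\vec q}_t)$ evolve like ordinary zero-sum fictitious play for $\vec A_a$, which by Robinson's theorem drives them toward the optimal sets $\Sigma_1(\vec A_a),\Sigma_2(\vec A_a)$ but, when these are not singletons, need not make them converge as sequences. Second, the unique lexicographic equilibrium of $\vec A$ is a single point inside $\Sigma_1(\vec A_a)\times\Sigma_2(\vec A_a)$, so persistent drift along that face is exactly non-convergence. (Consistently, the transfer of Robinson's theorem to $\HR$ only yields convergence after a possibly infinite hyperinteger number of steps, which at the standard level is compatible with divergence.)

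Concretely, I would choose $\vec A_a$ so that its set of optimal strategies contains a non-degenerate segment -- the ``room'' in which the empirical play can wander -- and then pick $\vec A_b$ so that the secondary-coordinate tie-breaking, acting only within that optimal face, steers the wandering into a sustained cycle rather than damping it out. The smallest plausible size is $3\times 3$, with $\vec A_b$ restricted to the relevant face carrying an anti-coordination (Shapley-type) pattern on which fictitious play is known to cycle; because the players' ``free'' directions lie in the rows resp.\ the columns, this residual secondary contest behaves like a bimatrix rather than a zero-sum game, which is what permits a limit cycle to survive inside a formally zero-sum $\vec A$. I would then write out the piecewise-linear best-response map explicitly, choose an initial move for which the trajectory enters the optimal face of $\vec A_a$, and track $(\bar{\vec p}_t,\bar{\vec q}_t)$.

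The remaining step, and the main obstacle, is to prove rigorously that $(\bar{\vec p}_t,\bar{\vec q}_t)$ does not converge. I would isolate the region $U$ of the product simplex on which the primary coordinate is tied across the relevant plateau, show that $U$ is forward-invariant under the best-response dynamics for the chosen initialization, and show that inside $U$ the dynamics reduce to the cycling bimatrix dynamics, so that a Poincar\'e return map or an explicit conserved quantity on $U$ forces the empirical frequencies onto a nontrivial periodic (or quasi-periodic) orbit bounded away from the unique lexicographic equilibrium. The delicate points are exactly (i) guaranteeing that the trajectory never leaves $U$ -- so that the secondary coordinate keeps governing the moves, which constrains the entries of $\vec A_a,\vec A_b$ and the starting play -- and (ii) certifying genuine non-convergence rather than merely slow convergence, for which one needs a clean invariant rather than numerical evidence. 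Everything else -- the reduction to categorical losses, bilinearity, and the transfer of the classical facts -- is routine in the framework already developed.
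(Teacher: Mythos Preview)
Your proposal has a real gap at the point you flag as the engine of non-convergence. You assert that ``the residual secondary contest behaves like a bimatrix rather than a zero-sum game,'' and that is what lets you import a Shapley-type cycle. But in the lexicographic zero-sum game $(\vec A_a,\vec A_b)$ both players evaluate the \emph{same} pair $\bigl(\vec p^\top\vec A_a\vec q,\ \vec p^\top\vec A_b\vec q\bigr)$, one minimising and one maximising in the same lexicographic order. Restricting player~1 to a face of rows and player~2 to a face of columns yields a submatrix of $\vec A_b$, which is again a single payoff matrix, hence still zero-sum; there is no mechanism by which the two players end up best-responding to \emph{different} bilinear forms. Shapley's classical cycling example is genuinely non-zero-sum and does not transplant into this setting. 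Concretely, if you force persistent primary ties by making the relevant rows (resp.\ columns) of $\vec A_a$ identical, then on the tied face fictitious play reduces exactly to ordinary fictitious play on the corresponding sub-block of $\vec A_b$, and Robinson's theorem gives convergence; if instead the ties are only on a measure-zero set, the secondary coordinate is almost never consulted and you are back to fictitious play on $\vec A_a$. Either way the proposed cycle does not survive, so the ``main obstacle'' you identify (certifying non-convergence on $U$) is not merely delicate---under your construction it is not achievable by the stated mechanism.

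There is also a mismatch with the paper's route that is informative. The paper does not prove the proposition in the text but points to an explicit counterexample in \cite{Rass2015g}, and it states (see Section~\ref{sec:derivative-decisions}) that this counterexample ``relies on different losses whose supports are strictly contained in one another''; indeed the paper's subsequent fix is precisely to enforce a \emph{common} support via the kernel smoothing so that the phenomenon disappears. Your reduction deliberately puts all $F_{ij}$ on a common three-level scale, i.e., exactly the regime the paper singles out as \emph{avoiding} the failure mode. With nested supports the leading coordinate in the $\preceq$-comparison can collapse to zero on sub-faces of the strategy simplices and the next ``level'' is read off at a different support endpoint; this produces a qualitatively different piecewise structure than the clean two-level bilinear lexicographic game you set up. If you want to rescue the approach, you should abandon the Shapley-cycling heuristic and instead follow the paper's hint: take $F_{ij}$ with strictly nested supports, work out the induced $\preceq$-best-response map explicitly (it will not be governed by a single pair $(\vec A_a,\vec A_b)$ uniformly over the simplex), and exhibit an orbit that fails to converge.
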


Proposition \ref{prop:fp-failure} is proved by constructing a concrete
example (see \cite{Rass2015g}) of a game that cannot be solved using
fictitious play. Thus, it is an unfortunate obstacle in applying well-known
game theory to our new setting. The formal fix relies on yet another
representation of the loss densities, which admits converting a matrix game
over $\F$ into a set of standard matrix games over $\R$, which can be
solved by fictitious play again. However, the result is still not just a Nash equilibrium over the hyperreal numbers, as was observed by \cite{burgin_remarks_2021}, but has a lexicographic optimality property that is nonetheless appropriate for our matters of risk management. The details of this are postponed until
Section \ref{sec:practicalities}, culminating in the main Theorem
\ref{thm:main} that assures that we can ultimately escape the situation
that proposition \ref{prop:fp-failure} warns us about. Interestingly, this
fix has a useful side-effect, whose physical meaning is a heuristic account
for zero-day exploits. We will revisit this aspect in more detail later in
Section \ref{sec:zero-day}.

\section{APTs as Games}\label{sec:apt-games}
Suppose that a \ac{TVA} has been done and that an attack graph is
available. Towards a game-theoretic model of \acp{APT}, let us think of the
attack graph as sort of an \acf{EFG} with perfect information. Although the
attack graph or tree may not follow the proper syntax of an \ac{EFG}, we
can nevertheless convert it into syntactically correct normal form game in
the same way as we would do with an \ac{EFG}. That is, we would traverse
the graph from the initial stage of the game until the stage where payoffs
are issued to all players. From the exhaustive list of all these paths, we
can define the strategies of both players as rules about what to do at
which stage, given the other player's move. Likewise, an \ac{APT} would in
this view be mounted along any of the existing paths from the root of the
attack graph down to the goal, with the difference to \ac{EFG} mostly being
the fact that the ``game'' does not clearly define when the players are
taking their moves (this is a conceptual difference to \ac{EFG}, where the
assignment of which player's move it is part of the \ac{EFG} description).

In both cases, \ac{EFG} and \ac{APT} attack graphs, we can compile a set of
paths from the start to the finish, from which strategies for both players
can be identified. While this identification comes from the definition of the
\ac{EFG}, for \acp{APT}, the strategies are delivered only for the opponent
player 2, which is the attacker. Player 1, the defender, needs to derive its
action set $AS_1$ based on player 2's actions $AS_2$. Table
\ref{tbl:game-correspondence} summarizes the correspondence between \ac{EFG}
and \ac{APT} attack trees.

\begin{table}[h!]
  \caption{\textbf{Correspondence of Attack Trees/Graphs and Extensive Form Games}}\label{tbl:game-correspondence}
  \begin{tabular}{|p{0.4\columnwidth}|p{0.4\columnwidth}|}
  \hline
  \textbf{Extensive form game} & \textbf{Attack tree/graph} \\\hline
  start of the game & root of the tree/graph \\\hline
  stage of the gameplay & node in the tree/graph \\\hline
  allowed moves at each stage (for the adversary) & possible exploits at each node \\\hline
  end of the game & leaf node (attack target)\\\hline
  strategies & paths from the root to the leaf \mbox{($=$ attack vectors)}\\\hline
  information sets & uncertainty in the attacker's current position and move\\\hline
\end{tabular}
\end{table}

The nature of \acp{APT} induces a difficulty in the game specification here,
since we usually do not know how deep the attacker may have penetrated into
the system, and because of this, the current stage of the game is expectedly
unknown to the defender. Countermeasures against exploits in each stage may
be identified, but not always possible, feasible or successful. Allowing for
a random outcome with the possible event of an action to fail elegantly
tackles this issue in our setting.

If countermeasures have no permanent effect or are likely to fail,
then we may need to repeat them. For example, a security training may cause
only temporarily raised security awareness. Likewise, updating a software
once is clearly useless unless the system is continuously kept up to date.

Given that the defense actions in $AS_1$ must be repeated, we can set up a
matrix game to tell us the best way to do so. Since precautions cannot be
applied everywhere at all times, we need to define the game as one where
the defender takes random moves, based on a hypothesis where the attacker
may currently be. Alas, it would probably not be feasible to rely on
Bayesian updating towards refining our hypotheses, since this assumes much
data, i.e., many incidents to occur, and hence is exactly what \ac{APT}
mitigation seeks to prevent. Thus, many popular tools from game theory like
perfect Bayesian equilibria or related appear unattractive in our setting.

To simplify the issue in general, let us assume that the defender can access
all parts of the system (thus, take moves related to any stage of the game as
defined by the attack tree), whereas the attacker can move only from its
current location (node) to the successor (child node) location. The game play
is thus a matter of the defender seeking the optimal way of applying its
threat mitigation moves anywhere at random in the infrastructure, in an
attempt to keep the adversary away from its target. Unlike a Bayesian or
sequential game approach, the application of defense actions is not based on
hypotheses of where the attacker currently is, but will assume a worst-case
behavior of the attacker (thus, switching from the Bayesian towards the
minimax decision theoretic paradigm).

For example, applying a patch at some point may close a previously
established backdoor and send the adversary back to the start again.
However, if the patch is currently unavailable, not effective or simply
applied to the wrong machine, the defense move will have no effect at all.
Towards modeling this uncertainty, let us first become more specific on
what the payoffs in the example game will be.

Following the common qualitative risk assessments, we may define categories
of risk depending on ``how far away'' the adversary is from its destination
in the attack graph. Collecting the lengths of all paths listed in Table
\ref{tbl:as2}, we see that their lengths range between 4 and 8 nodes
(including the start \texttt{execute(0)} and finish node
\texttt{full\_access(2)}). In any such case, we may simply map the
distances to qualitative scores, such as Table \ref{tbl:distance-mapping}
proposes here.

\begin{table}[h!]
  \caption{\textbf{Possible mapping of graph distance to risk categories}}\label{tbl:distance-mapping}
  \begin{tabular}{|l|l|}
    \hline
    \textbf{Distance} & \textbf{Risk} \\\hline
    7\ldots 8 & low \\\hline
    3\ldots 6 & medium \\\hline
    0\ldots 2 & high \\
    \hline
  \end{tabular}
\end{table}

The concrete mapping of distances to risk levels can, however, already
induces uncertainty. For example, assume that an attacker has already
gained execution privileges on machine 1 (denoted as \texttt{execute(1)} in
the attack graph), then it may either continue its way on path 4 in Table
\ref{tbl:as2} (via a remote \ac{FTP} connection from machine 1 to machine
2; node \texttt{ftp\_rhosts(1,2)}) or on path 5 in Table \ref{tbl:as2} (via
an \ac{RSH} connection from machine 1 to machine 2; node
\texttt{rsh(1,2)}). On path 4, the distance to \texttt{full\_access(2)} is
3 nodes (\texttt{ftp\_rhosts(1,2)} $\to$
\texttt{rsh(1,2)} $\to$ \texttt{local\_bof(2)}), while on path 5, the
distance is only 2 nodes (\texttt{rsh(1,2)} $\to$ \texttt{local\_bof(2)}).
In light of the a-priori specified mapping of distance to risk levels as in
Table \ref{tbl:distance-mapping}, the risk would be classified as
\emph{either} ``medium'' or ``high'', depending on which path has been
chosen. The usual stealthiness of \acp{APT} hence causes uncertainty in the
risk assessment, which needs to be captured by a proper decision- or
game-theoretic \ac{APT} mitigation approach.

\subsection{Identifying Mitigation Strategies}

Having the attack graph and once attack vectors have been derived from it,
all of which are collected in the adversary's action space $AS_2$, the next
step is the identification of mitigation strategies. This process is a
standard phase in many risk management practices, and often based on known
countermeasures against the identified threats (accounting for unexpected
events is a matter of zero-day exploit handling, which we will revisit
shortly in Section \ref{sec:zero-day}). Since the process of defining
countermeasures is a task that highly depends on the attack vectors $AS_2$,
we cannot define a general purpose procedure to identify $AS_1$ here (it is
individual and different for various infrastructures). For the sake of
generality and conciseness of presentation in this work, let us therefore
assume that all relevant defense actions are available and constitute the
action set $AS_1$ for the defender. It may well be the case that not all
actions are effective against all threats, and neither may a designated
countermeasure $i$ be necessarily effective against threat $j$. In that
case, we may pessimistically assume maximal damage to be likely in scenario
$(i,j)$. Matching all defenses in $AS_1$ against all attack vectors in
$AS_2$ is a matter of defining the game's loss distributions, which is the
next step towards completing the game model in Section
\ref{sec:defining-the-apt-game}. To simplify the notation in the following,
let us abstractly denote the action spaces as $AS_1=\set{1,2,\ldots,n}$ and
$AS_2=\set{1,2,\ldots,m}$, with the specific details of the $i$-th defense
and the $j$-th attack for all $i,j$ being available in the risk management
documentation (in the background of our modeling).

\subsection{Defining the \ac{APT} Game}\label{sec:defining-the-apt-game}

Towards a matrix game model of \acp{APT}, it remains to specify the outcome
of each attack/defense scenario. To capture the intrinsic uncertainty here,
we will resort to a qualitative assessment like sketched above and/or
arising from vague opinions like Table \ref{tbl:security-precaution}
illustrates. The \ac{APT} game is then defined upon loss distributions
according to definition \ref{def:loss} to describe the potential loss in
each scenario $(i,j)\in AS_1\times AS_2$. In cases where we are unable to
come up with a reasonable guess on the distributions, a pessimistic
approach towards a worst-case assessment could work as follows (we will
later revisit the issue in the discussion Section \ref{sec:faqs}):
\begin{enumerate}
  \item Fix a particular position in the network, i.e., a certain point
      where an attack is considered. Let the hypothesized attack be the one
      with index $j_0\in AS_2$.
  \item Fix a defense action $i\in AS_1$.
  \item In lack of better knowledge, assume a uniform distribution of all
      possible $j$, including $j_0$, and rate the success probability
      $p_{j_0}$ of the defense conditional on $j=j_0$ (i.e., if your
      guess was right). This rating can also be made conditional on the
      expected ``doability'' of the current defense (e.g., if the defense
      means patching, the patch may not be available at all times, or it
      may be ineffective). Fig \ref{fig:loss-assessment} displays the
      process as a decision tree, in which the worst case outcome
      (highlighted in gray) is taken if either the countermeasure is
      considered as possibly effective but may still fail (with a certain
      likelihood), or if the countermeasure is not applicable at all. In
      any case, the expert -- based on the assumed attacker's behavior --
      is not bound to confine her/himself to a single answer, and may
      rate all the possibilities with different likelihoods
  \item If possible, collect many such assessments (say, from surveys,
      simulations, etc.), and compile an empirical distribution from the
      available data. This empirical distribution is then nothing else
      than a histogram recording the number of uttered opinions (shown as
      the bar chart in Fig \ref{fig:loss-assessment}) Note that the
      uniformity assumption on the attacker's location can be replaced by
      a more informed guess, if one is available. For example, the
      adversarial risk analysis (ARA) framework
      \cite{RiosInsua2009,Rios2012,Rothschild2012} addresses exactly this
      issue.
\end{enumerate}
This procedure is repeated for the entirety of scenarios in $AS_1\times
AS_2$, i.e., until the full game matrix has been specified. Fig
\ref{fig:workflow} illustrates the three sub-steps per entry of this
procedure in the game matrix (note that the right-most decision path
(bold-printed in Fig \ref{fig:loss-assessment}) is reflected as the fourth
choice option in the example survey shown in Fig \ref{fig:workflow}). We
stress that the full lot of $\abs{AS_1}\cdot\abs{AS_2}$ may hardly be
necessary to put to a survey, since not all actions are effective against
one another (defenses may work against specific threats, so only a few
combinations in $\abs{AS_1}\cdot\abs{AS_2}$ need to be polled explicitly,
and others can rely on default settings; cf. Fig
\ref{fig:loss-assessment}). Furthermore, some loss distributions can be
equally well computed from simulations (cf., e.g., \cite{Koenig2016a}).

\begin{figure}
  \includegraphics[scale=1]{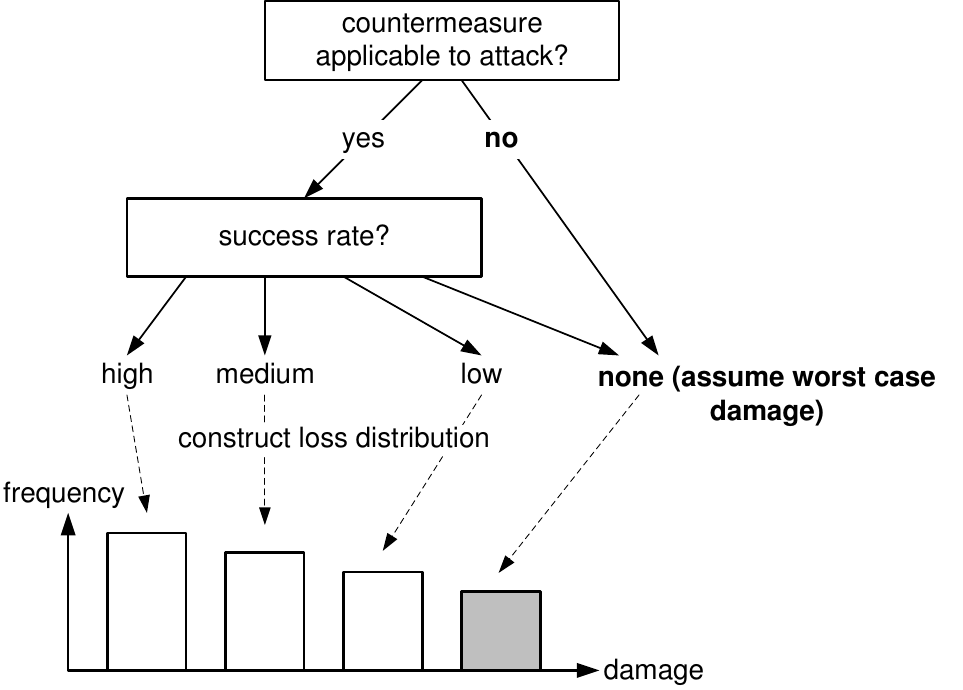}
  \caption{\textbf{Loss Assessment of Counteraction vs. Threat}}\label{fig:loss-assessment}
\end{figure}

\begin{figure*}
  \includegraphics[scale=0.6]{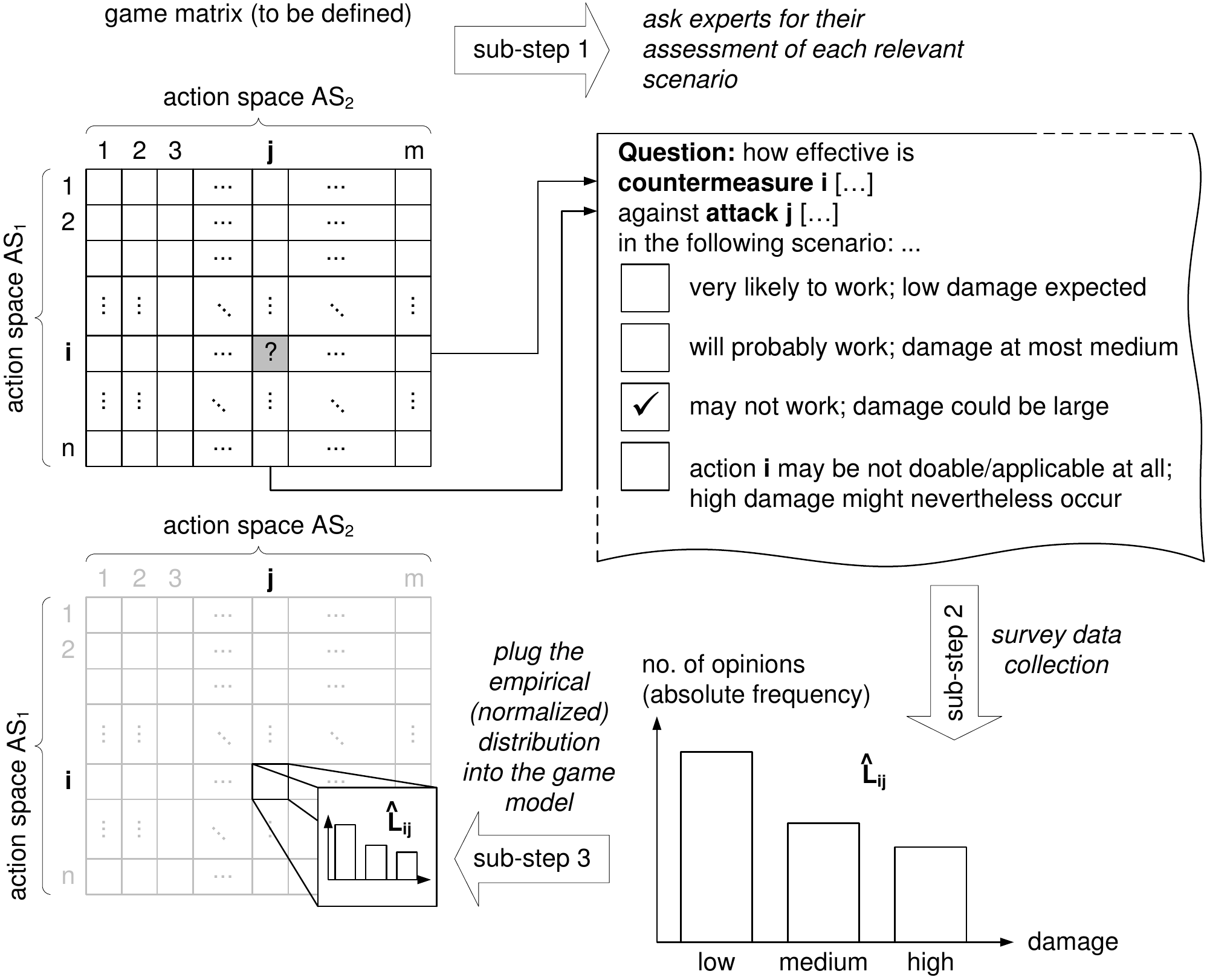}
  \caption{\textbf{Specification of an \ac{APT} Game (Example Workflow Snapshot)}}\label{fig:workflow}
\end{figure*}

By construction of the total ordering, the game that we define to minimize
the loss would then be played towards minimizing the likelihood for large
damages (by Theorem \ref{thm:preference-meaning}). Returning to our example
sketched in Section \ref{sec:running-example}, the uncertainty in a risk
level quantification based on distance in the graph would thus mean that
the gameplay is such as to keep the adversary ``as far away as possible''
from its target. This is indeed what we would naturally expect, and the
$\preceq$-relation acting on loss distributions that are based on distance
achieve precisely this kind of defense.

Although this modeling of \acp{APT} is heavily based on (subjective)
expertise and manual labour, it fits quite well into standard risk
management processes (such as ISO 31000 \cite{ISO31000} or ISO 27005
\cite{ISO27005}), and nevertheless greatly simplifies matters of modelling
over the classical approach, as a variety of issues are elegantly solved. A
selection is summarized in Table \ref{tbl:benefits}, with a complementary
discussion given in Section \ref{sec:faqs}. Additional help in the
specification of risk assessments is also offered by thinking about costs
of an exploit or known ratings of vulnerabilities such as by \ac{CVSS}
\cite{Mell2007}. Such ratings are commonly delivered along with the
\ac{TVA} (e.g., by tools like OpenVAS).

\begin{table*}[h!]
  \caption{\textbf{Benefits of Distribution-Valued Game-Modeling over Classical Game-Modeling}}\label{tbl:benefits}
  \begin{tabular}{|p{0.1\textwidth}|p{0.4\textwidth}|p{0.4\textwidth}|}
    \hline
    \textbf{Issue} & \textbf{Classical game-theoretic modelling} & \textbf{How this is handled in distribution-valued games} \\\hline
    Payoff uncertainty & Either switching to special forms of equilibria (disturbed, trembling hands, etc.)
    or agreeing on a simultaneously representative value for all possible outcomes (``consolidation of different opinions'') & No consolidation or representation needed;
    we can simply work with the (normalized) histogram of all possible outcomes (or opinions on what could happen)\\\hline
    Non-realizable strategy & Separating out cases where a strategy can be played or not. This would amount to specifying two versions
    of the strategy (one that is successful and one that fails) & Since actions can by construction have many different outcomes, success and failure are just two
    realizations of the corresponding loss \ac{RV} $L$, each of which may occur with a known (or estimated) probability. The entirety of these probabilities makes the
    sought density function $f$ of the \ac{RV} $L$.\\\hline
    Imperfect information & Working with hypotheses on expected moves in stages of the game where no precise information is available.
    The hypotheses can be learnt from past history and are taken into account when defining the optimal
    behavior (e.g., Bayesian perfect equilibrium) & Is directly incorporated in the uncertainty of the outcome, since an unknown move corresponds in a perceived random payoff; thus, there is
    no intrinsic conceptual difference here\\\hline
    Random changes in the game-play (stochastic games \cite{Shapley1953a}) &
    Resorting to special forms of equilibria, such as distorted or trembling hands equilibria or stochastic games \cite{Shapley1953a,Fudenberg1991} &
    As long as the outcome remains identically (stationarily) distributed across several rounds of the gameplay, there
    is no specific treatment required upon random changes in the gameplay. The known theory of Markov chains can be used here
    to analyze the changes in the gameplay for stationarity.\\
    \hline
  \end{tabular}
\end{table*}

\section{Practical Computation of Optimal Defenses}\label{sec:practicalities}
Essentially, our \ac{APT} game model is a matrix game $\vec
A\in\F^{\abs{AS_1}\times\abs{AS_2}}$, in which each defense $i\in AS_1$ vs.
each attack in $j\in AS_2$ is rated in terms of a probability distribution
(uncertain outcome) $F_{ij}\in\F$. By defining the losses in the gameplay
to be the gain for the adversary (i.e., making the competition zero-sum),
we obtain a valid worst-case approximation that enjoys the following useful
property:

\begin{lem}\label{lem:zero-sum-optimality}
Let $AS_1, AS_2$ be the action spaces for the defender and the attacker,
respectively, with cardinalities $n$ and $m$. Furthermore, let $\vec B$ be
the (unknown) matrix of true payoffs for the attacker, and let $\vec A$ be
the loss matrix for the defender. If the saddle-point values of the zero-sum
matrix game $\vec A$ is $V(\vec A)$, and $V(\vec A,\vec B)$ is any standard
equilibrium payoff in the bimatrix game induced by $\vec A, \vec B$, then we
have:
\begin{equation}\label{eqn:zero-sum-bound}
    V(\vec A,\vec B)\preceq V(\vec A),
\end{equation}
provided that the defender plays a zero-sum equilibrium strategy (induced by
$\vec A$) in both games, the zero-sum game $\vec A$ and the bi-matrix game
$(\vec A,\vec B)$.
\end{lem}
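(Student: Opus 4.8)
The plan is to deduce the bound from the \emph{security (guarantee) property} of a zero-sum equilibrium strategy, the point being that since $(\F,\preceq)$ is a totally ordered set (the Proposition following Definition \ref{def:preference}), the classical ``a minimax strategy guarantees the value'' argument transfers verbatim to the $\preceq$-ordered, distribution-valued payoffs. First I would fix a zero-sum equilibrium $(\vec p^*,\vec q^*)\in\S(AS_1)\times\S(AS_2)$ of the game $\vec A$ --- its existence being the fact recalled in Section \ref{sec:practical-decision-making} --- where $\vec p^*$ is the defender's strategy that, by hypothesis, is used in \emph{both} games, and $V(\vec A)=F(\vec p^*,\vec q^*)$ is the (equilibrium-independent) saddle-point value. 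The defining saddle-point inequalities for $(\vec p^*,\vec q^*)$, which are meaningful precisely because $\preceq$ is total on $\F$, yield the one-sided security statement
\begin{equation*}
  F(\vec p^*,\vec q)\ \preceq\ F(\vec p^*,\vec q^*)\ =\ V(\vec A)\qquad\text{for every }\vec q\in\S(AS_2),
\end{equation*}
i.e.\ once the defender commits to $\vec p^*$, the outcome distribution is $\preceq$-dominated by $V(\vec A)$ no matter how the second player randomizes. (One should note in passing that $F(\vec p^*,\vec q)$ does lie in $\F$, so that $\preceq$ applies to it: by \eqref{eqn:utility-function} it is a convex combination of the $F_{ij}\in\F$, and $\F$ is closed under such mixtures --- the support stays inside $[1,a]$ and the density stays continuous and piecewise polynomial after refining the partition.)

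Next I would pass to the bimatrix game $(\vec A,\vec B)$. By assumption the defender again plays $\vec p^*$; let $\vec q'\in\S(AS_2)$ denote the attacker's component of the standard equilibrium profile under consideration (a best reply to $\vec p^*$ with respect to the true, unknown payoff matrix $\vec B$). By \eqref{eqn:utility-function} the defender's realized loss in that profile is exactly the distribution $F(\vec p^*,\vec q')$, so $V(\vec A,\vec B)=F(\vec p^*,\vec q')$. Since $\vec q'$ is merely \emph{some} element of $\S(AS_2)$, the security statement above applies to it and gives $F(\vec p^*,\vec q')\preceq V(\vec A)$, which is precisely \eqref{eqn:zero-sum-bound}.

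The reason the unknown matrix $\vec B$ never enters the estimate --- and essentially the whole content of the lemma --- is that the security statement holds uniformly over \emph{all} $\vec q\in\S(AS_2)$, so it is irrelevant that $\vec q'$ happens to be a $\vec B$-best reply rather than a $\vec A$-best reply; the defender's commitment to $\vec p^*$ already caps its loss at $V(\vec A)$ against any conceivable adversary. Accordingly, I do not expect a hard step in the argument itself: the only point that really needs care is the invocation of the saddle-point inequalities, because over a merely partial order one could not conclude that $F(\vec p^*,\vec q^*)$ $\preceq$-dominates every $F(\vec p^*,\vec q)$. The proof therefore rests essentially on the earlier-established totality of $\preceq$ on $\F$ (so that maxima over $\vec q$ behave classically) together with the already-cited invariance of $V(\vec A)$ under the choice of zero-sum equilibrium; everything else is a transcription of the standard real-valued argument.
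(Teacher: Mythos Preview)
Your argument is correct and is precisely the route the paper has in mind: the paper does not give a detailed proof at all but simply states that the lemma ``directly follows from the definition of standard Nash equilibria, and is a well known fact; cf.\ \cite{Alpcan2010}.'' Your write-up spells out exactly this security-guarantee argument (the one-sided saddle-point inequality $F(\vec p^*,\vec q)\preceq V(\vec A)$ for all $\vec q$, applied to the attacker's $\vec B$-best reply $\vec q'$), and your explicit remarks on why totality of $\preceq$ is needed and why $\F$ is closed under convex mixtures are useful additions that the paper leaves implicit.
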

Lemma \ref{lem:zero-sum-optimality} directly follows from the definition of
standard Nash equilibria, and is a well known fact; cf. \cite{Alpcan2010} for a more
elaborate discussion. The computation of equilibria in the sense of Lemma \ref{lem:zero-sum-optimality} requires hyperreal arithmetic, but lexicographic Nash equilibria are computable by conventional means only, and the bound in \eqref{eqn:zero-sum-bound} then remains valid w.r.t. a descending order of categories on the loss scale. This is nothing but a risk-averse optimization of worst-case outcomes.

Intuitively, Lemma \ref{eqn:zero-sum-bound} says that the worst case attack
occurs when the adversary's incentives are exactly opposite to our own
interest. In particular, observe that the upper bound
\eqref{eqn:zero-sum-bound} is independent of the adversary's
payoff/incentive structure $\vec B$, and optimizes the chances to suffer worst-case losses (and breaking ties by optimizing the losses in descending order of categories). The irrelevance of $\vec B$ in the
upper bound tells that we do not require any information about the
adversary's intentions or incentives, as $V(\vec A)$ can be determined only
based on the defender's possible losses, and for a lexicographic order of losses, corresponding to a worst-case avoiding defense. Thus, in playing the zero-sum
defense we obtain a baseline security that is guaranteed irrespectively of
how the adversary behaves, provided it acts only within its action set
$AS_2$. The case of unexpected behavior, that is, actions outside $AS_2$,
corresponds to an unforeseen zero-day exploit. A remarkable feature of the
game model using distribution-valued payoffs is its natural account for
such outcomes, which we will further discuss in Section \ref{sec:zero-day}.

Our use of non-standard calculus somewhat limits the practically doable
arithmetic, since for example, divisions in $\HR$ require an ultrafilter to
fully describe $\HR$ (which is unavailable since $\HR$ is defined
non-constructively; see \cite{Robinson1966}). Fortunately, however, these
issues do not apply for our matrix games here, since lexicographic Nash equilibria can
still be computed by fictitious play (FP)
\cite{Robinson1951,Berger2007,Washburn2001} which uses practically doable
$\preceq$-comparisons only. The unpleasant possibility of non-convergent FP
(proposition \ref{prop:fp-failure} warns us about this) is escaped by using
non-parametric (kernel density) models for the payoff distributions, and
using Lemma \ref{lem:derivative-preferences} to decide $\preceq$. The
respective details are laid out in section \ref{sec:derivative-decisions}.
This closes a gap left open in \cite{Rass2015d}.

\subsection{Computing Optimal Defenses (Equilibria)}\label{sec:derivative-decisions}
To avoid proposition \ref{prop:fp-failure} to apply for our \ac{APT}-games,
we need to assure that all loss distributions share the same support (the
counterexample used to prove proposition \ref{prop:fp-failure} relies on
different losses whose supports that are strictly contained in one
another). To this end, we will introduce another representation of a loss
distribution as a sequence, so that the lexicographic ordering $<_{lex}$ on
the new sequence equals the $\preceq$-ordering of definition
\ref{def:preference}.

To make things precise, let us assume that a particular empirical loss
distribution $\hat L_{ij}$ has been compiled from the data
$x_1,x_2,\ldots,x_N$, as obtained from simulations or expert
questionnaires. Moreover, assume that $\hat L_{ij}$ is categorical, so that
the underlying data points (answers) can be ranked within a finite range
(in the example in Fig \ref{fig:workflow}, we have three categories, e.g.,
\{``low'', ``medium'', ``high''\}, which correspond to the ordered ranks
$\set{1,2,3}$). The case when the loss distributions is continuous is in
fact even simpler, and discussed later in remark
\ref{rem:continuous-losses}. For now, let us stick with the expectedly more
common practical case where risk assessments are made in categories rather
than hard figures.

First, the empirical distribution (normalized histogram) $\hat L_{ij}$ is
replaced by a \ac{KDE} using Gaussian kernels of a fixed bandwidth to
define
\begin{equation}\label{eqn:kernel-approximation}
\tilde f_{\hat L_{ij}}(x) = \frac 1{N\cdot h}\sum_{k=1}^N K\left(\frac{x_k - x}h\right),
\end{equation}
where
\begin{itemize}
  \item $K(x)=\frac 1{\sqrt{2\pi}}\exp\left(-\frac 1 2 x^2\right)$ is the
      standard Gaussian function,
  \item and $h>0$ is a bandwidth parameter that can be estimated using
      (any) standard statistical rule (of thumb, e.g., Silverman's
      formula \cite{Silverman1998}).
\end{itemize}
Although any such nonparametric estimation can be quite inaccurate, yet as
the data on which it is based is subjective anyway, the additional
approximation error may not be as significant (nor in any sense
quantifiable; still, Nadaraja's theorem (see \cite{Wand1995}) would assure
that a continuous unknown underlying loss distribution would be
approximated arbitrarily well in probability, as $N\to\infty$, provided
that $h$ is chosen as $h(N)=c\cdot N^{-\alpha}$ for any two constants $c>0$
and $0<\alpha<1/2$.)

Using the \ac{KDE} \eqref{eqn:kernel-approximation}, we can cast the
distribution-valued game back into a regular matrix-game over the reals.
Note that in choosing Gaussian kernels, we naturally extend all density
functions to the entirety of $\R$. Such distributions would not be losses
in the sense of definition \ref{def:loss}, as the supports are no longer
bounded.

Using the aforementioned risk acceptance threshold $a>1$ to truncate all
loss distributions at $a$, casts the loss distributions into the proper
form. We can then expand the (truncated) loss density $f_{\hat L_{ij}}$
into a Taylor series for every scenario $(i,j)\in AS_1\times AS_2$. To ease
notation in the following, let us drop the double index and simply write
$\tilde{f}$ to mean the kernel density approximation of the empirical
distribution in the given scenario, based on $N$ data samples. Then, its
Taylor series expansion at point $a$ is
\begin{equation}\label{eqn:kde-derivative}
    \tilde f(x) =\sum_{k=0}^\infty \frac{(x-a)^k}{k!}\tilde f^{(k)}(a),
\end{equation}
which converges everywhere on $[1,a]$ by our choice of the Gaussian kernel.
The $k$-th inner derivative is obtained from the kernel density definition
\eqref{eqn:kernel-approximation} as,
\begin{equation}\label{eqn:kernel-derivative}
    f^{(k)}(x)=\frac 1{N\cdot h^2\cdot \sqrt{2\pi}}\sum_{j=1}^N \frac{d^k}{dx^k}\exp\left(-\frac 1{2h^2}(x_j - x)^2\right)
\end{equation}
Here, our use of Gaussian density pays a second time, since the $k$-th
derivative of the exponential term can be expressed in closed form using
Hermite polynomials by exploiting the relation
\begin{equation}\label{eqn:gauss-derivatives}
    (-1)^k\exp\left(\frac{x^2}2\right)\frac{d^k}{dx^k}\exp\left(-\frac{x^2}2\right)=2^{-\frac k 2}H_k\left(\frac x{\sqrt 2}\right),
\end{equation}
in which $H_k(x)$ is the $k$-th Hermite polynomial, defined recursively as
$H_{k+1}(x) := 2xH_k(x)-2 H_{k-1}(x)$ upon $H_0(x)=1$ and $H_1(x)=2x$.
Plugging \eqref{eqn:gauss-derivatives} into \eqref{eqn:kernel-derivative},
and after rearranging terms, we find

\begin{equation}\label{eqn:density-closed-form}
f^{(k)}(x) =
\frac 1{N\sqrt{\pi}}\frac{(-1)^k}{(h\cdot\sqrt{2})^{k+1}}
\times\sum_{j=1}^n\left[H_k\left(\frac{x-x_j}{h\sqrt 
2}\right)\cdot\exp\left(-\frac{(x-x_j)^2}{2h^2}\right)\right]
\end{equation} 

Evaluating the derivatives up to some order and substituting the values
back into \eqref{eqn:kde-derivative}, we could numerically construct the
kernel density estimator. Fortunately, there is a shortcut here to avoid
this, if we use the vector of derivatives with alternating signs to
represent the Taylor-series expansion, and in turn the \ac{KDE}, by
\begin{equation}\label{eqn:series-representation}
    \tilde f_{\hat L_{ij}} \simeq \left((-1)^k f^{(k)}_{\hat L_{ij}}(a)\right)_{k=0}^\infty = (y_0, y_1, y_2, \ldots)\in\R^{\infty},
\end{equation}
where the entries of the sequence can be computed from
\eqref{eqn:density-closed-form}.

Interestingly, under the assumptions made (i.e., truncation at a point
$1<a\in\R$ and approximating the empirical distribution by a Gaussian
\ac{KDE}), the \emph{lexicographic order} on the series representation
\eqref{eqn:density-closed-form} \emph{equals} the preference order
$\preceq$ on the hyperreal representation of the loss distribution (see
Lemma \ref{lem:derivative-preferences} in the appendix for a proof). That
is, we can decide $\preceq$ between two sequence representations
$(y_k)_{k=0}^{\infty}$ and $(z_k)_{k=0}^{\infty}$ of the form
\eqref{eqn:series-representation} as follows:
\begin{itemize}
  \item If $y_0<z_0$, then $L_1\preceq L_2$. If $y_0 > z_0$, then
      $L_2\preceq L_1$. Otherwise, $y_0=z_0$, and we check
  \item if $y_1<z_1$, then $L_1\preceq L_2$. If $y_1>z_1$, then $L_2\preceq
      L_1$. Otherwise, $y_1 = z_1$, and we check
  \item if $y_2<z_2$, then $L_1\preceq L_2$, etc.
\end{itemize}

\begin{rem}[Continuous loss
models]\label{rem:continuous-losses} If a continuous loss model is
specified, differentiability may be not be an issue if the density $f$ has
derivatives of all orders. Otherwise, we can convolve $f$ by a Gaussian
density $k_h$ with small variance $h>0$ to get an approximation $\tilde
f=f\ast k_h\in C^\infty$ at any desired precision. Kernel density estimates
are exactly such convolutions and thus provide convenient differentiability
properties here.
\end{rem}

Experimentally, we observed that in many cases the preference decision
$\preceq$ can be made already using the first value $f(a)$ in the sequence
\eqref{eqn:series-representation}. If the decision cannot be made (upon a
tie $f_{L_1}(a)=f_{L_2}(a)$), then we can move on to the first order
derivative, and so on. Thus, we technically do fictitious play in parallel
on a ``stack'' of matrix games $\vec A_0, \vec A_1, \vec A_2,
\ldots\in\R^{\abs{AS_1}\times \abs{AS_2}}$, where the $k$-th matrix is
constructed (only on demand) with the $k$-th entry of the sequence
representation \eqref{eqn:series-representation}. The selection of
strategies is herein always made on the first game matrix $\vec A_0$,
looking at the others only in cases where the decision cannot be made
directly within $\vec A_0$ (see Fig \ref{fig:game-stack} for an
illustration). Since we are now back at a regular matrix game, the usual
convergence properties of fictitious play are restored.

Of course, we cannot run fictitious play on an infinite stack of matrix
games, so we are necessarily forced to restrict attention to a finite
``sub-stack''. However, depending on how ``deep'' the stack is made, we can
reach a lexicographic equilibrium at arbitrary precision. This is made rigorous in the
following definition:

\begin{defn}[Approximate Equilibrium]\label{def:approximate-equilibrium}
Let $\eps>0,\delta>0$ be given, and let $\vec A=\F^{n\times m}$ be a
zero-sum matrix game with distribution-valued payoffs. We call a strategy
profile $(\tilde{\vec p}^*,\tilde{\vec q}^*)\in\R^{n+m}$ an
$(\eps,\delta)$-approximate equilibrium, if there is an equilibrium $(\vec
p^*,\vec q^*)\in\R^{n+m}$ (in the zero-sum game $\vec A$) such that both of
the following conditions hold:
\begin{enumerate}
  \item $\norm{(\tilde{\vec p}^*,\tilde{\vec q}^*)-(\vec p^*,\vec
      q^*)}_{\infty}<\eps$, and
  \item $\norm{\tilde{F}^* - F^*}_{L^1}<\delta$,
\end{enumerate}
where the equilibrium payoffs $\tilde{F}^*$ and $F^*$ are defined by
\eqref{eqn:utility-function} upon the equilibrium and its approximation. 
\end{defn}

\begin{rem} By the equivalence
of norms on $\R$, the $\infty$-norm in condition 1 can be replaced by any
other norm upon, as long as the value of $\eps$ is chanced accordingly.
\end{rem}
\begin{rem}
The continuous dependence of $F(\vec p,\vec q)$ on $(\vec p,\vec q)$ (in
      terms of the topologies on $\R^{n+m}$ and $L^1$; the latter space
      being the one where the payoff distributions live in) as implied by
      \eqref{eqn:utility-function} leads to thinking that letting $\eps\to
      0$ would also cause $\delta\to 0$. This impression is not necessarily
      true, since the $\eps$-condition in definition
      \ref{def:approximate-equilibrium} can be taken as a convergence
      criterion when iteratively computing equilibria, while the payoff
      distributions can be approximated at a precision that is independent
      of this value $\eps$. Indeed, the goodness of approximation is
      controlled by the amount of data and the parameter $h$ chosen for the
      \ac{KDE} or the mollifier (cf. remark \ref{rem:continuous-losses}),
      and as such is independent of $\eps$. Hence, asking for a
      $\delta$-deviation for the payoff distribution accounts for a
      nontrivial degree of freedom here.
\end{rem}

While existence of equilibria in mixed strategies for all finite games
$\vec A\in\F^{n\times m}$ is assured (see \cite{Rass2015d}), the existence
of an approximate equilibrium is not as obvious. In fact, the actual use of
an approximate equilibrium is to find it within the set of regular matrix
games, so that games with payoffs from $\F$ can be solved for equilibria
just like a standard matrix game would be treated. Clearly, since
$\R\subset\HR$ and by virtue of the embedding $\phi$, the set of matrix
games $\vec A\in\R^{n\times m}$ forms a subclass of games of the form $\vec
A\in\HR^{n\times m}$, which itself covers games of the form $\vec
A\in\F^{n\times m}$ by virtue of $\phi$. A practical method to solve games
over $\F$ is obtained by approximating the solution from the inner set of
equilibria in matrix games with real-valued payoffs. This is the main
theorem of this work, whose proof is delegated to the appendix.

\begin{thm}[Approximation Theorem]\label{thm:main}
For every $\eps>0, \delta>0$ and every zero-sum matrix game $\Gamma_1 =
\vec A\in\F^{n\times m}$ with distribution-valued payoffs, there is another
zero-sum matrix game $\Gamma_2=\vec B\in\R^{n\times m}$ so that an
equilibrium in $\Gamma_2$ is an $(\eps,\delta)$-approximate lexicographic equilibrium in
$\Gamma_1$.
\end{thm}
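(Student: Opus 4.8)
The plan is to reduce the game $\Gamma_1=\vec A\in\F^{n\times m}$, in four steps, to an ordinary real matrix game $\Gamma_2=\vec B\in\R^{n\times m}$ whose equilibria --- reachable by convergent fictitious play --- meet both conditions of Definition~\ref{def:approximate-equilibrium}. First I would smooth the payoff densities and put them on a common support; second, translate each payoff into the derivative sequence~\eqref{eqn:series-representation}, for which $\preceq$ coincides with the lexicographic order $<_{lex}$ by Lemma~\ref{lem:derivative-preferences}; third, truncate that sequence at a finite depth $D$, producing a finite stack $\vec A_0,\dots,\vec A_D\in\R^{n\times m}$ of real matrix games; and fourth, collapse the stack into a single scalar-weighted game $\vec B=\sum_{k=0}^D\lambda^k\vec A_k$ whose equilibrium, for $\lambda>0$ small, lies $\eps$-close to a lexicographic equilibrium of $\Gamma_1$.

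\emph{Smoothing and reduction to a finite real stack.} Fix the risk acceptance threshold $a>1$ so large that $\Pr(L_{ij}>a)<\delta/(4nm)$ for all $i,j$, and replace each $F_{ij}$ by $\tilde F_{ij}=(F_{ij}\ast k_h)|_{[1,a]}$, the convolution with a Gaussian mollifier $k_h$ of bandwidth $h$, truncated to $[1,a]$ and renormalised (cf.\ Remark~\ref{rem:continuous-losses}). Then each $\tilde f_{ij}$ is real-analytic and strictly positive on $[1,a]$, all payoffs share the support $[1,a]$ --- so Proposition~\ref{prop:fp-failure} no longer applies --- and they have exactly the form required by Lemma~\ref{lem:derivative-preferences}. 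Write $\vec A'=(\tilde F_{ij})$ and form, for every $(i,j)$, the sequence $\vec s_{ij}=\bigl((-1)^k\tilde f_{ij}^{(k)}(a)\bigr)_{k\ge0}$, available in closed form through~\eqref{eqn:density-closed-form}. Differentiation being linear, the mixed payoff $F'(\vec p,\vec q)=\sum_{i,j}p_iq_j\tilde F_{ij}$ has series representation $\vec s(\vec p,\vec q)=\sum_{i,j}p_iq_j\,\vec s_{ij}$, so Lemma~\ref{lem:derivative-preferences} gives, for all strategy profiles, $F'(\vec p,\vec q)\preceq F'(\vec p',\vec q')\iff\vec s(\vec p,\vec q)\le_{lex}\vec s(\vec p',\vec q')$. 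Solving the stack $(\vec A_k)_{k\ge0}$, with $(\vec A_k)_{ij}:=(-1)^k\tilde f_{ij}^{(k)}(a)$, level by level --- optimise $\vec A_0$, then $\vec A_1$ over the $\vec A_0$-optimal strategy sets, and so on --- yields a weakly decreasing chain of optimal faces inside the fixed-dimensional set $\S(AS_1)\times\S(AS_2)$; since the dimensions of these faces form a non-increasing sequence of non-negative integers, the chain stabilises after finitely many steps, at some depth $D$, and $(\vec A_0,\dots,\vec A_D)$ already determines the lexicographic equilibrium set of $\vec A'$.

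\emph{Scalarisation and error bookkeeping.} Put $\Gamma_2=\vec B:=\sum_{k=0}^D\lambda^k\vec A_k\in\R^{n\times m}$ for a small parameter $\lambda>0$. The entries $(\vec A_k)_{ij}$ being uniformly bounded, this is the classical weighting (penalty) scalarisation of a lexicographic min--max problem: as $\lambda\to0^+$ the zero-sum value of $\vec B$ expands as $v(\vec A_0)+\lambda v_1+o(\lambda)$, with $v_1$ the value of $\vec A_1$ over the $\vec A_0$-optimal strategy sets, and iterating this, the optimal-strategy sets of $\vec B$ converge in Hausdorff distance to the lexicographic optimal set of the stack, i.e.\ to the lexicographic equilibrium set of $\vec A'$. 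Hence there is $\lambda_0>0$ such that for $0<\lambda<\lambda_0$ every equilibrium $(\tilde{\vec p}^*,\tilde{\vec q}^*)$ of $\Gamma_2$ is within any prescribed distance of some lexicographic equilibrium $(\vec p^*,\vec q^*)$ of $\vec A'$. Moreover, one checks that the mollified derivative stack reproduces, for $h$ small, exactly those $\preceq$-comparisons of $\vec A$ that govern the equilibrium: comparisons between payoffs whose original supports reach differently far are already settled by Lemma~\ref{lem:preference-decisions} (the density of shorter reach vanishes near $a$ while the other does not), and comparisons between payoffs of equal reach are preserved for $h$ small; so $(\vec p^*,\vec q^*)$ is, up to a further arbitrarily small distance obtained by shrinking $h$, a lexicographic equilibrium of $\Gamma_1$. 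Finally, evaluating $(\tilde{\vec p}^*,\tilde{\vec q}^*)$ in $\Gamma_1$ through~\eqref{eqn:utility-function}, its $L^1$ distance to the corresponding equilibrium payoff $F^*$ of $\Gamma_1$ is at most the mollification error ($<\delta/2$, by the choice of $a$ and $h$) plus the $L^1$-Lipschitz variation of $(\vec p,\vec q)\mapsto F(\vec p,\vec q)$ over a ball of the radius by which $(\tilde{\vec p}^*,\tilde{\vec q}^*)$ differs from $(\vec p^*,\vec q^*)$; choosing that radius $\le\min\{\eps,r\}$, with $r$ so small that the said variation stays $<\delta/2$ (attained by taking $\lambda$ small), makes condition~1 of Definition~\ref{def:approximate-equilibrium} hold with the given $\eps$ and condition~2 with the given $\delta$. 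Since $\vec B\in\R^{n\times m}$, fictitious play on it converges, so Proposition~\ref{prop:fp-failure} is circumvented.

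\emph{Expected main obstacle.} The delicate part is the interplay of three limits that the bookkeeping above glosses over: (a)~the mollification $h\to0$, needed to make the densities analytic and commonly supported, but which perturbs the payoffs --- and since $\preceq$ is tail-sensitive, an arbitrarily small $L^1$ perturbation can in principle move the $\preceq$-optimum, so one must genuinely verify that the mollified derivative stack preserves precisely the finitely many $\preceq$-comparisons relevant to the equilibrium of the finite game; (b)~the truncation at depth $D$, resting on the face-stabilisation argument, which must be made airtight; and (c)~the scalarisation $\lambda\to0^+$, which needs Hausdorff-convergence of the optimal faces of $\sum_k\lambda^k\vec A_k$ to the lexicographic optimal face --- standard for the value but subtler for the strategy sets beyond the first level. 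The Hermite-based closed forms of Step~2 and the remaining $L^1$ estimates are, by comparison, routine.
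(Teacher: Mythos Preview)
Your proposal is viable and shares its first two stages with the paper --- mollification to $C^\infty$ on a common support $[1,a]$, then passage to the derivative sequence~\eqref{eqn:series-representation} so that $\preceq$ becomes $<_{lex}$ via Lemma~\ref{lem:derivative-preferences} --- but diverges at the step of collapsing the lexicographic stack into a single real matrix game. You scalarise by the geometric weighting $\vec B=\sum_{k=0}^D\lambda^k\vec A_k$, bounding $D$ by a face-stabilisation argument and then sending $\lambda\to0^+$; this is the standard penalty approach from lexicographic programming. The paper instead rounds each coefficient $(-1)^k\tilde f_{ij}^{(k)}(a)$ to a fixed number of binary digits, pads to a common width, and \emph{concatenates} the bitstrings into a single real number $c_{ij}=0.\tilde b_0\|\tilde b_1\|\cdots\|\tilde b_K$; by construction the numeric order on the $c_{ij}$ is \emph{exactly} the lexicographic order on the rounded coefficient vectors, so no limit in $\lambda$ is needed, and the depth $K$ is chosen purely for $L^1$ Taylor accuracy (one of four $\delta/4$ budgets: mollification, truncation, Taylor, rounding) rather than for equilibrium determination. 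The concatenation trick thus eliminates precisely the obstacle you flag in~(c): there is no parametric saddle-point analysis and no Hausdorff convergence of optimal faces to establish. Your route is more principled and keeps the linear structure intact, but the expansion $v(\vec A_0+\lambda\vec A_1)=v(\vec A_0)+\lambda v_1+o(\lambda)$, with $v_1$ the value of $\vec A_1$ restricted to the $\vec A_0$-optimal strategy sets, is a genuine statement about parametric minimax that you would still have to prove; the paper simply sidesteps it.
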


\begin{figure}
  \includegraphics[scale=0.7]{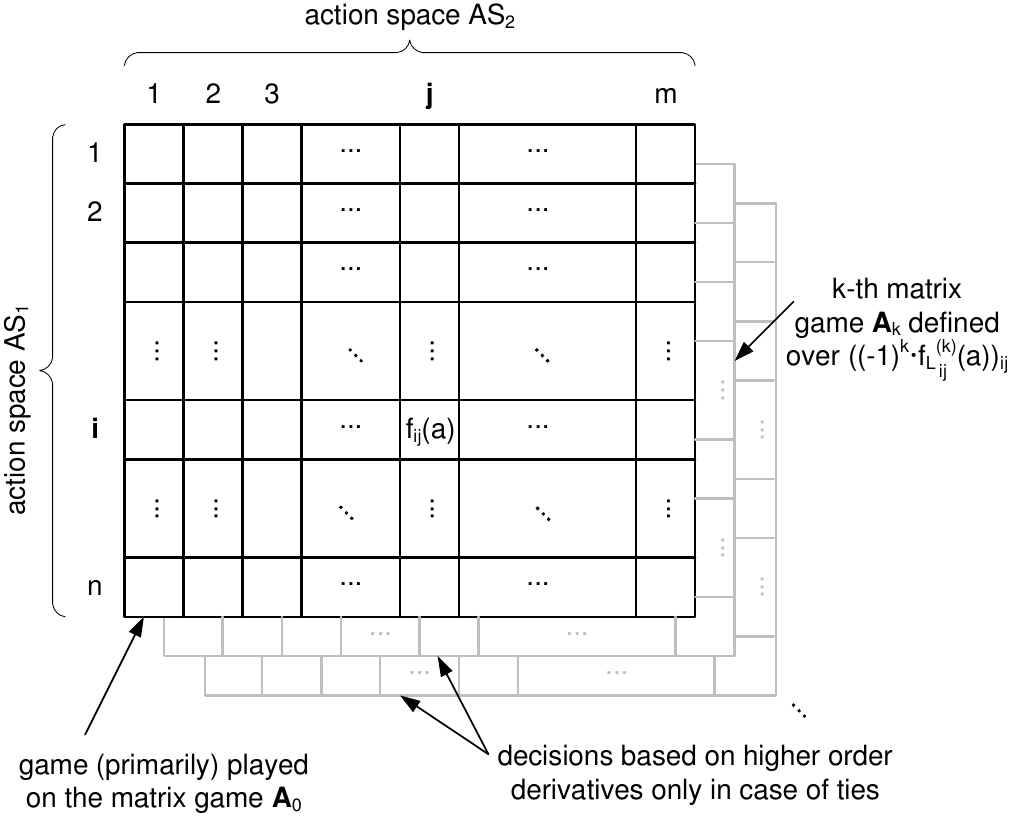}
  \caption{\textbf{Applying Fictitious Play}}\label{fig:game-stack}
\end{figure}

\subsection{Zero-Day Exploits}\label{sec:zero-day}
As a matter of consequence from the \ac{KDE} approximation of empirical
densities using Gaussian kernels, the approximate density in any case is
supported on the entire real line. That is, the density function
\eqref{eqn:kernel-approximation} assigns positive likelihood to the entire
range $(a,\infty)$, where $a$ is again our risk acceptance threshold. For the
specific scenario $(i,j)$, this also means that positive likelihood is
assigned to losses in the range $Z=(\max_k\set{x_k},\infty)$, where
$x_1,\ldots,x_N$ are the observations upon which our empirical loss
distribution is based, and $Z$ is the range of losses that were never
observed. These are, by definition, exactly the events of zero-day exploits.
More importantly, losses in $Z$ have -- by construction -- a positive
likelihood to occur in scenario $(i,j)$ under the approximate \ac{RV}
$\tilde{L}_{ij}$ with density function \eqref{eqn:kernel-approximation}.

In other words, no matter of whether or not we explicitly sought to model
zero-day exploits, they are automatically (implicitly) taken into account
by the loss density approximation technique laid out in Section
\ref{sec:practicalities}. The specific likelihood for a zero-day exploit,
as based on the information available, is simply the mass assigned to $Z$
under $\tilde{f}_{\hat L_{ij}}$ (as defined in
\eqref{eqn:kernel-approximation}). This value increases, the more
observations on higher losses are available, say, if more experts expect
higher damages to occur. In that case, the kernel density estimate will put
more mass on this area, thus fattening the tails of the \ac{KDE}
approximation \eqref{eqn:kernel-approximation}. Fig
\ref{fig:equilibrium-loss} will later display the lexicographic equilibrium outcome of an
example \ac{APT}-game model, showing the ``zero-day area'' in gray.

\begin{figure}[h!]
  \includegraphics[scale=1]{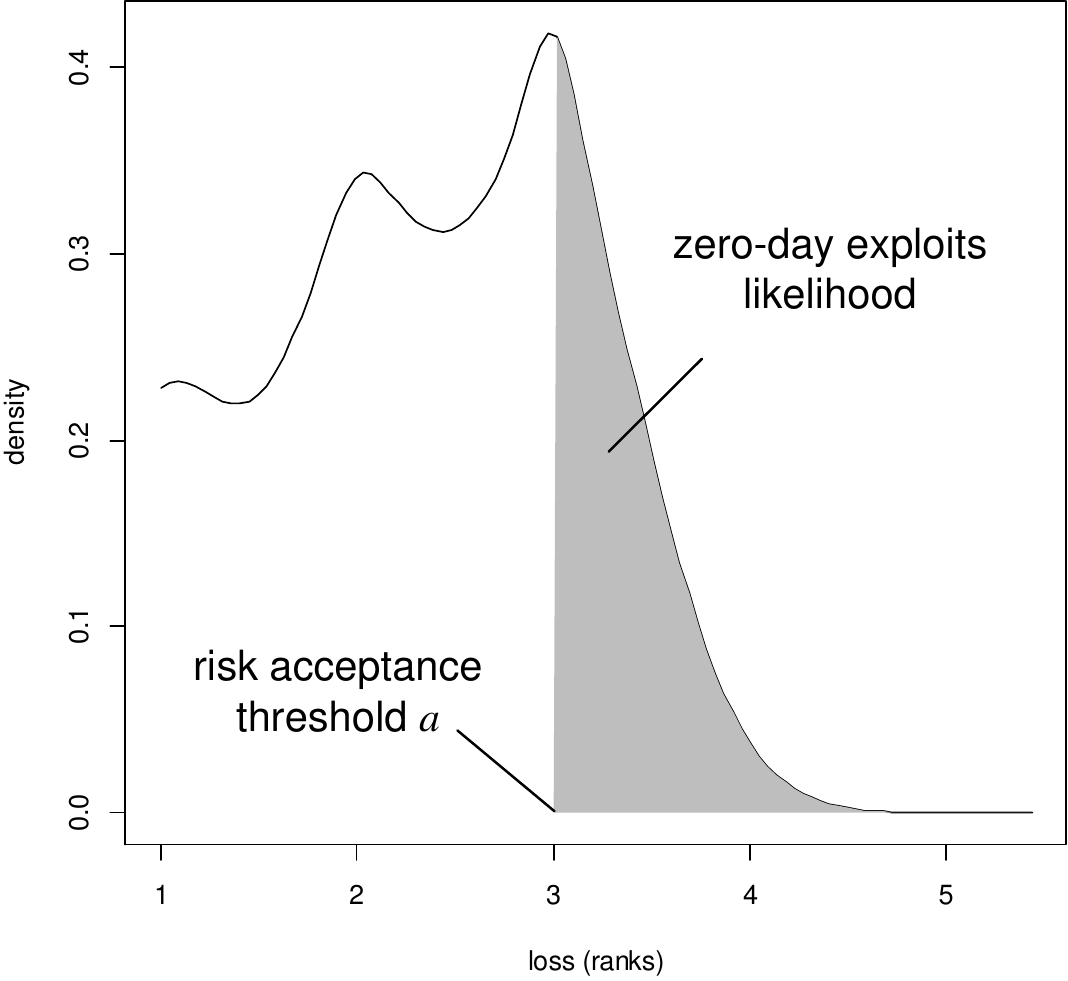}
  \caption{\textbf{Equilibrium loss distribution for the example \ac{APT} mitigation game}}\label{fig:equilibrium-loss}
\end{figure}

In connection with the design of our games to yield optimal behavior that
minimizes the chances for high losses, the likelihood for zero-day exploits
is then automatically minimized, since the $\preceq$-optimal decisions are
those that have all their masses shifted towards lowest damages as much as
possible. Therefore, practically, we can adjust our modeling account for
zero-day exploits by adding more pessimistic observations to the data sets
from which we construct the loss distributions. But from that point
onwards, the construction automatically considers extreme events in the way
as we want it without further explicit considerations.

\section{Generalizations and Special Cases}\label{sec:generalizations}

The \ac{APT} modeling can be generalized to deal with multiple relevant
interdependent aspects, such as different security goals (confidentiality
vs. integrity vs. availability, etc.) but also taking costs into account
(such as, for example, if the attack graph is enriched with information on
how much an exploit would cost the attacker, or the level of skills
required to mount an attack). By virtue of our embedding into the
hyperreals, all the known theory from multi-criteria game theory carries
over to this setting. Specifically, multi-criteria games as have been
studied by \cite{Voorneveld1999,Lozovanu2005,Rass2012,Rass2014} can be
analyzed in exactly the analogous way in our setting. This has been done in
\cite{Rass2015a}, so we confine ourselves to only sketching the approach
here.

\subsection{Multiple Goals and Optimal Tradeoffs}\label{sec:tradeoffs}
Suppose for simplicity that only a choice between finitely many options is
to be made, where risk is decreased upon increasing investments.
\begin{exa}[Uninterruptible power supplies]\label{exa:usv}
Imagine that a company ponders about installing additional power-supplies
to cover for outages. Depending on how many such systems are available (in
different subsidiaries of the company), the risk for an outage will
obviously decrease. Equally clear is the increase of costs per additional
system, so putting both of these outcome measures into a graph, we end up
with finding the optimal tradeoff somewhere in the middle. The question is
now how to find the optimum algorithmically.
\end{exa}

Formally and generally, let $F_{ij}^1, F_{ij}^2,\ldots, F_{ij}^d$ be
different measures of loss that all need to be accounted for in the
scenario $(i,j)\in AS_1\times AS_2$, and let
$\alpha_1,\alpha_2,\ldots,\alpha d\in (0,1)$ be weights assigned to these
losses (identically for all $i,j$). Practically, such losses could concern
(among others):
\begin{itemize}
  \item confidentiality of information (loss distributions $F_{ij}^1$),
  \item availability of systems (loss distributions $F_{ij}^2$),
  \item security investments/costs to run defense actions (loss
      distributions $F_{ij}^3$),
  \item etc.
\end{itemize}

Then, \cite{Lozovanu2005} has shown that a vector-valued loss referring to
multiple interdependent criteria can be converted into a simple
(single-criteria) loss by taking
\begin{equation}\label{eqn:convex-combination}
    F_{ij} := \alpha_1\cdot F^1_{ij} +\alpha_2\cdot F^2_{ij} + \ldots + \alpha_d\cdot F^d_{ij},
\end{equation}
into the optimization. Technically, the convex combination takes us to the
Pareto-front of admissible actions, and the resulting optima and
Nash-equilibria are understood in terms of Pareto-optimality. In
\eqref{eqn:convex-combination}, the addition is understood pointwise on the
distribution functions, which is mathematically justified since the
expectation operator is linear w.r.t. the distributions over which it is
computed (hence, the moment sequences arise in the proper form).
Consequently, multicriteria game theory as studied in
\cite{Voorneveld1999,Lozovanu2005,Rass2012} applies \emph{without change}
here.

The only technical constraint that applies here is that all $F_{ij}^k$ for
all $i,j,k$ must have the same support $\Omega\subset\R$. That is, we must
measure the loss in a common scale, for otherwise, the above scalarization
does not make sense.

\paragraph{Combining Losses of Different Nature}
Different goals may be measured in individual scales (such as monetary loss
being expressed as a number, loss of public reputation expressed in a
nominal scale like ``low/medium/high confidence'' or loss of customers
being an integer count). To harmonize these towards making the convex
combination \eqref{eqn:convex-combination} meaningful, we need to cast all
these scales into a common scale $\Omega$ over the reals. While there is no
general method to do this, practical heuristics to achieve this mainly do
two steps:
\begin{enumerate}
  \item Define an ordered set of fixed loss categories that shall apply
      for all goals of interest, and define the understanding of each
      category \emph{individually per goal}.
  \item Map all concrete (e.g., numeric) losses into the so-defined
      categories.
\end{enumerate}
This approach is, for example, followed in many national risk management
standards, such as
\cite{DeutscherBundestag2014,Winehav&Florin&Lindstedt2012,Hohl&Brem&Balmer2013}

Picking up example \ref{exa:usv}, let us assume that we seek to decide
between installing between zero and five auxiliary power supply systems.
Letting the priorities to be ``security:costs = 60:40'' (i.e.,
$\alpha_1=0.6,\alpha_2=0.4$), we find the optimal tradeoff to be between
two and three additional power supplies. Finding the actual optimum is then
a simple matter of comparing calculations for 2 and 3 power supplies since
all other options have been eliminated; see Fig \ref{fig:optimal-tradeoffs}
(the computation on loss distributions would be identical, only taking a
pointwise weighted sum of the distributions; only the resulting plot would
be much less illustrative than the shown picture of real-valued functions).

\begin{figure}
  \includegraphics[scale=0.7]{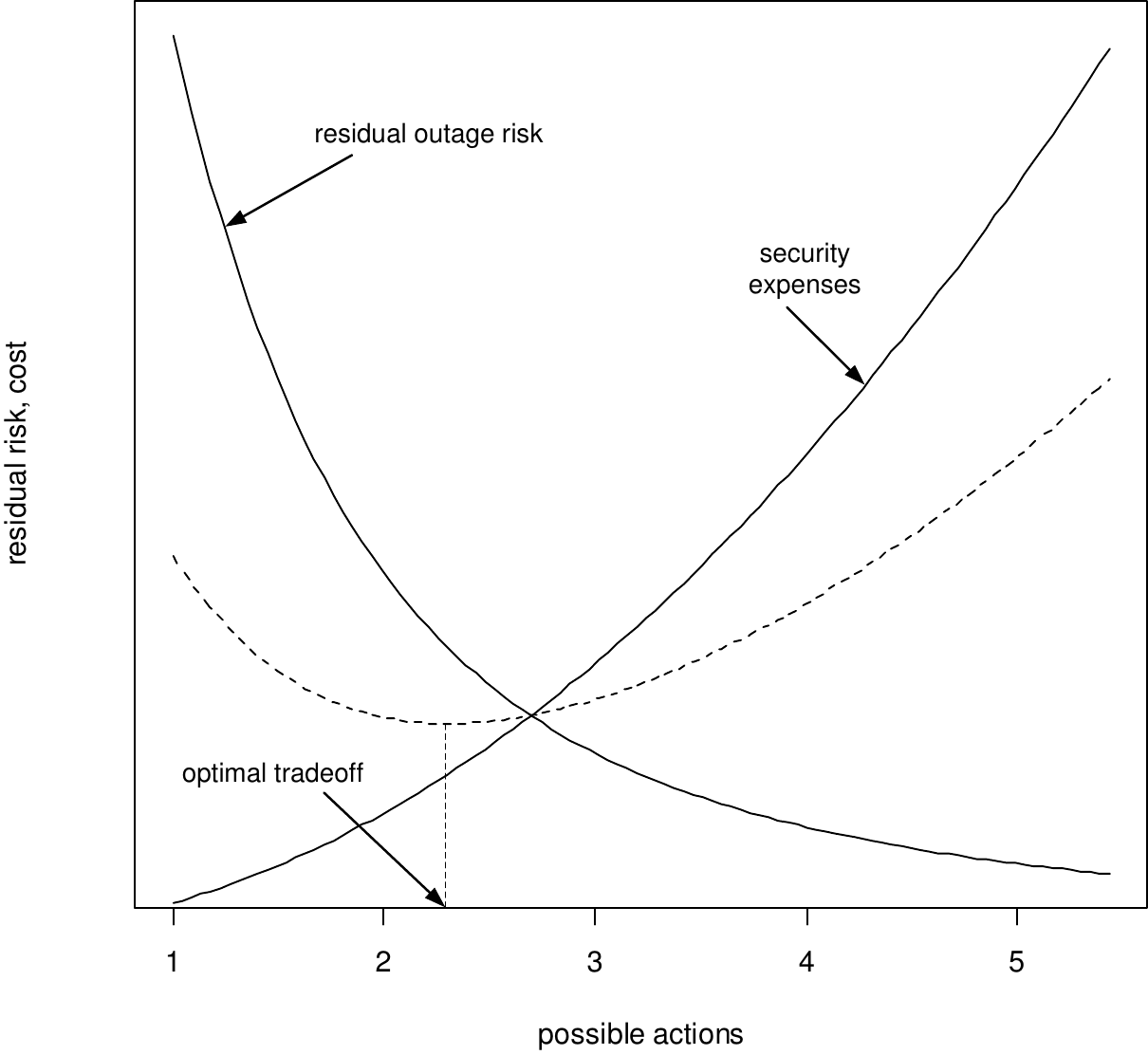}
  \caption{\textbf{Optimal Tradeoffs (simple case)}}\label{fig:optimal-tradeoffs}
\end{figure}

\subsection{Optimizing a (Permanent) Security Design}
Since the concept of Nash equilibrium on which optimal choices are based
assumes randomized (and thus repeated) actions, the question on how to apply
the optimization to ``permanent'' countermeasures (that are not repeated
actions in the classical sense) naturally appears. Indeed, this case boils
down to a special case of a game in which the action space of player 2 is
singleton, $\abs{AS_2}=1$, and we only look at the performance of different
actions, all of which may be static. Let us give two practical examples:

\begin{exa}[installing anti malware systems]\label{exa:anti-malware}
Installations of anti-virus software is a standard precaution, however,
given inconsistencies between reported performances and the diversity of
threats, it is often advisable to install several anti malware precautions.
No system ships with a guaranteed detection rate, and different systems may
be differently fast in identifying and blocking threats from the outside.
The decision must again be based on empirical data (reported recognition
rates) and just installing the full palette of available software will not
necessarily increase the protection. Thus, there is an eventual tradeoff
between security and cost, for which an optimum has to be found (see
Section \ref{sec:tradeoffs}).
\end{exa}

\begin{exa}[hiring security guards]\label{exa:guards}
As with example \ref{exa:usv} and as illustrated in Fig
\ref{fig:optimal-tradeoffs}, hiring more security guards will increase
likelihoods to catch an intruder, but also will increase costs. This is a
case where one performance indicator is random (the chances to keep
intruders outside), while the other is quite deterministic (the guard's
salaries). As before, the sought decision is a mere $\preceq$-optimal
selection among finitely many choices.
\end{exa}

Generally, in both examples, the problem is to find an optimal choice among
our options in $AS_1$, while the effects of a choice are not determined by
an adversarial move in the game, but rather up to our own assessment
(formally, we can simulate this by defining $AS_2$ to be singleton and
contain an abstract (not further specified) action). The technical issue
illustrated in both examples is the problem of $\preceq$-comparing
\emph{deterministic} to \emph{random} outcomes. This is indeed easily
possible and has been formally proven in \cite{Rass2015g} to work as
follows: let $a$ be a real value and let $B$ be a real-valued \ac{RV},
whose distribution is $f_B$ and is supported on $[1,b]$.

\begin{itemize}
  \item If $a\leq b$, then we $\preceq$-prefer $a$ (intuitively, this is
      because having a constant outcome $a$ is more secure to rely on than
      on a random effect $B$).
  \item If $a>b$, then we $\preceq$-prefer $B$ (intuitively, since it in
      any case admits less damage than $a$, whose damage is guaranteed).
  \item If $a=b$, then we prefer $B$ (intuitively, this is because $B$
      admits damages less than $a$, while the other action entails a
      guaranteed loss $\geq$ anything that can happen under $B$).
\end{itemize}

This procedure works on single criteria decisions only. If multiple
criteria are to be taken into account, then we must again resort to a
kernel density approximation to uniformly represent all values as \acp{RV}
(thus, adding a controllable/reasonable degree of uncertainty) to the
variable $a$, to be able to use the lexicographic comparisons on the convex
combination of utility functions as sketched in Section
\ref{sec:tradeoffs}.

\section{Example Application}\label{sec:concrete-example}
Continuing our running example from Section \ref{sec:running-example}, much
of the modeling has already been done along the vulnerability analysis
described in Section \ref{sec:running-example}. Indeed, the action sets
$AS_1$ and $AS_2$ are already available in Table \ref{tbl:as1} and Table
\ref{tbl:as2}, which makes the game a $4\times 8$-matrix over yet to be
specified outcomes. To simplify matters of demonstration here, let us use
only two strategies out of the sets $AS_1, AS_2$, leaving the full case of
our example or more extensive lists of attacks and countermeasures as an
obvious matter of scaling the matrix to a larger shape. The process of
finding the optimal risk mitigation strategy, however, remains unchanged
between a $2\times 2$- and an $n\times m$-game, so we will illustrate the
results on the smaller example without loss of generality. We stress that
even though $2\times 2$-games admit closed form solutions over $\R$, the
same formula in $\HR$ holds but cannot be practically evaluated in lack of
an explicit ultrafilter (alas, the existence of $\UF$ is assured only
non-constructively). Our chosen strategies are in abbreviated form shown in
Table \ref{tbl:attack-selection}.

\begin{table}[h!]
\caption{\textbf{Selected Strategies for the Example}}\label{tbl:attack-selection}
\begin{tabular}{|l|l|}
  \hline
  \textbf{Attack strategies} & \textbf{Defense actions} \\\hline
  $a_1$: buffer overflow exploits & $d_1$: patching \\
  $a_2$: remote access exploits & $d_2$: deactivation of services \\
  \hline
\end{tabular}
\end{table}

Note that these chosen attacks are indeed generic, as buffer overflows and
remote access is part of every attack listed in Table \ref{tbl:as2}. In
fact, this modeling appears practically reasonable, since the respective
countermeasure may ``break'' the attack vector at any stage (as long as it
breaks it at all). However, depending on how deeply the attacker has
penetrated the system already, the respective countermeasures may not be
effective any more. In light of this, let us follow the procedure outlined
in Section \ref{sec:apt-games} and assume that we have asked a group of six
domain experts about their opinions on the effectiveness of
countermeasures. With answers given in qualitative terms of saying that the
residual \emph{risk after mitigation} is either (H)igh, (M)edium or (L)ow,
assume that the answers were as listed in Table
\ref{tbl:example-assessments} (using the abbreviations from Table
\ref{tbl:attack-selection}). The concrete rating of the risk can be based
on the graph-theoretic distance between the attacker and its goal (as
discussed in Section \ref{sec:apt-games}). This view lets us convert the
qualitative ratings into numeric ranks, which are $L\mapsto 1, M\mapsto 2$
and $H\mapsto 3$, expressing that a ``low'' rating is based on the belief
that the attacker has penetrated only one access control so far (e.g.,
gained access to machine 0), while a ``high'' rating means that it is
already quite close to its goal (penetrated three access control systems
already up to only one exploit left towards full access on machine 2; cf.
Fig \ref{fig:apt-example-tva}).

\begin{table}[h!]
\caption{\textbf{Example Expert Assessments}}\label{tbl:example-assessments}
\begin{tabular}{|c|c|c|c|c|c|c|c|}
\hline
\multicolumn{2}{|c|}{\textbf{Scenario} $\downarrow$ / \textbf{Expert} $\to$} & 1 & 2 & 3 & 4 & 5 & 6\tabularnewline
\hline
\multirow{2}{*}{$a_{1}$} & $d_{1}$ & L & L & M & M & M & H\tabularnewline
\cline{2-8}
 & $d_{2}$ &  & H & H &  & H & M\tabularnewline
\hline
\multirow{2}{*}{$a_{2}$} & $d_{1}$ & H &  & H & M & L & H\tabularnewline
\cline{2-8}
 & $d_{2}$ & M & L & L & M & M & \tabularnewline
\hline
\end{tabular}
\end{table}

Note that Table \ref{tbl:example-assessments} shows empty cells, which
correspond to cases where an expert was silent (without explicit opinion)
on a specific scenario. Such missing information is only an organisational
inconvenience, yet causes no technical difficulties, as in that case, we
simply compile the loss distributions from the data that is available.

We define a \ac{KDE} using \eqref{eqn:kernel-approximation} for each
scenario, using the data from Table \ref{tbl:example-assessments}, such as
the loss distribution for the scenario $(a_1,d_2)$ would, for example, be
constructed from $N=4$ data points $(x_1,x_2,x_3,x_4) = (3,3,3,2)$.

We practically implemented this scheme in \texttt{R}
\cite{RDevelopmentCoreTeam2011}, using the available heuristics for
bandwidth selection that \texttt{R} provides (concretely Silverman's rule),
and obtained the distribution-valued matrix game shown in Fig
\ref{fig:example-game}, with the label ``\textsf{loss(i,j)}'' indicating
the scenario $(d_i, a_j)$ for $i,j\in\set{1,2}$, with meanings as told by
Table \ref{tbl:attack-selection}.

\begin{figure}
  \includegraphics[scale=0.6]{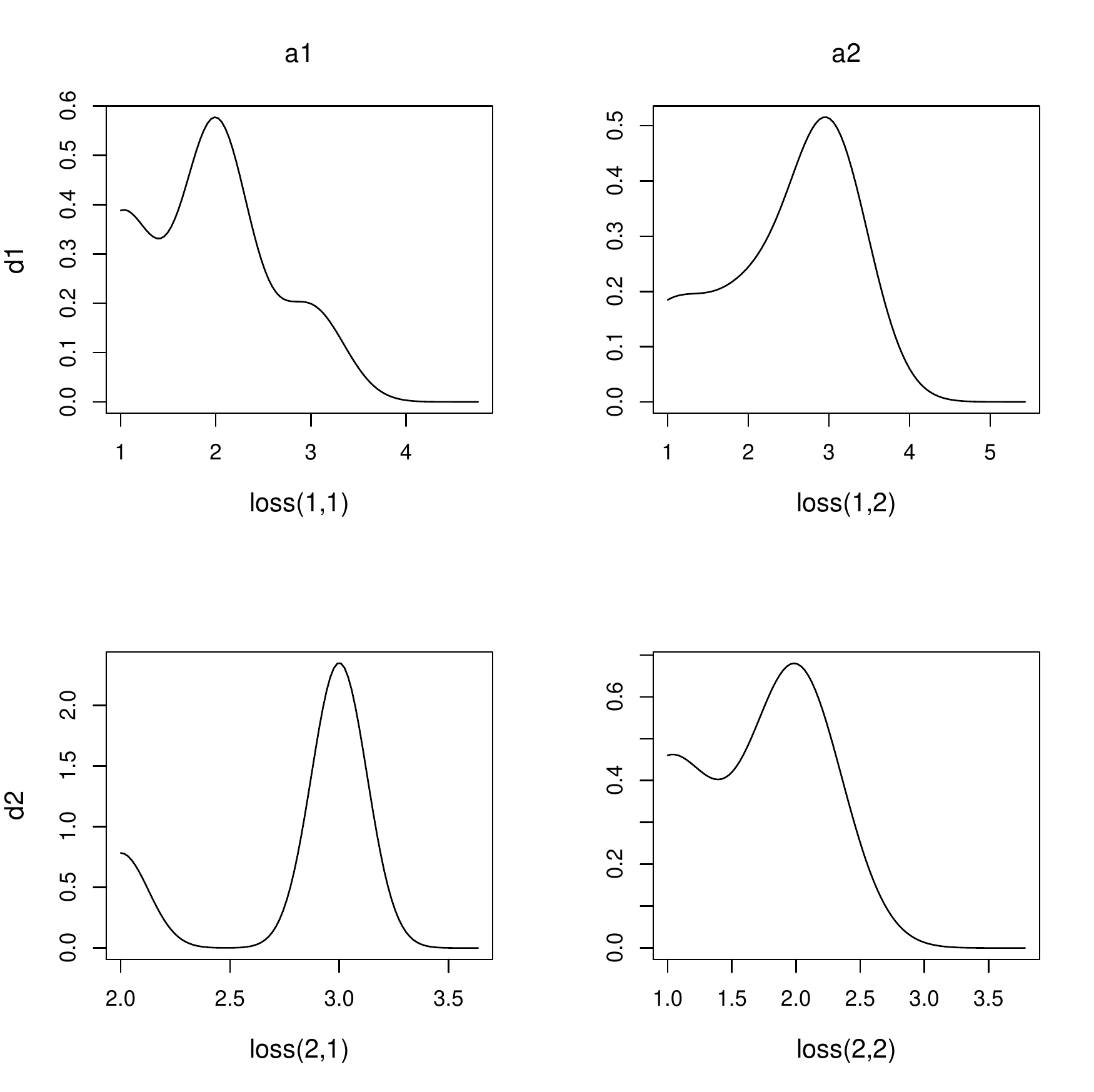}
  \caption{\textbf{\texttt{R}-plot of our example \ac{APT} matrix game}}\label{fig:example-game}
\end{figure}

To obtain a solution, we implemented a standard fictitious play algorithm
(see, e.g., \cite{Washburn2001}), with the only modification of minima and
maxima being selected w.r.t. the lexicographic ordering of the derivative
sequences \eqref{eqn:series-representation}. The derivatives themselves are
computed by evaluating \eqref{eqn:density-closed-form}. The used risk
acceptance threshold $a\geq 1$ in our implementation defaults to the
largest observation (data point) available. That is, we consider any event
with consequences above ``high damage'' as residual zero-day exploit risk,
which by construction of the $\preceq$-relation is minimized (cf. Theorem
\ref{thm:preference-meaning}).

Taking 1000 iterations of fictitious play and rounding the result to three
digits after the comma, we obtain the approximate lexicographic equilibrium $(\vec{\tilde
p}^*,\vec{\tilde q}^*)=((0.875, 0.125), (0.238, 0.762))$ along with the
resulting equilibrium loss distribution as shown in Fig
\ref{fig:equilibrium-loss}, and formally given as the derivative of the
distribution function $F(\vec {\tilde p}^*,\vec{\tilde q}^*)$ defined by
\eqref{eqn:utility-function}. Conceptually, this density is the same as the
(well known) saddle-point value of a regular game (it plays the same role
and the random loss corresponding to it satisfies the equilibrium condition
w.r.t. the $\preceq$-relation, more precisely, the lexicographic order on the probability masses distributed over the loss range).

Now, let us look at the practical meaning of the outcome of the
game-theoretic analysis:
\begin{itemize}
  \item The optimal way of mitigating the \ac{APT} as modeled by the game
      in Fig \ref{fig:example-game} is to do defense action $d_1$ with
      likelihood $0.875$ and defense action $d_2$ with likelihood
      $0.125$. That is, if actions are taken daily, then we would
      temporarily turn down a service (or enforce a disconnection) every
      8th day on average, while applying patches in the meantime,
      whenever they are available. Note that the randomness in the
      modeling \emph{accounts} for situations where a patch may not be
      available at all (in that case, the event of failure of the patch
      occurs, and high damage may be expected. This is, however, already
      captured by our modeling of a random outcome, which can have
      different effects, including a working and failing patch at
      different iterations of the gameplay).
  \item If the system administrator adheres to the (lexicographic) equilibrium behavior
      $\vec{\tilde p^*}=(0.875, 0.125)$ in choosing her/his actions, then
      Fig \ref{fig:equilibrium-loss} is by Lemma
      \ref{lem:zero-sum-optimality} a \emph{guarantee} concerning random
      damages, \emph{irrespectively} of how the attacker actually
      behaves. Indeed, a (non-unique) worst-case attack behavior is
      delivered as the second part of the equilibrium $\vec{\tilde
      q}^*=(0.238, 0.762)$, which tells that approximately every fourth
      attack ``should'' be a buffer overflow, while trying remote access
      in the remainder of the time. If our modeling was incorrect on
      assuming the attacker's behavior but correct on the possible
      actions (depending on how accurate the topological vulnerability
      scan was), then any behavior different from $\vec{\tilde q}^*$ will
      give us only less chances of high damage (as follows from the
      definition of an equilibrium; see \cite{Alpcan2010}).
  \item Conversely, the same also holds for the system administrator. The
      equilibrium condition tells that any attempt to do better by
      patching more often or deactivating services more or less
      frequently will render the loss bound $V=F(\vec{\tilde
      p}^*,\vec{\tilde q}^*)$ void (in the sense that losses are no
      longer optimally distributed), and may enable stronger worst-case
      attack scenarios (cf. Lemma \ref{lem:zero-sum-optimality}).
\end{itemize}

From the optimal distribution that is returned by the fictitious
play, we can easily compile other risk measures of interest like averages
(see \eqref{eqn:risk}) or similar. In particular, taking the expectation of
the distribution $F(\vec{\tilde p}^*,\vec{\tilde q}^*)$ directly returns a
quantitative risk estimate according to the common formula
\eqref{eqn:risk}. What is more, however, is our ability to obtain
additional information from the outcome, such as the variance as a measure
of ``stability'' of the risk estimate (quantified by the fluctuation around
the expected damage), or the danger of zero-day exploits, for which the
area within $[a,\infty)$ can be computed (shown in gray in Fig
\ref{fig:equilibrium-loss}) as an indication and decision support. The
considerable size of the gray area relative to the rest is due to the
expert's majority suspecting the risk to be high (cf. Table
\ref{tbl:example-assessments}). This puts a lot of mass on the right tail
of the constructed loss distributions, thus fattening the tails (induced by
the \ac{KDE} \eqref{eqn:kernel-approximation}) accordingly. A less
pessimistic expert assessment underlying the risk analysis would result in
a smaller zero-day likelihood, respectively.

These possibilities extend much beyond the usual capabilities of
(quantitative) risk management.

\section{Discussion}\label{sec:discussion}

\subsection{Actions in Continuous Time}\label{sec:continuous-time-actions}
Typically, cyber warefare and \acp{APT} are not games that take rounds, but
a process of actions in continuous time with no defined ``stages'' in the
game. More precisely, we have the situation of one player likely being
forced to take actions at discrete times, while the other player is free
act continuously over time. For example, the security officer may be unable
to become active at any (prescribed optimal) time of the day, since s/he
must adhere to organizational constraints of the daily business. On the
contrary, the adversary is only bound to the business organizational
matters as far as it concerns the mounting of an attack. In any case, the
attacker can act at any point in time (during night, during peak hours of
work, etc.), while the defender must use the proper time slots to become
active to minimize distortions in the actual enterprise business.

This is a remarkable qualitative difference to both, discrete and continuous
time game models, covering matrix games (discrete in time) and others (e.g.,
\cite{Dijk2013} as a continuous time model). Our specific \ac{APT} model is
discrete in time (as being a matrix game), but can account for randomness in
the outcome caused by a continuously acting opponent. This is a mere change
to the outcome (loss) distributions to cater for an action to be interrupted,
distorted or to cause more damage the longer it has happened in the past.

Theoretically, the embedding of the payoff distributions into $\HR$ equips
us with the full spectrum of mathematical tools as are known and applied to
construct continuous time games (this is a direction of future research
from this work). Practically, we believe the application of discrete time
games to more properly match the possibilities of a real-life security
officer, who may take actions only at particular times, i.e., outside
peak-hours of work, when devices are temporarily idle, or similar. Invoking
Lemma \ref{lem:zero-sum-optimality} here tells us that if the defense is
optimized for discrete time actions of the defender, a ``continuous time
attacker'' may nevertheless only deviate and as such cause less damage than
what the matrix game predicts.

\subsection{On Some Modeling Issues}\label{sec:faqs}
This section is similar to many well-known collections of ``frequently asked
questions'', and indeed can be taken as a guideline on how to overcome a set
of usual modeling obstacles when the model shall be applied practically:

\begin{quote}\textbf{How to deal with large attack graphs?}\end{quote}
As a matter of fact, attack graphs can get huge even for small
infrastructures. Enumerating all strategies by finding all paths can thus
become an infeasible task (as there is usually an exponential number of
them). Instead, we may either simulate attacks on the nodes directly, thus
defining the attack strategies not as paths in the attack graph but rather as
the set of possible exploits in our infrastructure, and ask for a qualitative
risk assessment based on the exploits only. Even if the infrastructure is
large, having the freedom to work with qualitative assessments in our
game-theoretic model eases matters of risk assessment essentially up to the
same complexity as a normal risk and vulnerability assessment would require.
That is, there is no conceptual obligation in our \ac{APT} game model
definition to work with paths in the attack graph (this is only \emph{one}
option among many), and the \ac{APT} game can be defined on aggregated parts
of an infrastructure, or using any other condensed or high-level view on the
system.

\begin{quote}\textbf{Where to get the losses/probabilities from?}\end{quote}
Probability -- in general -- is a notoriously opaque concept for many
practitioners, and specifying probability in our model, as for any
probabilistic model, is a crucial initial task. However, unlike many
other techniques, our requirements are different in an important way:
we do not ask an expert for a numeric estimate of a probability, but
instead it suffices to ask several experts for a qualitative rating of
likelihood concerning certain attacks. That is, the expert is not
challenged to tell a precise number to quantify how likely an attack
is, but can rather resort to saying that the success for an attack or
mitigation strategy may be ``low'', ``medium'' or ``high''. This is
even in alignment with recommendations of the German Federal Office for
Information Security (BSI), which explicitly warns about ``precise
probabilities'' which can misleadingly take estimates as being
objectively accurate. Asking several experts about their opinions and
re-normalizing the absolute frequencies into probabilities avoids the
common problems of calibrating probabilistic models and naturally
delivers the necessary loss distributions as we need them (see Fig
\ref{fig:workflow}).

Moreover, the criticism uttered against probabilities (the difficulty to
specify them and the illusion of accuracy created by them) applies much more
generally to many statistical models, but not as such in the model proposed
here.

An explicit alternative to surveys or mere expert opinions is the use
of \emph{simulation}. Models like \cite{Wellman2014} explicitly define
a continuous time simulation for a moving target defense that can be
adapted. Similar models from disease spreading analysis based on
percolation theory can also be used to probabilistically assess the
outcome of a malware infection (say via a bring-your-own-device
scenario) \cite{Koenig2016a} may also directly deliver the sought loss
distributions. The appeal of any such simulation is the automatism that
they provide to reduce the modeling labour.

\begin{quote} \textbf{What if a defense fails?}\end{quote}

Normal game theory assumes defenses to be effective in general, for
otherwise, an ineffective defense could be deleted from the game on
grounds of strategic dominance (cf. \cite{Gibbons1992}). In admitting
defense strategies to have random effects, the failure of a defense
is yet merely another form of failure of an action. Thus, upon proper
modeling of the chances for an action to fail, such events are
naturally covered by our random loss \acp{RV} admitting multiple
different outcomes, including a complete failure. For example, if
patches or updates are not available with known frequencies (say, if
the vendor has a ``patch day''), we can assign this likelihood as the
probability of high damage despite the action. Note, however, that
this is \emph{not} the freedom of choosing \emph{not} to do the
defense action at the prescribed time. Doing so would mean to deviate
from the equilibrium, which results in a worse protection. Adhering
to the equilibrium behavior strategy yet failing in the defense
action itself is, however, covered by the allowed likelihoods for an
action to fail at random.

\begin{quote}\textbf{How to handle one-shot situations?}\end{quote}

The existence of optimal decisions is often based on randomized
actions, which means that actions and the entire game can be
repeated. Practically, we may not be able to reset the
infrastructure to a defined initial condition after a damage
happened. That is, the game actually changes (at least temporarily)
upon past actions. Such situations are covered by the notion of
dynamic (stochastic) games, which allow future instances of the
game play to depend on past rounds. So, an action may be one-shot
and upon failure, may create a completely new situation. Assuming
that the entirety of possibilities admits a finite number of
game-theoretic descriptions, we technically have a stochastic game
in Shapley's sense \cite{Shapley1953a}. The usual notion of
equilibrium (optimal defense) in such competitions is, however,
more intricate and its existence is often tied to additional
assumptions or modifications to the game (e.g., by resorting to
dynamic games, or similar). The practical issue here is the
concrete choice of equilibrium outcomes (discounted, averaged,
etc.) to retain a practical meaning in the \ac{APT} context. We
avoid such difficulties here by allowing the outcome to be
different in each round and determined by past iterations of the
game, as long as the outcome remains \emph{identically distributed}
between rounds. If we think of the game structure itself following
a stochastic process, then the loss distributions constituting the
game structure may be taken as the stationary distribution of the
process, under any known condition of convergence (e.g., see
\cite{Levin2008}). We will leave this as a route for future work,
and close this discussion with the statement that one-shot
situations are conceptually equivalent to dealing with repeatable
situations, as we do not optimize the cumulative long-run average
(which would assume a repeated gameplay), but rather optimally
shape the distribution of the outcome for every repetition and thus
also for ``one-shots''.

\begin{quote}
\textbf{Is Knowledge about the Adversary's Incentives or Intentions
Required?}
\end{quote}

Adversary modeling is often perceived necessary or at least useful in
defending assets, especially in \ac{APT} scenarios. However, defenses
tailored to a specific guess about the adversary's intentions or incentives
may perform only suboptimal depending on the accuracy of the guess. Although
a game-theoretic model can be designed to take into account adversarial
payoffs if they are known, we do not actually require an accurate
understanding of the adversary's intentions or incentives.

It is important, however, to understand who the adversary is,
because this is what determines its action set $AS_2$. The more we
know about the attacker, the more accurate we can model its
actions, and thus reduce the possibilities for unexpected
incidents. Therefore, we must stress that Lemma
\ref{lem:zero-sum-optimality} only spares us to understand the
adversary's intentions, but \emph{we can in no case ignore} its
capabilities. Thus, we \emph{do require an adversary model} here,
yet being accurate only on the adversaries \emph{possible actions}
and on our own loss upon these.

\begin{quote}\textbf{How to set the risk acceptance threshold
$a$?}\end{quote} The risk acceptance threshold $a>1$ is here
required primarily for technical reasons, i.e., to assure the
boundedness of supports to ease deciding preferences among
actions. Thus, as long as any such value is being defined, the
theory and results remain intact. Physically, this parameter
corresponds to the maximal damage that we expect to occur ever,
or otherwise said, cases of damage that are covered by insurances
or considered so unlikely that the risk is simply taken.

Quite obviously, the concrete choice of the threshold $a$ has a
substantial influence on the decisions and computed equilibria.
Indeed, cutting off tails at different locations can even reverse
the preference under $\preceq$. Therefore, this value should be
chosen based on trial comparisons between actions to look for a
paradoxical/counter-intuitive outcome of $\preceq$ (see
\cite{Rass2015d} for an illustration), so that the value $a$ can
be set accordingly. Technically, it determines the region of loss
that we would relate to zero-day exploits (as was discussed in
the context of Fig \ref{fig:equilibrium-loss}). In any case, the
particular choice of $a$ is up to expertise, experience, and is
seemingly out of the scope of any default procedure to choose it.

Several rule of thumbs may be defined, such as choosing $a$ as a
maximum quantile over all loss distributions (similar to a
\ac{VaR} approach outlined in \cite{MacKenzie2014}), or directly
taking $a$ as the worst risk assessment made or possible. Our
implementation in \texttt{R} takes the maximum observation or
most pessimistic loss assessment as the cutoff point, assuming
that no more than the worst expected outcome may be expected.
However, if losses are quantified in monetary terms or general
business value, the lot of insurance coverage may determine the
acceptable risks that can be taken.

\section{Conclusion}\label{sec:conclusions}
Mitigating \acp{APT} on game theoretic grounds appears as a quite
natural model of the competition between a defender and an
attacker. The stealthiness of \ac{APT} adds an element of
uncertainty that original game theory covers with extended
notions like stochastic games or games with incomplete
information. Since these are conceptually more involved to
define, we propose staying with a simpler and easier to set up
model of matrix games. Deviating from classical game theory at
this point, we defined a concept that allows for ``direct use''
of vague and uncertain information that risk management normally
has to deal with. Specifically, lifting game theory from
real-valued payoffs to games whose outcomes are described by
entire probability distributions creates aspects of twofold
interest: practically, this model equips us with the full armoury
of game and decision theory to do risk management based on
uncertain and even qualitative information. Theoretically, the
so-generalized games come with substantially different properties
than their classical counterparts, such as the non-convergence of
fictitious play for a certain class of zero-sum games. The way in
which these issues are tackled is not tied to applications of
security and may thus be of independent interest in game theory.
Finally, in using matrix games with distributions as payoffs, we
tackle another aspect of \acp{APT}, which is the game being
discrete time for one player but continous time for the other
player. This aspect was hardly discussed in precursor work, and

For security, the game theoretic perspective lets us not only compute optimal
risk mitigation strategies almost directly starting from the available
information, but also elegantly saves us from some matters of adversary
modelling. Especially, we only need to know the attacker's possible actions,
but can work out a (multi-criteria) optimal defence in terms of our own risk
scale. This is particularly useful in the context of guarding against
\acp{APT}, since uncertainty is ``ubiquitious'' in the attacker's
capabilities, incentives, induced damages, etc. Having models that spare us
the need to model all these aspects, while dealing with uncertainty in the
way it comes (such as expert opinions on risk or expectations on zero-day
exploits) appears as a demanding issue. Our work is intended as a first step
into this direction.


%

\section*{Acknowledgment}
We thank the anonymous reviewers and the editorial board for
valuable suggestions that helped to improve the clarity and
content of this article, and also for drawing our attention to
interesting aspects to consider in future. This work was
supported by the European Commission's Project No. 608090, HyRiM
(Hybrid Risk Management for Utility Networks) under the 7th
Framework Programme (FP7-SEC-2013-1).


\section*{Appendix: Proof of the Approximation Theorem
\ref{thm:main}}

The following arguments are partly based on \cite{Rass2015g} and
close some open gaps in this preliminary draft presentation of
results. The proof of Theorem \ref{thm:main} rests on the
following lemma:
\begin{lem}\label{lem:derivative-preferences}
Let $f,g\in C^\infty([1,a])$ for a real value $a>1$ be probability density
functions. If
\[
    ((-1)^k\cdot f^{(k)}(a))_{k\in\N}<_{lex} ((-1)^k\cdot g^{(k)}(a))_{k\in\N},
\]
then $f\preceq g$.
\end{lem}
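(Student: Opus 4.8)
The plan is to translate the lexicographic hypothesis on the alternating derivative sequences into an ordinary pointwise comparison of the two densities on a left half-neighbourhood of $a$, and then to invoke Lemma~\ref{lem:preference-decisions}; I would also sketch a self-contained alternative via the moment asymptotics, which moreover closes a minor endpoint gap.

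First I would note that, since $((-1)^k f^{(k)}(a))_{k\in\N}<_{lex}((-1)^k g^{(k)}(a))_{k\in\N}$ is a \emph{strict} lexicographic inequality, there is a smallest index $j_0\in\N$ at which the two sequences differ; hence $f^{(j)}(a)=g^{(j)}(a)$ for all $j<j_0$, while $(-1)^{j_0}\bigl(f^{(j_0)}(a)-g^{(j_0)}(a)\bigr)<0$. Writing $h:=f-g\in C^\infty([1,a])$ we have $h^{(j)}(a)=0$ for $j<j_0$ and $c:=h^{(j_0)}(a)$ with $(-1)^{j_0}c<0$. Taylor's theorem at $a$ then gives, for $x$ in a small left neighbourhood of $a$,
\[
 h(x)=\frac{(x-a)^{j_0}}{j_0!}\bigl(c+O(x-a)\bigr),
\]
and because $(x-a)^{j_0}=(-1)^{j_0}(a-x)^{j_0}$ with $(a-x)^{j_0}>0$ for $x<a$, the sign of $h(x)$ is that of $(-1)^{j_0}c$, i.e.\ negative. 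So there is an $\eps>0$ with $f(x)<g(x)$ for all $x\in(a-\eps,a)$, and applying Lemma~\ref{lem:preference-decisions} with $(f_1,f_2)=(f,g)$ yields $f\preceq g$.

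One caveat is that Lemma~\ref{lem:preference-decisions} as stated asks for the strict inequality on the interval $(a-\eps,a]$ closed at $a$, whereas for $j_0\ge 1$ we only get $f(a)=g(a)$. I would close this (and make the whole argument self-contained) by the elementary moment estimate behind that lemma: repeated integration by parts gives, for every $N$,
\[
 \int_1^a x^k f(x)\,dx=\sum_{j=0}^{N-1}(-1)^j\frac{a^{k+1+j}}{(k+1)(k+2)\cdots(k+1+j)}\,f^{(j)}(a)+R_1+R_{N,k},
\]
where $R_1=O(k^{-1})$ collects the $x=1$ boundary terms and $R_{N,k}=O\!\bigl(a^{k+N+1}k^{-(N+1)}\bigr)$ since $f^{(N)}$ is bounded on the compact $[1,a]$. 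Subtracting the analogous expansion for $g$ and taking $N=j_0+1$, all $j<j_0$ terms cancel, the $j_0$-th term equals $(-1)^{j_0}\frac{a^{k+1+j_0}}{(k+1)\cdots(k+1+j_0)}\bigl(f^{(j_0)}(a)-g^{(j_0)}(a)\bigr)<0$, and every remaining contribution is strictly lower order (the factor $a^k$ with $a>1$ dominates the $x=1$ boundary terms, and the extra $1/k$ dominates $R_{N,k}$). Hence $E(L_1^k)<E(L_2^k)$ for all sufficiently large $k$, which is exactly $f\preceq g$ by Definition~\ref{def:preference}.

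The step requiring the most care is precisely this domination argument: one must verify that the first index $j_0$ at which the derivative sequences differ genuinely controls the sign of $E(L_1^k)-E(L_2^k)$ as $k\to\infty$ — that the equalities $f^{(j)}(a)=g^{(j)}(a)$ for $j<j_0$ produce an exact cancellation of the leading $j_0$ terms of the moment expansion, while the $(j_0{+}1)$-st remainder and the contributions from the lower endpoint $x=1$ are \emph{strictly} subdominant. Choosing the integration-by-parts depth exactly one beyond $j_0$, and using $a>1$ to separate the two endpoints' contributions, is what makes this precise; the remaining ingredients (Taylor's theorem, boundedness of $f^{(N)}$ on a compact interval) are routine.
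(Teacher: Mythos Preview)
Your proposal is correct. The first part---Taylor expansion at $a$ to obtain $f<g$ on a left neighbourhood of $a$, then Lemma~\ref{lem:preference-decisions}---follows the same strategy as the paper, which instead reaches that local comparison by an induction that repeatedly integrates the derivative inequality $f^{(k)}<g^{(k)}$ downward one order at a time (after a reflection about $x=a$); your use of Taylor's formula is simply the compact packaging of that same repeated integration. Your second argument, the integration-by-parts asymptotics of the moments, is not in the paper and is a genuinely different, self-contained route: it bypasses Lemma~\ref{lem:preference-decisions} and works directly from Definition~\ref{def:preference}, and in doing so it also closes the endpoint technicality you correctly flag---the paper's own invocation of Lemma~\ref{lem:preference-decisions} carries the same open-at-$a$ gap when $j_0\geq 1$. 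The price of the second approach is the order bookkeeping; the gain is rigour at the boundary and independence from the auxiliary lemma.
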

\begin{proof}
The proof repeats the arguments in \cite{Rass2015g}, and is
essentially based on Lemma \ref{lem:preference-decisions}. That
is, it suffices to determine which density is taking lower values
in a right neighborhood of the cutoff point $a$. To this end, let
us ``mirror'' the functions around the vertical line at $x=a$ and
look for which of $f(x), g(x)$ grows faster when $x$ becomes
larger than $a$, using an induction argument on the derivative
order $k$. Clearly, whichever function grows slower for $x\geq a$
in the mirrored view is the $\preceq$-preferable one by Lemma
\ref{lem:derivative-preferences}. Furthermore, we may assume
$a=0$ without loss of generality (as this is only a shift along
the horizontal line). For $k=0$, we have $f(0)<g(0)$ clearly
implying that $f\preceq g$, since the continuity implies that the
relation holds in an entire neighborhood $[0,\eps)$ for some
$\eps>0$. Thus, the induction start is accomplished.

For the induction step, assume that $f^{(i)}(0)=g^{(i)}(0)$ for all $i<k$,
$f^{(k)}(0)<g^{(k)}(0)$, and that there is some $\eps>0$ so that
$f^{(k)}(x)<g^{(k)}(x)$ is satisfied for all $0< x<\eps$. Take any such
$x$ and observe that
\begin{align*}
    0 &> \int_0^x\left(f^{(k)}(t)-g^{(k)}(t)\right)dt\\
    &=f^{(k-1)}(x)-f^{(k-1)}(0)-\left[g^{(k-1)}(x)-g^{(k-1)}(0)\right]\\
    &=f^{(k-1)}(x)-g^{(k-1)}(x),
\end{align*}
since $f^{(k-1)}(0)=g^{(k-1)}(0)$ by the induction hypothesis. Thus,
$f^{(k-1)}(x)<g^{(k-1)}(x)$, and we can repeat the argument until $k=0$ to
conclude that $f(x)<g(x)$ for all $x\in(0,\eps)$.

For returning to the original problem, we must only revert our so-far
mirrored view by considering $f(-x), g(-x)$ in the above argument. The
derivatives accordingly change into $\frac {d^k}{dx^k}f(-x)=(-1)^k
f^{(k)}(x)$, and the proof is complete.
\end{proof}

\begin{proof}[Proof of Theorem \ref{thm:main}]
This is an actually easy matter of collecting what we have
obtained in Section \ref{sec:derivative-decisions}. First, note
that w.l.o.g., we can represent the losses in $\vec A$ by
distribution functions $F_{ij}$ for $i=1,2,\ldots,n$ and
$j=1,2,\ldots m$. To ease notation in the following, let $i,j$ be
arbitrary, and abbreviate $F_{ij}$ as $F$. We will go through a
sequence of approximations of $F$, denoted as $\tilde{F}_1,
\tilde{F}_2, \tilde{F}_3$, and $\tilde{F}_4$, respectively, and
prove that the $L^1$-approximation error of the final
approximation $\tilde{F}_4$ can be made bounded by $\delta$ upon
proper constructions of the intermediate approximations.

To get started, let us get back to the mollifier approach
outlined in remark \ref{rem:continuous-losses}: we choose some
$h>0$ and define first approximation $\tilde{F}_1(h):=F\ast
K_h\in C^\infty$. Note that this already makes the sequence
representation \eqref{eqn:series-representation} well-defined.
Moreover, it is known that letting $h\to 0$, the sequence
$\tilde{F}_1(h)$ is $L^1$-convergent to $F$ by known
approximation theorems (e.g., on page 321 in
\cite{Koenigsberger2004}). That is, we can choose a sufficiently
small $h^*>0$ to have $\tilde{F}_1:=F\ast K_{h^*}$ satisfy
$\norm{\tilde{F}_1-F}_{L^1}<\delta/4$.

Note that since $K_{h^*}$ is supported on the entire real line, so is
$\tilde{F}_1$. To recover the required bounded support, we choose some value
$a>1$ and truncate the distribution $\tilde{F}_1$ outside the interval
$[1,a]$. Call the result $\tilde{F}_2$. Since $\tilde{F}_1$ is a probability
distribution, it satisfies $\lim_{x\to\infty}\tilde{F}_1(x)=1$, so that we
can choose $a$ sufficiently large to make the truncated distribution
$\tilde{F}_2$ satisfy $\norm{\tilde{F}_1-\tilde{F}_2}_{L^1}<\delta/4$ again.

Since $\tilde{F}_2\in C^\infty([1,a])$ by construction and the derivatives
are all continuous (and as such bounded on the compact interval $[1,a]$), we
can approximate $\tilde{F}_2$ at the point $x=a$ by a Taylor-polynomial
$\tilde{F}_3=\sum_{i=0}^K \frac{\tilde{F}_2^{(i)}(a)}{i!}\cdot (x-a)^i$. The
$i$-th derivative is analytically given by $\tilde{F}_2^{(i)}=F\ast
K_h^{(i)}$ and computed numerically by virtue of
\eqref{eqn:density-closed-form}. The accuracy of the Taylor polynomial
approximation is governed by choosing the order $K$ of the polynomial
sufficiently large. In our case, we take $K$ large enough to make the
approximation satisfy $\norm{\tilde{F}_2-\tilde{F}_3}_{L^1}<\delta/4$.

Finally, observe that the Taylor polynomial $\tilde{F}_3$ can be
represented by a finite sequence of its coefficients
$(a_0,a_1,\ldots,a_K)\in\R^K$ with $a_i := \tilde{F}_2^{(i)}(a)$.
To the end of recovering a finitely truncated representation as
in \eqref{eqn:series-representation}, define the coefficients
with alternating signs $b_i := (-1)^i a_i$ for $i=0,\ldots,K$.
Let us choose fixed integers $m,n$ and round all $b_i$ to $m$
places before and $n$ binary digits after the comma, padding with
leading and trailing zeroes. Call the resulting approximate
coefficients $\tilde b_i$, and define the respective binary
number $c := 0.\tilde b_1\|\tilde b_2\|\ldots\|\tilde b_K$ by
simply concatenating the bitstrings representing all $b_i$ in
ascending order of indices and omitting the decimal points.

Observe that the number $c$ is now a real value that encodes a
Taylor-polynomial $\tilde{F}_4$ with approximate coefficients $\tilde{a}_i$,
which differ from the coefficients $a_i$ in $\tilde{F}_3$ only by a rounding
error. Consequently, the maximal difference between $\tilde{F}_4$ and $\tilde
F_3$ comes to
\begin{align*}
 \max_{x\in[1,a]}\abs{\tilde F_3(x)-\tilde
F_4(x)} &\leq \max_{x\in[1,a]}\abs{\sum_{i=0}^K \frac{\abs{a_i - \tilde a_i}}{i!}\cdot (x-a)^i} \\
&\leq \max_{x\in[1,a]}\abs{\sum_{i=0}^\infty \frac{\eps_n}{i!}\cdot (x-a)^i} \leq \eps_n\cdot e^a,
\end{align*}
where $\eps_n$ is the maximal numeric roundoff error depending on
the number $n$ of digits after the comma. Since $a$ is a
constant, we can choose $n$ sufficiently large to make
$\norm{\tilde F_4-\tilde F_3}_{L^\infty([1,a])}$ sufficiently
small, and hence also cause $\norm{\tilde F_4-\tilde
F_3}_{L^1([1,a])}<\delta/4$ ultimately (this is a consequence of
using H\"older's inequality to show that convergence in the
function space $L^p$ implies convergence in $L^q$ for $q<p$ if
the underlying support is compact; see page 233 in
\cite{Elstrodt2002}).

Collecting the approximations obtained along, we end up finding that
$\norm{F-\tilde F_4}_{L^1}\leq\norm{F-\tilde F_1}_{L^1} + \norm{\tilde
F_1-\tilde F_2}_{L^1}+\norm{\tilde F_2-\tilde F_3}_{L^1}+\norm{\tilde
F_3-\tilde F_4}_{L^1} \leq 4\cdot\frac \delta 4=\delta$, as required.

Indeed, repeating these steps for every entry in the matrix $\vec
A=(F_{ij})\in\F^{n\times m}$, we end up with a matrix of
respective values $\vec B=(c_{ij})\in\R^{ij}$ representing finite
sequence approximations of $F_{ij}$. Moreover, observe that the
construction of the matrix $\vec B$ is such that the numeric
order between two entries is exactly the lexicographic order on
the sequence of rounded coefficients (this is due to the
concatenation and the fact that all numbers $b_i$ are represented
use the same number of digits before and after the comma). Hence,
Lemma \ref{lem:derivative-preferences} tells that the order of
choices made in the game $\vec B$ equals the $\preceq$-order of
choices that would be made in the game $\vec A$. Consequently, an
equilibrium in $\vec B$ is a lexicographic equilibrium in $\vec A$ since
$\leq$ on the so-obtained $c$-values in $\vec B$ equals $\preceq$
on the original loss distributions in $\vec A$.

Since $\vec B$ is a regular matrix game, we can invoke fictitious play (or
linear optimization) to compute an approximate (or even accurate) standard Nash equilibrium
$(\vec p^*,\vec q^*)$ in $\vec B$ at any desired precision $\eps$. By 
construction of $\vec B$ and \eqref{eqn:utility-function}, 
$(\vec p^*,\vec q^*)$ will approximate a lexicographic equilibrium payoff in $\vec A$.
This completes the
proof.
\end{proof}

As a final remark, note that the representation of the
Taylor-polynomial within the real number $c$ is compatible with
the multi-criteria optimization as outlined in Section
\ref{sec:tradeoffs}. To see this, observe that the convex
combination \eqref{eqn:convex-combination} is a linear operation
whose result remains within the same bound as the inputs. Thus,
there will be no ``overflow carry'' from one coefficient to the
next in the representation $c$ of the Taylor-polynomial.

\begin{acronym}
\acro{FTP}{file transfer protocol}%
\acro{RSH}{remote shell}%
\acro{SSH}{secure shell}%
\acro{KDE}{kernel density estimator}%
\acro{APT}{advanced persistent threat}%
\acro{CVSS}{common vulnerability scoring system}%
\acro{EFG}{extensive form game}%
\acro{RV}{random variable}%
\acro{TVA}{topological vulnerability analysis}%
\acro{VaR}{value at risk}%
\acro{IPR}{intellectual property rights}%
\end{acronym}

\end{document}